\definecolor{blue(ryb)}{rgb}{0.01, 0.28, 1.0}
\definecolor{brandeisblue}{rgb}{0.0, 0.44, 1.0}
\definecolor{ceruleanblue}{rgb}{0.16, 0.32, 0.75}
\definecolor{cobalt}{rgb}{0.0, 0.28, 0.67}
\definecolor{coolblack}{rgb}{0.0, 0.18, 0.39}
\definecolor{darkblue}{rgb}{0.0, 0.0, 0.55}
\def\nz{\ifmmode {I\hskip -3pt N} \else {\hbox {$I\hskip -3pt N$}}\fi}
\def\zz{\ifmmode {Z\hskip -4.8pt Z} \else

       {\hbox {$Z\hskip -4.8pt Z$}}\fi}
\def\qz{\ifmmode {Q\hskip -5.0pt\vrule height6.0pt depth 0pt
       \hskip 6pt} \else {\hbox
       {$Q\hskip -5.0pt\vrule height6.0pt depth 0pt\hskip 6pt$}}\fi}
\def\rz{\ifmmode {I\hskip -3pt R} \else {\hbox {$I\hskip -3pt R$}}\fi } 
\def\cz{\ifmmode {C\hskip -4.8pt\vrule height5.8pt \hskip 6.3pt} \else 
{\hbox {$C\hskip -4.8pt\vrule height5.8pt \hskip 6.3pt$}}\fi} 
\def\curl{{\rm curl}\,}
\def \Ab{{\bf A}}
\def\and {{\rm \; and \;}}
\def\R{\mathbb R}
\def\N{\mathbb N}
\def\Z{\mathbb Z}
\definecolor{ao(english)}{rgb}{0.0, 0.5, 0.0}
\newcommand{\beq}{\begin{equation}}
\newcommand{\eeq}{\end{equation}}
\newtheorem{theorem}{Theorem}[section]
\newtheorem{lemma}[theorem]{Lemma}
\newtheorem{proposition}[theorem]{Proposition}
\newtheorem{remark}[theorem]{Remark}
\newtheorem{corollary}[theorem]{Corollary}
\title{Quantum tunneling in  deep potential wells and  strong magnetic field revisited}
\author{Bernard Helffer$^1$ and  Ayman  Kachmar$^{2,3}$}
\begin{document}
\bibliographystyle{plain}
\maketitle 
$^1$ Laboratoire de Math\'ematiques Jean Leray, CNRS,  Nantes Universit\'e, 44000 Nantes, France.  Email: bernard.helffer@univ-nantes.fr\\

$^2$ Lebanese University,  Faculty of Sciences, 1700  Nabatieh, Lebanon.

$^3$ Current address: 

The Chinese University of Hong Kong (Shenzhen), School of Science and Engineering, 2001 Longxiang Blvd., Longgang District, Shenzhen, China.  

Email: akachmar@cuhk.edu.cn

\begin{abstract}
Inspired by a recent paper$^*$ by C. Fefferman, J. Shapiro and M. Weinstein, we  investigate quantum tunneling for a Hamiltonian with a symmetric  double well and a uniform  magnetic  field.  In the simultaneous limit of  strong magnetic field  and   deep  potential wells  with disjoint supports,   tunneling occurs and we derive   accurate estimates of its magnitude.  \medskip

$^*$\,[Lower bound on quantum tunneling for strong magnetic fields. {\it 
SIAM J. Math. Anal.} 54(1), 1105-1130 (2022).]
\end{abstract}

\section{Introduction}

\subsection{The Hamiltonian}

\subsubsection{The double well potential}

Consider $\mathfrak v_0\in C_c^\infty(\R^2)$   such that
\begin{equation}\label{eq:v0}
\begin{cases}
\mathfrak v_0(x)=v_0(|x|)~{\rm is ~radial}~\&~v_0^{\min}:=\min\limits_{r\geq 0}v_0(r)<0\,,\\
{\rm supp}\,\mathfrak v_0\subset  \overline{D(0,a)}:=\{x\in\R^2\,:\,|x|\leq a\}\,,\\
U_0:=\{\mathfrak  v_0(x)=v_0^{\min}\}=\{0\}\quad\&\quad   v_0''(0)>0\,.
\end{cases}
\end{equation}
 We suppose that $\overline{D(0,a)}$ is the smallest  disc containing ${\rm supp}\,\mathfrak v_0$, i.e.
\begin{equation}\label{eq:def-a}
a=  a(\mathfrak v_0):=\inf\{r>0~:~{\rm  supp}\,\mathfrak v_0\subset D(0,r)\}\,.
\end{equation}
 We introduce the \emph{double well}  potential ($\ell$ refers to `left' and $r$ to `right'),
\begin{equation}\label{eq:V}
V(x)=\mathfrak v_0(x-z^{\ell})+\mathfrak v_0(x-z^r)\quad{\rm where~}|z^\ell-z^r|=:L>2a(\mathfrak v_0)\,.
\end{equation}
The potential wells of $V$ associated with the energy $v_0^{\min}$ are the connected components of $\{V(x)=\min V\}$, i.e.
\begin{equation}\label{eq:V-wells}
U_\ell=U_0+z^\ell=\{z^\ell\}\,,\quad U_r=U_0+z^r=\{z^r\}\,.
\end{equation}
Without loss of generality, we  choose  $z^\ell$  and  $z^r$  in  the following manner
\begin{equation}\label{eq:z-lr}
z^\ell=\Big(-\frac{L}2,0\Big),\quad z_r=\Big(\frac{L}2,0\Big)\,.
\end{equation}

\subsubsection{The magnetic field}

Consider a  vertical magnetic field $b\vec{z}$ where
\begin{equation}\label{eq:B}
b>0 {\rm ~is~a~constant.}
\end{equation}
Notice that
\[b=\curl(b\Ab)\]
where $\Ab$ is defined in  polar coordinates $(r,\theta)$  as  follows,
\begin{equation}\label{eq:A}
\Ab(r,\theta)=\frac{r}2\left[ \begin{array}{r}-\sin\theta\\ \cos\theta\end{array} \right]\,.
\end{equation}
\subsection{Deep symmetric wells in a strong magnetic field}
We consider the Hamiltonian 
\begin{equation}\label{eq:H-lambda}
\mathcal H_{b,\lambda}:=(D-b\Ab)^2+  \lambda^2 V\,,\quad D:=\frac{1}{i}\nabla \,,
\end{equation}
with a double well electric potential $\lambda^2V$ and a magnetic potential $b \Ab$, where  $\lambda$ and  $b$  are the  coupling parameter and  the intensity of the magnetic field, respectively. In this  paper, we suppose that  $b=\lambda$ and $\lambda\gg 1$ is large\footnote{Writing $\lambda\gg 1$ means that we consider the regime where $\lambda\to+\infty$.  In the same vein, writing $\alpha\ll\lambda$ (resp. $\alpha\gg \lambda$), we mean that $\alpha/\lambda\to0$ (resp.   $\alpha/\lambda\to+\infty$).}.

The regime where $b$ does not scale like the coupling parameter  $\lambda$ has been inspected a long time ago. For instance, when   $b\ll\lambda$, accurate estimates of the tunnel effect where obtained in \cite{HeSjPise}, while when $b\gg\lambda$, the effect of the potential well becomes weak  and the magnetic effect is dominant (see \cite{Be} and \cite{HeSjSond}).

The potential function considered  in \eqref{eq:H-lambda} is not analytic, thereby making our setting   significantly different from the one of \cite{HeSjPise}, where the magnetic field scales like the coupling parameter, $V$ is analytic  and some condition on the intensity of the magnetic field appears. As we shall see, this will induce serious difficulties in deriving accurate bounds on the magnitude of the  tunnel effect and highlights another interesting new phenomenon related to \emph{tunneling} under a magnetic field compared to  recent results in \cite{BHR, FHK}.

In order to exploit the connection with the rich literature on the tunnel effect in  multiple wells (see \cite{DS, He88, HeSj1, HeSjPise}),  it will be convenient to divide by $\lambda$ and mainly consider the corresponding equivalent semi-classical problem 
\begin{equation}\label{eq:H-h}
\mathcal L_h:=(hD-\Ab)^2+V\,,
\end{equation}
where $ h=\lambda^{-1}\ll 1$.\\
Hence we have
\[
\mathcal H_{b,\lambda}=h^{-2}\mathcal L_h\,.
\]
Our result will depend on the size of the support of  the potential  function  (in particular through $a(\mathfrak v_0)$ in \eqref{eq:def-a}). Let us denote by $(e_j^{\mathfrak v_0}(h))_{j\geq 1}$  the sequence of min-max eigenvalues of  $\mathcal L_h$. We will investigate the semi-classical asymptotics of 
\begin{equation}\label{eq:gap}
e_2^{\mathfrak v_0}(h)-e_1^{\mathfrak v_0}(h)\,,
\end{equation} 
and prove roughly speaking (see Corollary~\ref{corol:maina} for a precise statement)
\begin{equation}\label{eq:weak-asy}
e_2^{\mathfrak v_0}(h)-e_1^{\mathfrak v_0}(h) \underset{\substack{h\to0\\a(\mathfrak v_0)\to0}}{\sim} \exp\left(-\frac1h\int_0^{L}\sqrt{\frac{\rho^2}{4}-v_0^{\min}}\,d\rho\right)\,.
\end{equation} 
 Under the additional assumption  that $\mathfrak v_0$ does not vanish in  the open disk  $D\big(0,a(\mathfrak v_0)\big)$, we prove 
an accurate asymptotics of the form (see Theorem~\ref{thm:main*})
\[e_2^{\mathfrak v_0}(h)-e_1^{\mathfrak v_0}(h) \underset{h\to0}{=}\exp\left(-\frac{S(\mathfrak v_0)+o(1)}{h}\right)\]
without the hypothesis that $a(\mathfrak v_0)\ll 1$. 

Our  investigation relies on expanding the ground state $e^{\rm sw}(h)$ of the single well Hamiltonian
\begin{equation}\label{eq:1well}
\mathcal L_h^{\rm sw}:=(hD-\Ab)^2+\mathfrak v_0\,.
\end{equation}
Under the assumptions in \eqref{eq:v0}, we show that:
\begin{theorem}[ Existence of radial ground states and precise expansions]\label{thm:main0}~

Assume that $\mathfrak v_0$ satisfies the conditions in \eqref{eq:v0}. Then, there exists $h_0>0$ such that, for all $h\in(0,h_0]$, the following holds:
\begin{enumerate}
\item The ground  state energy, $e^{\rm sw}(h)$, of $\mathcal L_h^{\rm sw}$, is a simple eigenvalue and
\begin{equation}\label{eq:gs}
e^{\rm sw}(h)= v_0^{\min}+h\sqrt{1+2v_0''(0)}+\mathcal O (h^{3/2}) \,.
\end{equation}
\item 
 There exists a unique positive ground state, $\mathfrak u_h$, of $\mathcal L_h^{\rm sw}$, with the properties
 \begin{itemize}
 \item $\mathfrak u_h(x)=u_h(|x|)$ is a radial function\,;
 \item $\mathfrak u_h$ is normalized, i.e. $\int_{\R^2}|\mathfrak u_h|^2dx=1$\,.
 \end{itemize} 
 \item  There exists  a positive radial function $\mathfrak a_0$ on $\mathbb R^2$  satisfying 
 \begin{equation}\label{eq:defa0}
 \mathfrak a_0(0)=\frac1{2}\frac{\sqrt{1+2v_0''(0)}}{\pi}\,,
 \end{equation}
  such that, for any $R>0$,  the ground state $\mathfrak u_h$ satisfies, uniformly in the disc $D(0,R)\subset\R^2$,
 \begin{equation}\label{eq:uh-main0}
  \left|e^{\mathfrak d(x)/h}\mathfrak  u_h(x)-h^{-1/2}\mathfrak a_0(x) \right|=\mathcal O(h^{1/2})\,,
  \end{equation}
 where 
 \begin{equation}\label{eq:d-main0}
 \mathfrak d(x)=d(|x|)=\int_0^{|x|}\sqrt{\frac{\rho^2}{4}+v_0(\rho)-v_0^{\min}}\,d\rho\,.
 \end{equation}
\end{enumerate}
\end{theorem}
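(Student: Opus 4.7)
My plan exploits the rotational invariance of $\mathcal{L}_h^{\rm sw}$. In symmetric gauge, $\mathcal{L}_h^{\rm sw}$ commutes with the generator of rotations, so it decomposes as $\bigoplus_{m\in\Z} L_h^m$, where the reduced radial operator on functions of the form $f(r)e^{im\theta}$ is
\[
L_h^m = -h^2\partial_r^2 - \tfrac{h^2}{r}\partial_r + \left(\tfrac{hm}{r}-\tfrac{r}{2}\right)^2 + v_0(r)
\]
on $L^2((0,\infty),\, r\,dr)$. For assertion (1), I would run the standard harmonic approximation. Replacing $v_0$ by its second-order Taylor expansion at the origin and rescaling $x=\sqrt{h}\,y$ converts the model operator into $v_0^{\min}+h\bigl[-\Delta_y+\alpha^2|y|^2-L_z\bigr]$ with $\alpha:=\tfrac12\sqrt{1+2v_0''(0)}$, whose spectrum is the explicit set $v_0^{\min}+h\bigl[2\alpha(2k+|m|+1)-m\bigr]$, $k\ge 0$, $m\in\Z$. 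Since $\alpha>\tfrac12$, the minimum is attained uniquely at $(k,m)=(0,0)$ with value $v_0^{\min}+h\sqrt{1+2v_0''(0)}$, and all other eigenvalues are separated from it by $\gtrsim h$. IMS localization combined with a quasimode built from the corresponding two-dimensional Gaussian ground state then yields the two-sided expansion \eqref{eq:gs} with remainder $\mathcal{O}(h^{3/2})$, and gives the simplicity of $e^{\rm sw}(h)$.

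Part (2) is then a quick consequence. The one-dimensional ground eigenspace must lie in a single angular-momentum sector of $\mathcal{L}_h^{\rm sw}$, which is $m=0$ by the harmonic approximation; hence the ground state is radial, $\mathfrak{u}_h(x)=u_h(|x|)$. The operator $L_h^0$ has real coefficients, so $u_h$ may be taken real, and a Perron-type argument (positivity improving of $e^{-tL_h^0}$, or Sturm oscillation for the radial ODE) shows it has no zeros; normalizing gives a unique positive $\mathfrak{u}_h$ with $\|\mathfrak{u}_h\|_{L^2(\R^2)}=1$.

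For assertion (3), I would carry out a classical WKB construction for $L_h^0$. Seeking $\mathfrak{u}_h^{\rm wkb}(r)=h^{-1/2}e^{-d(r)/h}\bigl(\mathfrak{a}_0(r)+h\mathfrak{a}_1(r)+\cdots\bigr)$, the leading-order balance yields the eikonal equation $(d'(r))^2=\tfrac{r^2}{4}+v_0(r)-v_0^{\min}$, whose solution with $d(0)=0$ is precisely $\mathfrak{d}$ of \eqref{eq:d-main0} --- note that the magnetic energy $|\Ab|^2=r^2/4$ enters the Agmon metric, as already in \cite{HeSjPise, HeSjSond}. The next order is a first-order linear transport equation for $\mathfrak{a}_0$, regular at $r=0$ because $d'(0)=0$ and the eikonal-induced cancellation makes the forcing vanish there; integrating gives a smooth positive radial $\mathfrak{a}_0$, whose value at the origin is pinned down by matching to the rescaled harmonic-oscillator ground state from Step~1, giving the constant in \eqref{eq:defa0}. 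To compare the quasimode with $\mathfrak{u}_h$, I would then prove a magnetic Agmon estimate: testing the eigenvalue equation against $e^{2(1-\varepsilon)\mathfrak{d}/h}\mathfrak{u}_h$ and using $|\nabla\mathfrak{d}|^2=\tfrac{r^2}{4}+v_0-v_0^{\min}$ together with the conjugated magnetic identity yields $\|e^{(1-\varepsilon)\mathfrak{d}/h}\mathfrak{u}_h\|_{L^2}=\mathcal{O}(h^{-N})$. The spectral gap $\gtrsim h$ from Step~1 and the standard resolvent/projection estimate then produce weighted $L^2$-closeness of $\mathfrak{u}_h$ to a suitable multiple of $\mathfrak{u}_h^{\rm wkb}$, and local elliptic regularity for the magnetic operator upgrades this to the uniform pointwise bound \eqref{eq:uh-main0} on any disk $D(0,R)$.

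The principal obstacle I anticipate is the combination of the sharp magnetic Agmon estimate (the weight $\mathfrak{d}$ is only Lipschitz at turning points, and the magnetic contribution $r^2/4$ must be correctly absorbed into the Agmon identity through gauge-invariant derivatives) with the upgrade from weighted $L^2$-bounds to the pointwise asymptotic \eqref{eq:uh-main0}. The latter requires tracking $h$-dependence in local Sobolev embeddings and in the transport-equation remainders with enough precision that the exponential decay rate is not degraded, which is where most of the technical work will concentrate.
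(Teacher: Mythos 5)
Your proposal follows essentially the same route as the paper: angular-momentum (Fourier-mode) decomposition plus magnetic harmonic approximation for parts (1)--(2), then a radial WKB construction with Agmon estimates and elliptic regularity for part (3). One observation that removes the difficulties you anticipate at the end: once the ground state is known to lie in the $m=0$ sector, $\mathcal L_h^{\rm sw}$ acting on radial functions reduces to the \emph{non-magnetic} operator $-h^2\Delta+\mathfrak w$ with $\mathfrak w=\tfrac{r^2}{4}+v_0$ (cf.\ \eqref{eq:Lh-rad}), and since $\mathfrak w-v_0^{\min}\geq \tfrac{r^2}{4}$ there are no turning points and $\mathfrak d$ is smooth, so the classical non-magnetic Agmon and WKB machinery of \cite{HeSj1,S,He88} applies directly without any gauge-invariant subtleties.
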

This in particular clarifies the hypotheses imposed in \cite{FSW}. Then, applying 
Theorem~1.5 in \cite{FSW},  under the  additional  assumptions
\begin{equation}\label{assFSW}
\mathfrak v_0\leq 0 \mbox{ and }L>4\left(\sqrt{|v_0^{\min}|}+a(\mathfrak v_0)\right)\,, 
\end{equation}
we get that
\begin{equation}\label{eq:FSW}
\begin{aligned}
 \exp\left(-\frac{L^2+4\sqrt{|v_0^{\min}|}L+\gamma(\mathfrak v_0)}{4h} \right)& \leq e_2^{\mathfrak v_0}(h)-e_1^{\mathfrak v_0}(h) \\
 &\leq \exp\left(-\frac{(L-a(\mathfrak v_0))^2-a(\mathfrak v_0)^2}{4h} \right) 
 \end{aligned}\end{equation}
where $\gamma(\mathfrak v_0)$ is a positive constant. 

 The bounds in \eqref{eq:FSW} follow from the asymptotics  (also obtained in \cite[Eq.~(1.12)]{FSW} under the assumptions \eqref{eq:FSW})
\begin{equation}\label{eq:FSW-main}
e_2^{\mathfrak v_0}(h)-e^{\mathfrak v_0}_1(h)\underset{h\to0}{\sim} \Big|2\int_{D(0,a)} \mathfrak v_0(x)\mathfrak u_h(x)\mathfrak u_h(x_1+L,x_2)e^{\frac{iLx_2}{2h}}\,dx\, \Big|
\end{equation}
where $\mathfrak u_h$ is the radial  ground  state of $\mathcal L_h^{\rm sw}$ (see Theorem~\ref{thm:main0}). The integral in the right hand side of \eqref{eq:FSW-main} is called, after \cite{FSW},    the \emph{hopping coefficient.} It describes the interaction between the two potential ``wells'' and can be derived through a reduction  to the restriction of $\mathcal L_h$  on a two dimensional space (yielding an \emph{interaction matrix} like in \cite{He88}).

Using the improved expansion of the ground state $\mathfrak u_h$ in Theorem~\ref{thm:main0} above,  we improve the bounds on the hopping coefficient and thereby on $e_2^{\mathfrak v_0}(h)-e_1^{\mathfrak v_0}(h)$ provided the potential $\mathfrak v_0$ satisfies the conditions in \eqref{eq:v0}. 

 Besides its role in capturing the tunneling asymptotics, precise estimates of the hopping coefficient 
are key ingredients in the understanding of tight binding reductions \cite{SW} (see also \cite{D, O}). Hence  our new estimates of the hopping have  potential applications, to be discussed elsewhere.

Our  first result, on  the eigenvalue splitting, is as follows.

\begin{theorem}[New bounds on the eigenvalue splitting]\label{thm:main}~

 Assume that  $\mathfrak v_0 $  satisfies  the conditions  in \eqref{eq:v0} and  \eqref{assFSW}.
Then, there exist positive constants $h_0,C_1,C_2$    such that    for all $h\in(0,h_0)$, we have
\begin{equation}\label{eq:main}
 C_1h\exp\left(-\frac{S_0 }{h}\right)\leq e_2^{\mathfrak v_0}(h)-e_1^{\mathfrak v_0}(h)\leq C_2h^{-1}\exp\left(-\frac{S_a}{h}\right)
\end{equation}
where (with $a=a(\mathfrak v_0)>0$)
\begin{equation}\label{eq:Sa}
S_a=\int_0^{L-a} \sqrt{\frac{\rho^2}{4}+v_0(\rho)-v_0^{\min}}\,d\rho  + \int_0^a\sqrt{\frac{\rho^2}{4}+v_0(\rho)-v_0^{\min}}\,d\rho
\end{equation}
and
\begin{equation}\label{eq:S0-main}
S_0=\int_0^L \sqrt{\frac{\rho^2}{4}+v_0(\rho)-v_0^{\min}}\,d\rho \,.
\end{equation}
Furthermore, if $\mathfrak v_0<0$ in $D(0,a)$, then we have the improved lower bound
\begin{equation}\label{eq:main*}
\liminf_{h\to0} h\ln \big(e_2^{\mathfrak v_0}(h)-e_1^{\mathfrak v_0}(h)\big)\geq - \hat S\,,
\end{equation}  
where $\hat S<S_0$ is defined as follows
\begin{equation}\label{eq:hat-S}
\hat S=\inf_{0<r<a}\left(\frac{Lr}{2}+\int_0^{L-r}\sqrt{\frac{\rho^2}{4}+v_0(\rho)-v_0^{\min}}d\rho+\int_0^r \sqrt{\frac{\rho^2}{4}+v_0(\rho)-v_0^{\min}}d\rho\right)\,.
\end{equation} 
\end{theorem}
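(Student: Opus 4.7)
The plan is to start from the Fefferman--Shapiro--Weinstein formula \eqref{eq:FSW-main}, which under \eqref{assFSW} reduces the splitting to the estimation of the hopping coefficient
\[
\mathcal I_h = 2\int_{D(0,a)} \mathfrak v_0(x)\,\mathfrak u_h(x)\,\mathfrak u_h(x_1+L,x_2)\,e^{iLx_2/(2h)}\,dx,
\]
obtained from a Helffer--Sjöstrand-type reduction to an interaction matrix on the span of the two well-localized ground states. Substituting the refined WKB expansion from Theorem~\ref{thm:main0}, $\mathfrak u_h(y)=h^{-1/2}\mathfrak a_0(|y|)e^{-d(|y|)/h}(1+\mathcal O(h))$, uniformly on $\overline{D(0,L+a)}$, one reduces $\mathcal I_h$ (modulo relative error $\mathcal O(h)$) to an oscillatory integral with $h^{-1}$ prefactor and real phase $\Phi(x):=d(|x|)+d(\sqrt{(x_1+L)^2+x_2^2})$. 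By the radial symmetry of $\mathfrak v_0$ and $\mathfrak u_h$, the integrand is even in $x_2$, so $e^{iLx_2/(2h)}$ may be replaced by $\cos(Lx_2/(2h))$ and $\mathcal I_h\in\mathbb R$.

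For the upper bound I would majorize $|\cos|\leq 1$ and estimate $\int_{D(0,a)}e^{-\Phi(x)/h}dx$ by Laplace's method. A short argument identifies the minimum of $\Phi$ on $\overline{D(0,a)}$: for $x$ on $\{|x|=r\}$ one has $\sqrt{(x_1+L)^2+x_2^2}\geq L-r$ with equality at $x=(-r,0)$, hence $\Phi|_{|x|=r}$ attains its minimum $d(r)+d(L-r)$ at that point. Using $\mathfrak v_0\leq 0$, one verifies $d'(\rho)\leq\sqrt{\rho^2/4-v_0^{\min}}$ for $\rho\in[0,a]$, with equality on $[a,\infty)$; since $L-a>a$, this yields $d'(r)\leq d'(L-r)$ for $r\in[0,a]$. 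Consequently $r\mapsto d(r)+d(L-r)$ is non-increasing on $[0,a]$ and $\min_{\overline{D(0,a)}}\Phi=\Phi(-a,0)=S_a$. A boundary Laplace estimate then yields $|\mathcal I_h|\leq C_2 h^{-1}e^{-S_a/h}$.

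For the weak lower bound $C_1 h\,e^{-S_0/h}$, localize the integral to the small disc $D(0,\kappa h)$ with $\kappa>0$ chosen small enough that $\cos(Lx_2/(2h))\geq 1/2$ there. A Taylor estimate gives $\Phi(x)=S_0+\mathcal O(\kappa h)$ uniformly on $D(0,\kappa h)$, so $e^{-\Phi(x)/h}\geq c\,e^{-S_0/h}$. Since $\mathfrak v_0(0)=v_0^{\min}<0$ and $\mathfrak a_0>0$, the integrand has constant sign on this region with absolute value at least $c'\,h^{-1}e^{-S_0/h}$; integrating over a domain of area $\sim h^2$ yields $|\mathcal I_h|\geq C_1 h\,e^{-S_0/h}$.

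The main obstacle is the improved lower bound \eqref{eq:main*}, where the extra $Lr/2$ in $\hat S$ reflects a genuine magnetic correction captured by the oscillatory factor. My strategy is a partial stationary-phase analysis in $x_2$, keeping $x_1=-r$ with $r\in(0,a)$ as a free parameter. The Taylor expansion of $\Phi$ in $x_2$ at $(-r,0)$ has positive quadratic coefficient $\beta(r)=\frac{d'(r)}{2r}+\frac{d'(L-r)}{2(L-r)}$; combined with $\cos(Lx_2/(2h))$, the Gaussian-oscillatory $x_2$-integral equals $\sqrt{\pi h/\beta(r)}\,e^{-L^2/(16\beta(r)h)}$ (equivalently, by a complex shift $x_2\to x_2+iL/(4\beta(r))$ justified by the holomorphic extension of the WKB ansatz). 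At the critical $r^*$ selected by the stationary-phase equation in $x_1$---the eikonal condition $d'(r^*)=d'(L-r^*)$---the total exponent reorganizes as $d(r^*)+d(L-r^*)+Lr^*/2$, and taking the infimum over $r\in(0,a)$ yields $\hat S$. The hypothesis $\mathfrak v_0<0$ on $D(0,a)$ is crucial here: it ensures that $d$ is strictly increasing on $(0,a)$ and that $\beta(r)$ stays bounded away from $0$ and $\infty$ on the whole interval, so that the saddle-point deformation is valid along the full segment $r\in(0,a)$ and not just in a neighborhood of the origin.
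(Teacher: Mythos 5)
Your arguments for the upper bound $C_2 h^{-1}e^{-S_a/h}$ and the crude lower bound $C_1 h e^{-S_0/h}$ are essentially sound, although they differ from the paper's route. The paper does not bound $|\cos|\leq1$ and then apply Laplace's method directly; instead it exploits the representation of $u_h$ on $\{|x|\geq a\}$ from \cite{FSW} (Lemma~\ref{lem:FSW-uh}) and the identity \eqref{eq:int-Kh}, which converts the angular oscillatory integral into a \emph{positive} integral of $t^{\alpha-1}(1+t)^{-\alpha}I_0\big(Lr\sqrt{t(t+1)}/h\big)$. The upper bound follows from $\sqrt{t}/(\sqrt{t}+\sqrt{t+1})\leq 1/2$ and the elementary bound on $I_0$, and the weak lower bound from the choice $\varepsilon=1$ in the lower bound for $I_0$. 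Your more direct Laplace/localization route does reach the same two bounds, so that part is an acceptable alternative.

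The serious gap is in your treatment of the improved lower bound \eqref{eq:main*}. Your proposed contour shift $x_2\mapsto x_2+iL/(4\beta(r))$ requires a holomorphic extension in $x_2$ of the integrand; but $\mathfrak v_0\in C_c^\infty$ is \emph{not} analytic (by hypothesis \eqref{eq:v0} it is compactly supported, hence not real-analytic), and consequently neither $d(\rho)=\int_0^\rho\sqrt{\rho'^2/4+v_0(\rho')-v_0^{\min}}\,d\rho'$ nor $\mathfrak a_0$ admits such an extension. So the ``justified by the holomorphic extension of the WKB ansatz'' step is precisely the obstruction the paper is built to circumvent. This is why the paper works on the FSW representation of $u_h$, valid on $\{|x|\ge a\}$ where the radial equation is explicit, and replaces the oscillation by the Bessel function $I_0$ with non-negative terms.

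Even granting the formal contour deformation, your claimed eikonal condition and numerology are incorrect. The saddle of $g_0(r)=\tfrac{Lr}{2}+d(L-r)+d(r)$ satisfies $g_0'(r_0)=\tfrac L2-d'(L-r_0)+d'(r_0)=0$, i.e.\ $d'(L-r_0)=d'(r_0)+\tfrac L2$ (this is exactly Proposition~\ref{prop:hat-S}), not $d'(r^*)=d'(L-r^*)$. Moreover, with your $\beta(r)=\tfrac12\big(d'(r)/r+d'(L-r)/(L-r)\big)$, one finds
\[
\frac{L^2}{16\beta(r)}=\frac{L^2 r(L-r)}{8\big((L-r)d'(r)+rd'(L-r)\big)}\,,
\]
which does not reduce to $Lr/2$ at either stationary-phase condition for a general admissible $v_0$; your identity relied on a specific cancellation that does not hold. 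The paper obtains the factor $e^{-Lr/2h}$ not from a Gaussian-saddle computation but from the elementary estimate $(1-\varepsilon)\sqrt{t(t+1)}-t\geq -\varepsilon t$ applied to the exponent in \eqref{eq:def-Gh}, together with $I_0(z)\geq c_\varepsilon e^{(1-\varepsilon)z}$; the family of modified exponents $g(r,\varepsilon)$ in \eqref{eq:g(r,ep)} is then optimized over $r$ and letting $\varepsilon\to0$ recovers $\hat S=\inf_r g_0(r)$ by Proposition~\ref{prop:g(r,ep)}. The hypothesis $\mathfrak v_0<0$ on $D(0,a)$ is needed not to make $d$ strictly increasing (that is automatic) but so that the prefactor $|v_0(r)|$ does not vanish near the interior minimizer $r_\varepsilon\in(0,a)$, which is what guarantees the constant $K_\varepsilon(\eta)>0$ in the proof of Proposition~\ref{prop:hop*}.
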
 
Notice that, whenever $L>2a$ and $v_0\leq 0$, we have,
\begin{equation}\label{eq:potential} 
\frac{\rho}{2}\leq  \sqrt{\frac{\rho^2}{4}+v_0(\rho)-v_0^{\min}}\leq \sqrt{\frac{\rho^2}{4}-v_0^{\min}}\leq \frac{\rho}{2}+\sqrt{|v_0^{\min}|}\,, 
\end{equation}
from which  we deduce (with $a=a(\mathfrak v_0)$) the following estimates
\begin{equation} \frac{(L-a)^2-a^2}{4}< S_a~{\rm and~}
 S_0<  \frac{L^2+4\sqrt{|v_0^{\min}|}L}{4}\,.
 \end{equation}
 Hence the estimates  in  \eqref{eq:main} already improve the ones in \eqref{eq:FSW} thereby providing a more   accurate measurement of the magnitude of the  tunnel effect.
The  guess of $S_a$ and $S_0$  is based on the consideration of the hopping matrix like it appears in \cite{FSW} together with the WKB expansion of the single  well ground states obtained in  the present contribution (Theorem~\ref{thm:main0}). 
 The estimate \eqref{eq:main*} is a new improvement of the lower bound in  \eqref{eq:main} since we will  prove  in Proposition~\ref{prop:hat-S}, that $S_a<\hat S<S_0$.

The bounds in Theorem~\ref{thm:main} provide   a rather sharp estimate of the gap $e_2^{\mathfrak v_0}(h)-e_1^{\mathfrak v_0}(h)$ when the size of the support of  the potential function is small (i.e. $a(\mathfrak v_0)\ll1$). In fact, we have 
\begin{equation}\label{eq:S0=Sa+Ra}
 S_0=S_a+R_a
 \end{equation}
where
\[R_a:=
\int_0^a\left(\sqrt{\frac{(L-\rho)^2}{4}-v_0^{\min}}\,-\sqrt{\frac{\rho^2}{4}+v_0(\rho)-v_0^{\min}}\right)d\rho\,.\]
If $v_0\leq 0$ and $0<2a<L$, then $R_a$ satisfies (for the upper bound we use \eqref{eq:potential})
\begin{equation}\label{eq:Ra} 
0< R_a\leq \left(\frac{L-a}{2}+\sqrt{|v_0^{\min}|}\right)a\,.
\end{equation}
Moreover, observing that
\[
S_0 =\int_0^L\sqrt{\frac{\rho^2}{4}-v_0^{\min}}\,d\rho+\int_0^a\left(\sqrt{\frac{\rho^2}{4}+v_0(\rho)-v_0^{\min}}-\sqrt{\frac{\rho^2}{4}-v_0^{\min}}\right)d\rho
\]
we obtain (when  $v_0\leq 0$ and for the  lower bound we use \eqref{eq:potential})
\[ \int_0^L\sqrt{\frac{\rho^2}{4}-v_0^{\min}}\,d\rho-  \sqrt{|v_0^{\min}|}\,a\leq  S_0\leq  \int_0^L\sqrt{\frac{\rho^2}{4}-v_0^{\min}}\,d\rho\,. \] 
We then have the following immediate corollary of Theorem~\ref{thm:main} (which is our precise meaning of 
\eqref{eq:weak-asy}).
\begin{corollary}\label{corol:maina}
Under the assumptions in Theorem~\ref{thm:main}, the following holds
\begin{equation*}
\begin{aligned} 
-\int_0^L\sqrt{\frac{\rho^2}{4}-v_0^{\min}}\,d\rho  &\leq 
 \liminf_{h\to0_+} h\ln\big(e_2^{\mathfrak v_0}(h)-e_1^{\mathfrak v_0}(h)\big)\\
 &\leq 
\limsup_{h\to0_+} h\ln\big(e_2^{\mathfrak v_0}(h)-e_1^{\mathfrak v_0} (h)\big)\\
& \leq -\int_0^L\sqrt{\frac{\rho^2}{4}-v_0^{\min}}\,d\rho+C_L(v_0) \,,
\end{aligned}
\end{equation*}
where
\[C_L(\mathfrak v_0)=\left(\frac{L-a(\mathfrak v_0)}{2}+2\sqrt{|v_0^{\min}|}\right)a(\mathfrak v_0)\]
and $a(\mathfrak v_0)$ is introduced in \eqref{eq:def-a}.   
\end{corollary}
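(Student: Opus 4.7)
The plan is to derive the corollary directly from the two-sided bound stated in Theorem~\ref{thm:main}, by taking logarithms and exploiting the elementary estimates on $S_0$ and $S_a$ already assembled in the preceding discussion. Starting from
\[
C_1 h \exp(-S_0/h) \;\leq\; e_2^{\mathfrak v_0}(h) - e_1^{\mathfrak v_0}(h) \;\leq\; C_2 h^{-1} \exp(-S_a/h),
\]
I apply $h\ln(\cdot)$ to each side. Since $h\ln C_i \to 0$ and $\pm h\ln h \to 0$ as $h \to 0_+$, the prefactors disappear in the limit, yielding
\[
-S_0 \;\leq\; \liminf_{h\to 0_+} h\ln\big(e_2^{\mathfrak v_0}(h) - e_1^{\mathfrak v_0}(h)\big) \;\leq\; \limsup_{h\to 0_+} h\ln\big(e_2^{\mathfrak v_0}(h) - e_1^{\mathfrak v_0}(h)\big) \;\leq\; -S_a.
\]

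For the lower bound on the $\liminf$, I use the rewriting of $S_0$ already displayed in the text,
\[
S_0 = \int_0^L \sqrt{\tfrac{\rho^2}{4} - v_0^{\min}}\,d\rho + \int_0^a\!\left(\sqrt{\tfrac{\rho^2}{4} + v_0(\rho) - v_0^{\min}} - \sqrt{\tfrac{\rho^2}{4} - v_0^{\min}}\right)d\rho,
\]
and observe that the second integrand is nonpositive because $\mathfrak v_0 \leq 0$ by \eqref{assFSW}. Hence $S_0 \leq \int_0^L \sqrt{\rho^2/4 - v_0^{\min}}\,d\rho$, which delivers the leftmost inequality of the corollary after changing signs.

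For the upper bound on the $\limsup$, I combine the identity $S_0 = S_a + R_a$ from \eqref{eq:S0=Sa+Ra} with the bound $R_a \leq \big(\tfrac{L-a}{2} + \sqrt{|v_0^{\min}|}\big)a$ from \eqref{eq:Ra}, together with the lower estimate $S_0 \geq \int_0^L \sqrt{\rho^2/4 - v_0^{\min}}\,d\rho - \sqrt{|v_0^{\min}|}\,a$ coming from \eqref{eq:potential}. This gives
\[
S_a = S_0 - R_a \;\geq\; \int_0^L \sqrt{\tfrac{\rho^2}{4} - v_0^{\min}}\,d\rho - \left(\tfrac{L-a}{2} + 2\sqrt{|v_0^{\min}|}\right)a,
\]
so $-S_a \leq -\int_0^L \sqrt{\rho^2/4 - v_0^{\min}}\,d\rho + C_L(\mathfrak v_0)$, which is the rightmost inequality.

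There is no genuine obstacle here: every ingredient needed is already derived in the paragraphs preceding the corollary, and the argument is reduced to tracking logarithmic asymptotics and chaining together the already-established estimates on $S_0$, $S_a$, and $R_a$.
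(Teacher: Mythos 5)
Your proof is correct and follows exactly the route the paper had in mind: the paper calls the corollary ``immediate'' after establishing \eqref{eq:main}, $S_0 = S_a + R_a$, the bound on $R_a$ in \eqref{eq:Ra}, and the two-sided bound on $S_0$ derived from \eqref{eq:potential}. You have simply made the chain of inequalities explicit — taking $h\ln(\cdot)$ of \eqref{eq:main}, discarding the $h\ln(C_i h^{\pm 1})$ prefactors, using $S_0 \leq \int_0^L\sqrt{\rho^2/4 - v_0^{\min}}\,d\rho$ for the $\liminf$, and combining $S_a = S_0 - R_a$ with the two displayed estimates for the $\limsup$ — and this is precisely the intended argument.
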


 With the improved bound in \eqref{eq:main*} holding when $\mathfrak v_0<0$ in $D\big(0,a(\mathfrak v_0)\big)$, we can refine our estimates of the \emph{hopping coefficient} on the r.h.s. of \eqref{eq:FSW-main}. The  idea is that we insert the profile of $\mathfrak u_h$ given in Theorem~\ref{thm:main0}  into the hopping coefficient in \eqref{eq:hat-w} and control  the arising  error terms by \eqref{eq:main*} and a tricky identity from \cite{FSW}  refined by our expansion of  $\mathfrak u_h$ in \eqref{eq:uh-main0} (see Lemma~\ref{lem:FSW-uh}). The outcome is a precise asymptotics of the tunneling's magnitude  through a new sharp constant $\mathcal  S(\mathfrak v_0,L)$ that  we will introduce  later in \eqref{eq:def-S(v0)**}. More precisely we prove the following.

\begin{theorem}[Sharp asymptotics of the eigenvalue splitting]\label{thm:main*}~

Under the assumptions in Theorem~\ref{thm:main}, if   $\mathfrak v_0<0$ in $D\big(0,a(\mathfrak v_0)\big)$, then we have
\[
h \ln\big(e_2^{\mathfrak v_0}(h)-e_1^{\mathfrak v_0}(h)\big) \underset{h\to0}{\sim} - S(\mathfrak  v_0,L)\,, \]
where $S(\mathfrak v_0,L)$ is a positive constant.
\end{theorem}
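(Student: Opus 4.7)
The starting point is the Fefferman--Shapiro--Weinstein asymptotic \eqref{eq:FSW-main}, which under the hypotheses \eqref{assFSW} reduces the theorem to determining the logarithmic rate of the hopping coefficient
\[
w(h)=2\int_{D(0,a)}\mathfrak v_0(x)\,\mathfrak u_h(x)\,\mathfrak u_h(x_1+L,x_2)\,e^{iLx_2/(2h)}\,dx\,,
\]
i.e.\ to proving $h\ln|w(h)|\to-S(\mathfrak v_0,L)$ as $h\to 0$. The plan is to insert the pointwise WKB profile \eqref{eq:uh-main0} for both factors $\mathfrak u_h(x)$ and $\mathfrak u_h(x_1+L,x_2)$ and then evaluate the resulting oscillatory Laplace integral by a steepest-descent argument. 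Since $(x_1+L,x_2)$ remains in a fixed compact disc when $x\in D(0,a)$, Theorem~\ref{thm:main0}(3) applies uniformly to both factors.

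After substitution, the leading integral has the form
\[
\frac{1}{h}\int_{D(0,a)}\mathfrak v_0(x)\,\mathfrak a_0(x)\,\mathfrak a_0(x_1+L,x_2)\exp\!\left(-\frac{\Phi(x)}{h}\right)dx,\qquad \Phi(x)=\mathfrak d(x)+\mathfrak d(x_1+L,x_2)-\tfrac{iLx_2}{2}\,.
\]
The radial functions $x\mapsto\mathfrak d(|x|)$ and $x\mapsto\mathfrak d(\sqrt{(x_1+L)^2+x_2^2})$ are real-analytic in the relevant region and extend holomorphically in $x_2$ into a complex neighbourhood of the real axis, so one may deform the $x_2$-contour into a steepest-descent path. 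The shift $x_2\mapsto x_2-ir$ converts the imaginary factor $iLx_2/(2h)$ into the real weight $Lr/(2h)$, which is precisely the extra $Lr/2$ visible in $\hat S$ from \eqref{eq:hat-S}. Optimising $\mathrm{Re}\,\Phi$ on the deformed contour produces the variational problem that selects $S(\mathfrak v_0,L)$, to be defined in \eqref{eq:def-S(v0)*}. The hypothesis $\mathfrak v_0<0$ on $D(0,a)$ keeps the amplitude of constant sign on the open support, preventing a spurious cancellation from enlarging the effective decay rate.

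The main obstacle is the propagation of the WKB error through $w(h)$. The expansion \eqref{eq:uh-main0} is sharp only at relative order $h$, whereas $w(h)$ itself is exponentially small, so a direct substitution produces error integrals with the same exponential decay rate as the main contribution. To bypass this, I would use the ``tricky identity'' from \cite{FSW}, to be encoded in Lemma~\ref{lem:FSW-uh}: applying the eigenvalue equation $\mathcal L_h^{\rm sw}\mathfrak u_h=e^{\rm sw}(h)\mathfrak u_h$ together with Green's identity rewrites $w(h)$ as an expression in which the weight $\mathfrak v_0$ is transferred onto derivatives of the shifted ground state, effectively localising the error where \eqref{eq:uh-main0} is sharp. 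Feeding the WKB expansion back into this reformulated integral then yields the sharp upper bound $h\ln|w(h)|\leq -S(\mathfrak v_0,L)+o(1)$.

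Finally, the explicit evaluation of the dominant saddle in the steepest-descent analysis provides the matching lower bound $h\ln|w(h)|\geq -S(\mathfrak v_0,L)+o(1)$, and the a~priori estimate \eqref{eq:main*} of Theorem~\ref{thm:main} serves as a useful safety net. The hardest step is the saddle-point analysis itself: the contour deformation must be justified across the sphere $\{|x|=a\}$, where $\mathfrak d$ has only limited regularity; one must verify that the dominant saddle lies in a region where the amplitude is non-degenerate; and the error-control mechanism afforded by Lemma~\ref{lem:FSW-uh} must be combined with the WKB expansion at the level of the saddle itself, not merely on averages, which is what ultimately pins down the sharp constant $S(\mathfrak v_0,L)$.
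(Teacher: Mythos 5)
Your outline correctly identifies the starting point (reduce to the hopping coefficient via the FSW asymptotic, then estimate $w_{\ell,r}$) and the need to control the imprecision of the WKB expansion, but the central mechanism you propose does not work and is not the one used in the paper.

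Your steepest-descent argument hinges on the assertion that the Agmon distance $\mathfrak d(x)=d(|x|)$ ``is real-analytic in the relevant region and extends holomorphically in $x_2$ into a complex neighbourhood of the real axis.'' This is false: $\mathfrak v_0\in C_c^\infty(\R^2)$ is compactly supported and non-trivial, hence it cannot be real-analytic, and neither can $d(r)=\int_0^r\sqrt{\rho^2/4+v_0(\rho)-v_0^{\min}}\,d\rho$ on $[0,a]$. The paper emphasizes in the introduction that the non-analyticity of $V$ is precisely what separates the present setting from \cite{HeSjPise}, where the analytic framework would allow exactly the kind of contour deformation you propose. Deforming the $x_2$-contour across the interior of $D(0,a)$ is therefore not justified, so the conversion of $iLx_2/(2h)$ into the real weight $Lr/(2h)$ and the whole saddle-point analysis built on it collapse.

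You also misread the role of Lemma~\ref{lem:FSW-uh}. It is not a Green's-identity trick transferring $\mathfrak v_0$ onto derivatives of the shifted ground state. Since $\mathfrak v_0$ is compactly supported, for $\rho\geq a$ the radial ODE for $u_h$ is exactly solvable (it is the radial Landau equation at energy $e^{\rm sw}(h)$), and \eqref{eq:FSW-uh} is the resulting explicit Laplace-type integral representation of $u_h(\rho)$, with the constant $C_h$ determined by matching against the WKB profile at $\rho=a$. This representation, combined with the $I_0$ Bessel identity \eqref{eq:int-Kh}, is how the paper eliminates the oscillatory phase $e^{iLr\sin\theta/h}$ and produces a genuinely real, positive integrand. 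The problem then becomes a two-variable real Laplace integral over $(r,t)\in[0,a]\times\R_+$ with phase $\Psi(r,t)$ in \eqref{eq:def-Psi}, whose global minimum (Proposition~\ref{prop:min-Psi}) is computed explicitly and sits on the boundary $r=a$, where $v_0$ vanishes to infinite order. It is exactly this boundary degeneracy which forces the paper to settle for the logarithmic rate $h\ln(\ldots)\sim -S(\mathfrak v_0,L)$ rather than a full equivalent, a point your saddle-point picture (which implicitly assumes an interior non-degenerate saddle) does not account for. The a priori lower bound \eqref{eq:main*} is not merely a ``safety net'': it is used through \eqref{eq:lb-w} to absorb error terms in each reduction step (Lemmas~\ref{lem:exp-w}--\ref{lem:exp-w***}), and without it the chain of approximations cannot be closed.
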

The expression of $S(\mathfrak v_0,L)$,  explicitly given in \eqref{eq:def-S(v0)**}, has the form
\[S(\mathfrak v_0,L)=S_a+I\big(a(\mathfrak v_0),L,v_0^{\min}\big)\]
where we see two types of  terms: 
\begin{itemize}
\item $S_a$, introduced in \eqref{eq:Sa}, is expressed in terms of  the  magnetic field, the potential function $\mathfrak v_0$, and the size of its support, $a(\mathfrak v_0)$ (see \eqref{eq:def-a}).  It   is obtained from the  magnetic Agmon distance between the wells.

\item  $I\big(a(\mathfrak v_0),L,v_0^{\min}\big)$ involves,  in addition to the size of the well's support,   the distance $L$ between the wells and  the minimum $v_0^{\min}$ of the potential, which amounts to the ``deepness'' of the well.  We can interpret $ I\big(a(\mathfrak v_0),L,v_0^{\min}\big)$ as an interaction term between the wells.
\end{itemize} 
\begin{remark}[Distant wells and narrow wells]\label{rem:interaction}
If the distance $L$ between the wells is very large, we observe that 
\[S(\mathfrak v_0,L)\underset{L\to+\infty}{=}\frac{L^2}{4}+|v_0^{\min}|\ln L+ \mathcal O(1)\] 
Notice that, to  leading order, the foregoing asymptotics is consistent with \eqref{eq:FSW}.

Even when the support of the well is very small, we find
\[S(\mathfrak v_0,L)\underset{a\to0_+}{\sim}\frac{L}4\sqrt{L^2+4|v_0^{\min}|}+|v_0^{\min}|\ln\left( \frac{L\Big(1+\sqrt{1+\frac{4|v_0^{\min}|}{L^2}}\Big)}{2\sqrt{|v_0^{\min}|}} \right)\,.\]
Notice that
\[\begin{aligned}
\int_0^L\sqrt{\frac{\rho^2}{4}-v_0^{\min}}\,d\rho&=\frac{L}{4}\sqrt{L^2+4|v_0^{\min}|}+|v_0^{\min}|\sinh^{-1}\left(\frac{L}{2\sqrt{|v_0^{\min}|}} \right)
\\
&=\frac{L}4\sqrt{L^2+4|v_0^{\min}|}+|v_0^{\min}|\ln\left( \frac{L\Big(1+\sqrt{1+\frac{4|v_0^{\min}|}{L^2}}\Big)}{2\sqrt{|v_0^{\min}|}} \right)
\end{aligned} \]
which  is consistent with Corollary~\ref{corol:maina}.
\end{remark}
\begin{remark}[Consistency with  non-magnetic tunneling]\label{rem:b=0}
By a simple change of scales, our results include the Hamiltonian
\begin{equation}\label{eq:H-h-beta}
\mathcal L_\hbar^\beta:=(\hbar D-\beta\Ab)^2+V\,,
\end{equation}
where $\beta>0$ measures the \emph{strength} of the magnetic field.\\
If we denote by  $\big(e_j(\hbar,\beta)\big)_{j\geq1}$ the sequence of min-max eigenvalues of $\mathcal L_\hbar^\beta$ and  introduce the effective semi-classical parameter $h=\beta^{-1}\hbar$, we reduce the analysis to  the Hamiltonian in \eqref{eq:H-h}, thanks to the relation
\[ \mathcal L_\hbar^\beta:=\beta^2\Big((hD-\Ab)^2+\beta^{-2}V\Big)\,. \]
This yields
\[e_2(\hbar,\beta)-e_1(\hbar,\beta)=\beta^2\left(e_2^{\beta^{-2}\mathfrak v_0}(h)-e_1^{\beta^{-2}\mathfrak v_0}(h)\right)\,. \]
As a consequence of Theorem~\ref{thm:main*}, we get
\[ \hbar \ln\big(e_2(\hbar,\beta)-e_1(\hbar,\beta)\big) \underset{\hbar\to0}{\sim}-\beta S(\beta^{-2}\mathfrak  v_0)\]
and we shall see that (cf. Proposition~\ref{prop:beta=0})
\[\beta S(\beta^{-2}\mathfrak  v_0)\underset{\beta\to0}{\sim} 2\int_0^{L/2}\sqrt{v_0(\rho)-v_0^{\min}}\,d\rho\,,\]
which is the term obtained in the non-magnetic setting \cite{HeSj1, S}.
\end{remark}
The rest of the paper is devoted to the proofs of Theorems~\ref{thm:main0}, \ref{thm:main} and  \ref{thm:main*}. 
In Section~\ref{sec:mha}, we revisit the harmonic approximation in the presence of a magnetic field and conclude by proving Proposition~\ref{prop:m-harm-app}, which proves the first and second items in Theorem~\ref{thm:main0}. In Section~\ref{sec:WKB}, we recall the WKB approximation in the setting of  radially symmetric potential and ground state; the third item in Theorem~\ref{thm:main0} then follows from Propositions~\ref{prop:WKB} and \ref{prop:WKB-app}. Section~\ref{sec:comp} is devoted to the properties of the constants $S_0$, $S_a$ and $\hat S$ appearing in the tunneling estimates \eqref{eq:main} and \eqref{eq:main*}. The proof of Theorem~\ref{thm:main} occupies Section~\ref{sec:hopping} (see  \eqref{eq:hat-w} and Propositions~\ref{prop:hop} and \ref{prop:hop*}). Finally, Section~\ref{sec:asy} is devoted to  the proof Theorem~\ref{thm:main*}.  Theorem~\ref{thm:main*} only yields the decay rate of  the tunneling  and an estimate of the amplitude. Capturing the leading term of the amplitude  seems quite challenging and will be discussed  in Remark~\ref{rem:amplitude}.
Another perspective could be to  relax the ``radial"  hypothesis  on  the potential function $\mathfrak v_0$, in particular, establishing that tunneling occurs  when $\mathfrak  v_0$ is  compactly supported with  a unique non-degenerate minimum.

\section{Magnetic harmonic approximation}\label{sec:mha}
 In the presence of a magnetic field and  a unique non-degenerate well, the method of harmonic approximation    was treated in \cite[Sec.~2]{HeSjPise}, but we revisit it here in the setting of a radial potential, which  allows us to derive more  precise results on  the ground states. We can also refer to Matsumoto \cite{Mat}  and Matsumoto-Ueki \cite{MatUek} for an independent  discussion in the general case.

\subsection{The Landau  Hamiltonian}

In the absence of an electric field, $\mathfrak v_0=0$,  the operator in \eqref{eq:1well}  reduces (after rescaling) to the Landau Hamiltonian
\[L=(D-\Ab)^2\]
whose spectrum  consists of the Landau levels, i.e.
\[\sigma(L)=\{\Lambda_n:=(2n-1)~:~n\in\N\}\,,\]
where each $\Lambda_n$ is  an eigenvalue of infinite multiplicity.\\
 Moreover, $ L$ has  a normalized radial ground state given  by
\[\phi_0(x)=\pi^{-1/2}\exp\left(-\frac{|x|^2}{2}\right)\,.\]
We can decompose $L$ via the orthogonal projections on  the Fourier modes,   
\begin{equation}\label{eq:proj-Fm}\Pi_m u:=e^{im\theta}\pi_mu,\quad \pi_mu:=\frac1{2\pi}\int_0^{2\pi} u(r,\theta')e^{-im\theta'}d\theta'\qquad(m\in\Z)\,. 
\end{equation}
In fact, 
\[\begin{aligned}
L^2(\R^2;dx)&=\bigoplus_{m\in\Z} \Pi_m\big(L^2(\R^2;dx)\big)\simeq \bigoplus_{m\in\Z} L^2(\R_+,rdr)\,,\\
L&=\bigoplus_{m\in\Z} L\Pi_m\simeq\bigoplus_{m\in\Z}H_{m,0}\,,\end{aligned}\]
where
\[H_{m,0}:=  \pi_mL\pi_m^* =-\frac{\partial^2}{\partial r^2}-\frac1{r}\frac{\partial}{\partial r} +\frac{r^2}4+\frac{m^2}{r^2}-m \]
is the self-adjoint operator in $L^2(\R_+,rdr)$ associated with   the quadratic form 
\[q_{m,0}(u)=\int_{\R_+}\left(|u'(r)|^2+\frac{1}{r^2}\Big(m-\frac{r^2}{2} \Big)^2|u|^2\right)rdr\,.\]
Then we get
\[\sigma(L)=\overline{\bigcup_{m\in\Z}\sigma(H_{m,0})}\,.\]
For each Landau level $\Lambda_n$,  we introduce the  set
\[\mathcal  J_n=\{m\in\Z~:~\Lambda_n\in  \sigma(H_{m,0})\}\,.\]
 For a given $m\in\mathcal J_n$, $\Lambda_n$ is a simple  eigenvalue of $H_{m,0}$.  However, since  $\Lambda_n$ is an eigenvalue of $L$ with infinite multiplicity, we deduce    that $\mathcal  J_n$  is infinite. Note that $0\in\mathcal J_1$ and by the min-max principle, $\mathcal J_1\subset[0,+\infty)$.

\subsection{The magnetic harmonic oscillator}\label{sec:m-h-osc}

Consider the case where  $\mathfrak v_0(x)=\mu|x|^2$, where $\mu$  is a  positive constant. The single well operator in \eqref{eq:1well} becomes
\[\mathcal L_h^{\rm sw}= (hD-\Ab)^2+\mu |x|^2\,. \]
After  rescaling\footnote{We do the change of  variable $y=h^{-1/2}x$.} we get
\[  \sigma(\mathcal L_h^{\rm sw})=h\sigma(L_\mu^{\rm  mag}) \]
where
\begin{equation}\label{eq:m-h-osc}
 L_\mu^{\rm  mag}=  (D-\Ab)^2+\mu |x|^2\,.
\end{equation}
We decompose the operator $L_\mu^{\rm  mag}$ via the  orthogonal projections on the Fourier modes as  follows
\[ L_\mu^{\rm  mag}\simeq\bigoplus_{m\in\Z}  H_{m,\mu}\]
where
\[
H_{m,\mu}:= \pi_m L_\mu^{\rm  mag}\pi_m^*
= -\frac{\partial^2}{\partial r^2}-\frac1{r}\frac{\partial}{\partial r}+\Big(\frac{1}{4}+\mu\Big)r^2+\frac{m^2}{r^2}-m\,.
\]

The min-max principle yields
\begin{equation}\label{eq:m-h-osc-}
\forall\,m<0,\quad \lambda_1(H_{m,\mu})> \inf_{u\not=0}\frac{\big\langle(-\Delta +\left(\frac{1}{4}+\mu\right)|x|^2)u,u\big\rangle_{L^2(\R^2)}}{\|u\|_{L^2(\R^2)}}=2\sqrt{\frac14+\mu}\,.
\end{equation}
Moreover, the rescaling  $r\mapsto (1+4\mu)^{1/4}r$ yields  the reduction to the Landau Hamiltonian,
\begin{equation}\label{eq:m-h-osca}
\begin{aligned}
 H_{m,\mu}&=\sqrt{1+4\mu}\left( -\frac{\partial^2}{\partial r^2}-\frac1{r}\frac{\partial}{\partial r} +\frac{r^2}4+\frac{m^2}{r^2}-m\right)+\left(\sqrt{1+4\mu}-1\right)m\\
&=\sqrt{1+4\mu}\,H_{m,0}+\left(\sqrt{1+4\mu}-1\right)m\,.
\end{aligned}
\end{equation}
Consequently,  we infer from \eqref{eq:m-h-osc-} and \eqref{eq:m-h-osc} 
\[ \inf_{m\in\Z}\lambda_1(H_m)=\lambda_1(H_0)=\sqrt{1+4\mu}\,,\quad \inf_{\substack{m\in\Z\\m\not=0}}\lambda_1(H_m)>\sqrt{1+4\mu}\,.\]
This  implies  that
\begin{equation}\label{eq:m-h-osc-ev}
\lambda_1(L_\mu^{\rm mag})=\sqrt{1+4\mu}
\end{equation}
is a simple eigenvalue  and that its (normalized)  associated eigenfunction is radial:
\begin{equation}\label{eq:m-h-osc-ef}
\phi_\mu^{\rm  mag}(x) = \pi^{-1/2}(1+4\mu)^{1/4}\exp\left(-\frac{\sqrt{1+4\mu}\,}{2} |x|^2\right)\,. 
\end{equation}

\subsection{Eigenvalue asymptotics  and radial ground states}

Assuming that  the potential function $\mathfrak v_0$ satisfies \eqref{eq:v0}, we have an accurate description of the spectrum of the operator $\mathcal L_h^{\rm sw}$ introduced in \eqref{eq:1well},  which will provide  an example where the hypotheses imposed by Fefferman--Shapiro--Weinstein in \cite{FSW} hold  (see their Assumption 1.4).

In the sequel we use the notation $\big(\lambda_j(\mathcal P)\big)_{j\in\N}$ for the sequence of min-max eigenvalues of a given self-adjoint operator  $\mathcal P$.

\begin{proposition}\label{prop:m-harm-app}
For every fixed $j\in\N$,  the $j$'th eigenvalue of $\mathcal  L_h^{\rm sw}$ satisfies,
\[\lambda_j(\mathcal L_h^{\rm sw})= v_0^{\min}+h\,  \lambda_j(L_\mu^{\rm mag})+\mathcal O (h^{3/2}) \quad(h\to0_+)\,, \]
where $L_\mu^{\rm mag}$ is the operator introduced in  \eqref{eq:m-h-osc}, with $\mu=\frac{v''_0(0)}{2}$. 

Moreover, the lowest eigenvalue of $\mathcal L_h^{\rm sw}$  is simple with a  radial ground state.
\end{proposition}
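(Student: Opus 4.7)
The plan is to exploit the rotational symmetry of both $\Ab$ and $\mathfrak v_0$ (which makes $\mathcal L_h^{\rm sw}$ commute with rotations) and decompose the operator along the Fourier modes $e^{im\theta}$, exactly as for $L$ and $L_\mu^{\rm mag}$ in Sec.~\ref{sec:m-h-osc}. With $\pi_m$ as in \eqref{eq:proj-Fm}, one obtains
\[
\mathcal L_h^{\rm sw}\simeq\bigoplus_{m\in\Z}H_{m,h},\qquad H_{m,h}=-h^2(\partial_r^2+r^{-1}\partial_r)+\frac{1}{r^2}\Big(hm-\frac{r^2}{2}\Big)^{2}+v_0(r),
\]
so the discrete spectrum of $\mathcal L_h^{\rm sw}$ is the ordered union, with multiplicities, of the $\sigma(H_{m,h})$, mirroring $L_\mu^{\rm mag}\simeq\bigoplus_m H_{m,\mu}$.

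The core step is a magnetic harmonic approximation inside each fixed sector. Using the linearity of $\Ab$, the rescaling $r=h^{1/2}s$ transforms $H_{m,h}$ into $v_0^{\min}+h\,(H_{m,\mu}+W_h)$, where $\mu=v_0''(0)/2$ and $W_h(s)=h^{-1}(v_0(h^{1/2}s)-v_0^{\min})-\mu s^2=\mathcal O(h^{1/2}s^3)$ on any fixed neighborhood of $s=0$ by Taylor's formula. For the upper bound I would plug smoothly truncated eigenfunctions of $H_{m,\mu}$ into the min-max characterization of $\lambda_j(H_{m,h})$: their Gaussian decay makes the truncation error super-polynomially small in $h$, while the cubic Taylor remainder integrates to $\mathcal O(h^{3/2})$. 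For the matching lower bound I would apply an IMS partition $\chi_1^2+\chi_2^2=1$ with $\chi_1$ supported in a fixed small neighborhood of $r=0$: the localization error is $\mathcal O(h^2)$; on $\text{supp}\,\chi_2$, uniqueness of the nondegenerate minimum at $0$ (together with $v_0\equiv 0$ outside $\text{supp}\,\mathfrak v_0$) gives $v_0\geq v_0^{\min}+c$, which for small $h$ well exceeds $v_0^{\min}+h\lambda_j(H_{m,\mu})$; on $\text{supp}\,\chi_1$ one rescales and compares directly with $H_{m,\mu}$. This produces, for each fixed $m$ and $j$,
\[
\lambda_j(H_{m,h})=v_0^{\min}+h\,\lambda_j(H_{m,\mu})+\mathcal O(h^{3/2}).
\]

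Rearranging the sector-wise asymptotics yields the announced expansion of $\lambda_j(\mathcal L_h^{\rm sw})$. For the simplicity and radiality of the ground state, \eqref{eq:m-h-osc-ev}--\eqref{eq:m-h-osca} give $\lambda_1(H_{0,\mu})=\sqrt{1+4\mu}$ and $\lambda_1(H_{m,\mu})>\sqrt{1+4\mu}$ for every $m\neq 0$ with a strictly positive gap; combined with the sector-wise asymptotics, this shows that for $h$ small the minimum $\min_{m}\lambda_1(H_{m,h})$ is attained only at $m=0$, so the ground state lives in the radial subspace and is simple (since $H_{0,h}$ is a regular Sturm--Liouville operator on $(0,\infty)$ with weight $r\,dr$). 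The main obstacle is making the last argument uniform in $|m|$: the harmonic approximation is valid sector by sector but its error is not uniform as $m\to\infty$, because the drift minimum of the effective radial potential at $r^2=2hm$ escapes $\text{supp}\,\mathfrak v_0$ once $m\gtrsim a^2/h$. One disposes of these large-$|m|$ sectors via the crude a priori inequality $H_{m,h}\geq h\,H_{m,0}+v_0^{\min}$: for $m<0$ the multiplicity formula $\lambda_1(H_{m,0})=2|m|+1$ immediately yields a sufficient gap, and for $m>0$ large the compact support of $\mathfrak v_0$ forces the ground state of $H_{m,h}$ to concentrate outside the well, where $\lambda_1(H_{m,h})$ is essentially the first Landau level $h$, comfortably above $v_0^{\min}+h\lambda_j(L_\mu^{\rm mag})$ for every fixed $j$ and all $h$ sufficiently small.
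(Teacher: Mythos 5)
Your plan is a genuinely different route from the paper's. The paper performs the harmonic approximation globally on the two-dimensional operator (replacing $\mathfrak v_0$ by $v_0^{\min}+\mu|x|^2$), obtains simplicity of the ground state energy from the resulting gap $\lambda_2-\lambda_1\gtrsim h$, and then deduces radiality from three facts: a simple eigenfunction of a rotation-invariant operator lies in a single Fourier sector $m_*$; the rescaled ground state $\mathfrak u_h$ is $L^2$-close to the radial function $\phi_\mu^{\rm mag}$; and $\Pi_m\phi_\mu^{\rm mag}=0$ for $m\neq 0$. The decomposition then forces $\|\Pi_m \mathfrak u_h\|=\mathcal O(h^{1/4})$ for $m\neq 0$, hence $m_*=0$. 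This entirely sidesteps the need for any sector-by-sector eigenvalue control. Your sector-wise decomposition $\mathcal L_h^{\rm sw}\simeq\bigoplus_m H_{m,h}$ is correct and well set up, and the in-sector harmonic approximation and the $m=0$ Sturm--Liouville simplicity are sound, but the price you pay is precisely the uniform-in-$m$ problem that the paper avoids.

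That is where the gap is. You correctly identify the non-uniformity as $m\to\infty$ and propose to dispose of large $|m|$ by two devices, but together they do not cover the range $M\le m\lesssim a^2/h$ for $m>0$. The crude bound $H_{m,h}\ge hH_{m,0}+v_0^{\min}$ yields $\lambda_1(H_{m,h})\ge v_0^{\min}+h(2|m|+1)$ for $m<0$, which suffices, but for $m>0$ it only gives $\lambda_1(H_{m,h})\ge v_0^{\min}+h$, which is \emph{below} $v_0^{\min}+h\lambda_j(L_\mu^{\rm mag})$ since $\lambda_j(L_\mu^{\rm mag})>1$. Your second device (``the ground state concentrates outside the well, where $\lambda_1(H_{m,h})$ is essentially the first Landau level $h$'') is only valid once $\sqrt{2hm}>a$, i.e. $m\gtrsim a^2/h$; in the intermediate range $M\le m\lesssim a^2/h$ the drift minimum at $r=\sqrt{2hm}$ lies \emph{inside} ${\rm supp}\,\mathfrak v_0$, the ground state concentrates there, and $\lambda_1(H_{m,h})$ is close to $v_0(\sqrt{2hm})+h$, which is negative and certainly not ``essentially $h$.'' The assertion as written is therefore false in that range, and nothing else in your argument covers it. A clean fix within your sector-wise framework is to use, on a fixed small disc where $v_0(r)-v_0^{\min}\ge\mu'r^2$ for some $\mu'\in(0,\mu)$, the comparison $H_{m,h}\ge v_0^{\min}+h H_{m,\mu'}-\mathcal O(h^2)$ together with $\lambda_1(H_{m,\mu'})=\sqrt{1+4\mu'}+(\sqrt{1+4\mu'}-1)m$, which grows linearly in $m$ and is therefore eventually larger than $\lambda_j(L_\mu^{\rm mag})$, uniformly in $h$. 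Alternatively, adopt the paper's global argument for the radiality, which makes the whole issue moot.
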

 As a consequence of  Proposition~\ref{prop:m-harm-app} and the  construction of accurate quasi-modes, the  lowest eigenvalue  $\lambda_1(\mathcal L_h^{\rm sw})$  can be expanded to any order in  powers of $h$ (see Proposition~\ref{prop:WKB} below).
\begin{proof}[Proof of Proposition~\ref{prop:m-harm-app}]
 Except the last statement, the proof is standard (see \cite{HeSjPise} for the magnetic case) and corresponds to the so-called Harmonic approximation in the case of a non degenerate well (see \cite{CFKS, S} and \cite[Ch.~7]{FH-b}).
We write
\[\mathfrak  v_0(x)=\mathfrak  v_0^{\rm app}(x) +\mathcal  O(|x|^3)\quad(x\to 0)\]
where
\[\mathfrak v_0^{\rm app}(x):=v_0^{\min}+\mu|x|^2\,.\]
For any $C>0$, the spectrum below $v_0^{\min} + C h$ of $\mathcal L_h^{\rm sw}$ is effectively given (modulo $\mathcal O (h^{3/2})$) by that  of
\[ (hD-\Ab)^2+\mathfrak  v_0^{\rm app}(x)=v_0^{\min}+h L_\mu^{\rm mag} \,,\]
so  we are reduced to the operator analyzed in Sec.~\ref{sec:m-h-osc}, thereby getting the asymptotics displayed in Proposition~\ref{prop:m-harm-app}. 

To prove the last statement, we consider a normalized ground state $\psi_h$  of $\mathcal L_h^{\rm sw}$. After rescaling, we obtain  from $\psi_h$ the following normalized function
\[\mathfrak u_h(x):= h^{-1/2}\psi_h(h^{1/2}x)\,. \]  
 Moreover, the operator $\mathcal L_{h}^{\rm sw}$  can be fibered as $\mathcal L_h^{\rm sw}\simeq  h\bigoplus\limits_{m\in\Z} \mathcal  L_{h,m}$, where
 \[  \mathcal L_{h,m}:=\pi_m\mathcal L_h^{\rm sw}\pi_m^*=  -\frac{\partial^2}{\partial r^2}-\frac1r\frac{\partial}{\partial r}+ h^{-1}v_0(h^{1/2}r)+\frac{r^2}{4}+\frac{m}{r^2}-m\,.\]
For $h$ sufficiently  small,  the ground state energy of $\mathcal L_{h}^{\rm sw}$  is simple, so  there exists a unique $m_*\in\Z$  such  that $\mathfrak  u_h=\Pi_{m_*}\mathfrak  u_h$, where $\Pi_{m_*}$ is the projection introduced in \eqref{eq:proj-Fm}.  Note that $m_*$  could depend on $h$ but we skip the reference to $h$ to simplify the presentation.

 The theory of harmonic approximation yields that  the ground state $\mathfrak  u_h$ is close to the normalized radial ground state $\phi_\mu^{\rm mag}$   of the operator  $L_\mu^{\rm mag}$ introduced in \eqref{eq:m-h-osc-ef}.  
  In fact,    we have a spectral gap
\[\delta(\mu):=\lambda_2(L_\mu^{\rm mag})-\lambda_1(L_\mu^{\rm mag})>0\]
and
\[ \|\mathfrak  u_h-\phi_\mu^{\rm mag}\|_{L^2(\R^2)}=\mathcal O(h^{1/2} )\quad(h\to0_+)\,. \]
Now we write  the decomposition
\[ \begin{aligned}
\|\mathfrak  u_h-\phi_\mu^{\rm mag}\|^2_{L^2(\R^2)}&=\sum_{m\in\Z}\|\Pi_m (\mathfrak u_h-\phi_\mu^{\rm  mag})\|_{L^2(\R^2)}^2\\
&=\|\Pi_0 \mathfrak u_h-\phi_\mu^{\rm  mag}\|^2_{L^2(\R^2)}+\sum_{\substack{m\in\Z\\m\not=0}}
\|\Pi_m\mathfrak u_h\|_{L^2(\R^2)}^2\,,
\end{aligned}\]
where we used  that $\Pi_0\phi_\mu^{\rm mag}=\phi_\mu^{\rm mag}$ and, for $m\not=0$, $\Pi_m\phi_\mu^{\rm mag}=0$,  since the function $\phi_{\mu}^{\rm  mag}$  is radial.

Consequently, 
\begin{equation*}
\|\Pi_0\mathfrak  u_h-\phi_\mu^{\rm  mag}\|^2_{L^2(\R^2)} =\mathcal O (h^{1/2})\,,
\end{equation*}
\begin{equation*}
\sum_{\substack{m\in\Z\\m\not=0}}
\|\Pi_m\mathfrak u_h\|_{L^2(\R^2)}^2=\mathcal O(h^{1/2}) \quad(h\to0_+)
\end{equation*}
and 
\[ \mathfrak u_h=\Pi_0\mathfrak u_h+\mathcal O (h^{1/2})\,.\]
 This proves that  $m_*=0$ and that the ground state $\mathfrak u_h$ is radial.
\end{proof}

\section{Decay of ground states for the single well potential}\label{sec:WKB}
Again, we recall standard results  but just take advantage of the additional assumption that the one well potential is radial.
\subsection{The Agmon distance}\label{sec:mag-Agmon}

Consider the  radial  potential function  $\mathfrak w $ on $\R^2$
\begin{equation}\label{eq:W}
\mathfrak w (x):=\frac{|x|^2}{4}+\mathfrak v_0(x)= \frac{r^2}4+v_0(r)\quad(r=|x|\,\mbox{ and }\,x\in\R^2)\,.
\end{equation}
We introduce the  smooth  radial function on $\R^2$ associated with the potential $\mathfrak w$,
\begin{subequations}
\begin{equation}\label{eq:dist-W}
\mathfrak d(x)=d(|x|)
\end{equation}
with 
\begin{equation}\label{eq:d(r)}
d(r):=\int_0^{r}\sqrt{\frac{\rho^2}{4}+v_0(\rho)-v_0^{\min}}\,d\rho 
\end{equation}
\end{subequations}
The function $\mathfrak d$ amounts to the (Agmon) distance to the well $\{0\}$, relative to the potential $\mathfrak w$.
\subsection{Agmon estimates}

If $f$ is a radial function, then
\begin{equation}\label{eq:Lh-rad}\mathcal L_h^{\rm sw} f=-h^2\Delta f+\mathfrak w f 
\end{equation}
 Therefore, when restricting the action of $\mathcal L_h^{\rm sw} $ to radial functions, we  consider $\mathfrak w $ as the effective potential.
Hence, we can apply the semi-classical analysis relative to the Schr\"odinger operator without magnetic potential as considered in \cite{HeSj1} or \cite{S} (see \cite{He88} or \cite{DS} for a more pedagogical presentation).
The identity  in \eqref{eq:Lh-rad} and an integration by  parts yield the following result \cite[Thm.~3.1.1]{He88}.

\begin{proposition}\label{prop:He-Thm3.1.1}
For all $R>0$, let $D_R=\{x\in\R^2~:~|x|<R\}$. If $\phi\in C^0(\overline{D_R};\R)$ and $u\in C^2(\overline{D_R};\R)$  are radial functions such that $\phi$ is Lipschitz and $u=0$ on  $\partial  D_R$,  then the following identity holds
\[ \int_{D_R}\Big(h^2|\nabla(e^{\phi/h}u)|^2+(\mathfrak w -|\nabla\phi|^2)\Big)|e^{\phi/h}u|^2 dx=\int_{D_R}e^{2\phi/h} u\,\mathcal L_h^{\rm  sw}u \,dx\,.\]
\end{proposition}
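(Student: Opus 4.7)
The plan is to exploit the radial reduction recorded in \eqref{eq:Lh-rad}, which eliminates the magnetic potential entirely and reduces the statement to the classical Agmon--Persson identity for the scalar Schr\"odinger operator $-h^2\Delta+\mathfrak w$. Because $u$ is radial, \eqref{eq:Lh-rad} gives
\[
\int_{D_R} e^{2\phi/h}\,u\,\mathcal L_h^{\rm sw}u\,dx=\int_{D_R} e^{2\phi/h}\,u\,(-h^2\Delta u+\mathfrak w u)\,dx,
\]
so after this first step the magnetic vector potential plays no further role and one is left with a purely scalar computation.

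Next I would set $v:=e^{\phi/h}u$, which is Lipschitz and vanishes on $\partial D_R$ since $u$ does. The potential contribution is immediate, $e^{2\phi/h}\mathfrak w u^2=\mathfrak w v^2$, while for the kinetic term Green's formula (with vanishing boundary contribution from $u=0$ on $\partial D_R$) yields
\[
\int_{D_R}-h^2 e^{2\phi/h}u\,\Delta u\,dx=\int_{D_R} h^2\,\nabla(e^{2\phi/h}u)\cdot\nabla u\,dx.
\]
Writing $u=e^{-\phi/h}v$, one has $\nabla u=e^{-\phi/h}(\nabla v-h^{-1}v\,\nabla\phi)$ and $\nabla(e^{2\phi/h}u)=e^{\phi/h}(\nabla v+h^{-1}v\,\nabla\phi)$; multiplying these and using the difference of squares collapses the cross terms to
\[
h^2\,\nabla(e^{2\phi/h}u)\cdot\nabla u=h^2|\nabla v|^2-|\nabla\phi|^2\,v^2.
\]
Adding back the potential term produces the stated identity.

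The only genuine subtlety is the low regularity of $\phi$: it is assumed merely Lipschitz, so $\nabla\phi\in L^\infty$ exists almost everywhere but $\Delta\phi$ is not defined pointwise. This is precisely why I would organize the calculation around the first form $\int h^2\,\nabla(e^{2\phi/h}u)\cdot\nabla u\,dx$, which never invokes second derivatives of $\phi$. The natural alternative of expanding $-h^2 e^{\phi/h}\Delta(e^{-\phi/h}v)$ would formally generate a $\Delta\phi$ contribution that would then need to be reabsorbed by a further integration by parts; this is doable but introduces an unnecessary regularity hiccup, whereas the chosen route is a one-line algebraic collapse valid under the stated Lipschitz hypothesis.
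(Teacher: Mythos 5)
Your proof is correct and follows exactly the route the paper indicates (use the radial reduction \eqref{eq:Lh-rad} to strip away the magnetic potential, then integrate by parts), which is the classical Agmon--Helffer energy identity from \cite[Thm.~3.1.1]{He88} that the paper cites without reproving. Your observation that organizing the kinetic term as $\int h^2\,\nabla(e^{2\phi/h}u)\cdot\nabla u$ avoids any appeal to $\Delta\phi$, keeping the argument valid for merely Lipschitz $\phi$, is precisely the right point of care.
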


We have the following standard application of Proposition~\ref{prop:He-Thm3.1.1} on the decay of ground  states of the operator $\mathcal L_h^{\rm  sw}$.
\begin{proposition}\label{prop:He-Prop3.3.1}
For all $\delta\in(0,1)$, there exist $a(\delta),C_\delta,h_0>0$ such that $\lim\limits_{\delta\to0_+}a(\delta)=0$ and, if $\mathfrak u_h$ is a ground state of $\mathcal L_h^{\rm sw}$ and  $h\in(0,h_0)$, then we have,
\[ \left\|\nabla\left(e^{(1-\delta)\mathfrak d(x)/h}\mathfrak u_h\right)\right\|^2+\left\|e^{(1-\delta)d(x)/h}\mathfrak u_h \right\|^2\leq  C_\delta\, e^{a(\delta)/h}\,\|\mathfrak u_h\|^2\,,\]
where $\mathfrak d$ is  the Agmon distance introduced in \eqref{eq:dist-W}.
\end{proposition}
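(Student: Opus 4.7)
I would apply the radial Agmon identity of Proposition~\ref{prop:He-Thm3.1.1} with the Lipschitz radial weight $\phi = (1-\delta)\mathfrak{d}$ to the ground state $u = \mathfrak{u}_h$, which is radial by Theorem~\ref{thm:main0}. The structural cornerstone is the eikonal identity
\[
|\nabla \mathfrak{d}(x)|^2 = \mathfrak{w}(x) - v_0^{\min},
\]
which is immediate from \eqref{eq:d(r)} since $d'(r)^2 = r^2/4 + v_0(r) - v_0^{\min}$. Using $\mathcal{L}_h^{\rm sw}\mathfrak{u}_h = e^{\rm sw}(h)\mathfrak{u}_h$ and writing $w := e^{\phi/h}\mathfrak{u}_h$, the identity rearranges (via $1-(1-\delta)^2 = 2\delta-\delta^2$) into
\[
\int_{\R^2} h^2|\nabla w|^2\,dx + \int_{\R^2}\bigl[(2\delta-\delta^2)(\mathfrak{w}-v_0^{\min}) + v_0^{\min} - e^{\rm sw}(h)\bigr]|w|^2\,dx = 0.
\]
By Theorem~\ref{thm:main0} we have $e^{\rm sw}(h)-v_0^{\min}=\mathcal O(h)$, so the bracketed coefficient is at least $(2\delta-\delta^2)(\mathfrak{w}-v_0^{\min}) - Ch$.

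Next I would split $\R^2 = \Omega_\epsilon \cup \Omega_\epsilon^c$ with $\Omega_\epsilon := \{x:\mathfrak{w}(x) - v_0^{\min} < \epsilon\}$. Since $\mathfrak{v}_0(0) = v_0^{\min}$ is the unique minimum and $v_0''(0)>0$, $\Omega_\epsilon$ shrinks to $\{0\}$ as $\epsilon\to 0_+$; in particular $M(\epsilon) := \sup\{\mathfrak{d}(x) : x \in \Omega_\epsilon\} \to 0$. On $\Omega_\epsilon^c$ the coefficient is bounded below by $\tfrac{1}{2}(2\delta-\delta^2)\epsilon$ provided $h \leq h_1(\delta,\epsilon)$, whereas on $\Omega_\epsilon$ its absolute value is controlled by a constant $K(\delta,\epsilon)$. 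Moving the $\Omega_\epsilon$-piece to the right-hand side and using $e^{2\phi/h}\leq e^{2(1-\delta)M(\epsilon)/h}$ on $\Omega_\epsilon$ yields an estimate of the form
\[
\int_{\R^2}h^2|\nabla w|^2\,dx + c(\delta,\epsilon)\int_{\R^2}|w|^2\,dx \leq C(\delta,\epsilon)\,e^{2(1-\delta)M(\epsilon)/h}\|\mathfrak{u}_h\|^2,
\]
where I also bound $\int_{\Omega_\epsilon}|w|^2$ directly by the same weight to recover the full $\|w\|^2$ on the left. Choosing $\epsilon=\epsilon(\delta)$ appropriately and setting $a(\delta) := 2(1-\delta)M(\epsilon(\delta))$ produces $a(\delta)\to 0$ as $\delta\to 0$. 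The gradient norm without the $h^2$ prefactor appearing in the statement is recovered by absorbing $h^{-2}$ into the exponential, since $h^{-2}\leq e^{\eta/h}$ for any $\eta>0$ and $h$ small, readjusting $a(\delta)$ by an arbitrarily small amount.

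The main technical obstacle is that Proposition~\ref{prop:He-Thm3.1.1} is formulated on a bounded disk $D_R$ with vanishing Dirichlet trace, whereas one needs the estimate on all of $\R^2$, and moreover $\mathfrak{d}$ is only \emph{locally} Lipschitz, growing like $|x|^2/4$ at infinity. I would handle this by applying the identity to the truncated function $\chi_R \mathfrak{u}_h$ on $D_{2R}$ with a radial cutoff $\chi_R$, collecting commutator errors $h^2[\Delta,\chi_R]\mathfrak{u}_h$ supported in the annulus $\{R\leq|x|\leq 2R\}$, and letting $R\to\infty$. The convergence of the tail integrals hinges on a preliminary rough bound guaranteeing that $\mathfrak{u}_h$ decays faster than $e^{(1-\delta)\mathfrak{d}/h}$ grows; such a bound can be obtained via Persson's principle together with the fact that $\mathfrak{w}(x) \sim |x|^2/4$ for $|x|\geq a(\mathfrak{v}_0)$, which forces Gaussian-type decay of $\mathfrak{u}_h$ and dominates the weight as soon as $\delta>0$.
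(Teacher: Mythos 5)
Your proposal is correct and follows exactly the standard Agmon-estimate argument from \cite{He88} that the paper cites in lieu of a proof: apply the radial Agmon identity with weight $(1-\delta)\mathfrak d$, exploit the eikonal relation $|\nabla\mathfrak d|^2 = \mathfrak w - v_0^{\min}$, split $\R^2$ by a sublevel set $\Omega_\epsilon$ of $\mathfrak w - v_0^{\min}$, and tune $\epsilon=\epsilon(\delta)$ so that the resulting exponent $a(\delta)=2(1-\delta)\sup_{\Omega_\epsilon}\mathfrak d$ tends to $0$. The points you flag (radiality of $\mathfrak u_h$ and $e^{\rm sw}(h)-v_0^{\min}=\mathcal O(h)$ from Proposition~\ref{prop:m-harm-app}, absorbing $h^{-2}$ into the exponential, and passing from bounded disks to $\R^2$ via a truncated weight or cutoff with a priori Gaussian decay) are all handled appropriately.
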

The estimate in \eqref{prop:He-Prop3.3.1} is not optimal since we work under the assumption in \eqref{eq:v0}. In fact, we can write  estimates with $\delta=0$ as we shall see in Proposition~\ref{prop:WKB-app} later on. 
\subsection{WKB approximation}  
For all $S>0$, we introduce the  set
\begin{equation}\label{eq:Om}
B_{\mathfrak d}(S)=\{x\in\R^2~:~\mathfrak d(x)<S\}\,,
\end{equation}
where  $\mathfrak d$ is  the Agmon distance  to $0$ introduced in \eqref{eq:dist-W}.  Since $\mathfrak d$ is  monotone increasing  with respect to $|x|$, 
we have
\begin{equation}\label{eq:B=D}
B_{\mathfrak d}(S)=D(0,R_S):=\{x\in\R^2~:~|x|<R_S\}
\end{equation}
where $R_S$ is the unique   solution of $d(R)=S$. Clearly, $R_S$ is monotone increasing with respect to $S$.

We can then perform the WKB construction outlined in the following proposition.
\begin{proposition}[cf. Prop.~4.4.3 in \cite{He88}]\label{prop:WKB}
There exist $N_0\geq 1$ and  two  sequences $(E_k)_{k\geq 0}\subset\R$  and $(\mathfrak a_k)_{k\geq 0}\subset C^\infty(\R^2)$ such that, for all  $N\geq 1$ and $S>0$,
\[   e^{\mathfrak d(x)/h}\Big(\mathcal L_{h}^{\rm sw}-E^N(h)\Big)\vartheta^N=\mathcal O(h^{N-N_0})\quad{\rm on~}B_{\mathfrak d}(S)\,,\]
where
\begin{align*}
 E^N(h)&=\sum_{k=0}^N E_kh^{k}\,,\quad
E_0=v_0^{\min}, \quad E_1=\sqrt{1+2v''_0(0)}\\
\vartheta^N(x)&=h^{-1/2}\left(\sum_{k=0}^N \mathfrak a_k(x)h^k\right)e^{-\mathfrak d(x)/h},\quad 
\mathfrak a_0(0)=\frac12\sqrt{\frac{1+2v_0''(0)}{\pi}} \,.
\end{align*}
Moreover $\mathfrak  a_0(x)>0$ and  for every $k$,  the  function $\mathfrak a_k$ is radial.
\end{proposition}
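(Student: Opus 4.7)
The proof follows the classical WKB construction (cf.\ \cite[Prop.~4.4.3]{He88} and \cite{HeSj1}), considerably simplified here by the radial symmetry. Since we are seeking a radial approximate solution, identity \eqref{eq:Lh-rad} lets us replace $\mathcal L_h^{\rm sw}$ by the scalar (non-magnetic) operator $-h^2\Delta+\mathfrak w$ with $\mathfrak w(r)=r^2/4+v_0(r)$; the eikonal equation $(d'(r))^2=\mathfrak w(r)-v_0^{\min}$ is built into \eqref{eq:d(r)}. A direct computation of the conjugated operator yields, on radial functions,
\[
e^{\mathfrak d/h}(\mathcal L_h^{\rm sw}-v_0^{\min})\,e^{-\mathfrak d/h}\;=\;h\,\mathcal T_1-h^2\,\mathcal T_2, \qquad \mathcal T_1:=2d'\partial_r+d''+r^{-1}d',\quad \mathcal T_2:=\partial_r^2+r^{-1}\partial_r.
\]
Writing $\mathfrak a_k(x)=a_k(|x|)$, substituting $h^{1/2}e^{\mathfrak d/h}\vartheta^N=\sum_{k=0}^N h^k a_k$ into $h^{1/2}e^{\mathfrak d/h}(\mathcal L_h^{\rm sw}-E^N(h))\vartheta^N$ and forcing the coefficient of $h^m$ to vanish for each $1\leq m\leq N$ produces the transport hierarchy
\[
(\mathcal T_1-E_1)\,a_{m-1}\;=\;\mathcal T_2\,a_{m-2}\,+\,\sum_{k=2}^{m}E_k\,a_{m-k},\qquad a_{-1}\equiv 0,
\]
a first-order linear ODE for each new unknown $a_{m-1}$, whose right-hand side involves only $a_0,\dots,a_{m-2}$ and a new constant $E_m$ (for $m\geq 2$).

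The key observation is the ``removable singularity'' at $r=0$. Since $\mathfrak v_0$ is smooth and radial, $(d'(\rho))^2=\rho^2/4+v_0(\rho)-v_0^{\min}$ is smooth and even in $\rho$, and we may write $d'(r)=r\,\tilde d(r)$ with $\tilde d$ smooth, even, and $\tilde d(0)=\tfrac12\sqrt{1+2v_0''(0)}$. Then $\mathcal T_1-E_1=2d'\partial_r+(2\tilde d-E_1+r\tilde d')$, so the forced choice $E_1=\sqrt{1+2v_0''(0)}=2\tilde d(0)$ (consistent with Proposition~\ref{prop:m-harm-app}) makes the zero-order coefficient vanish at $r=0$. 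Evaluating the transport equation at $r=0$ then yields the compatibility condition $E_m\,a_0(0)=-(\mathcal T_2 a_{m-2})(0)-\sum_{k=2}^{m-1}E_k a_{m-k}(0)$, which uniquely determines $E_m$ for each $m\geq 2$ (since $a_0(0)\neq 0$). With $E_m$ so chosen, a Frobenius-type analysis in the variable $s=r^2$ shows that the ODE admits a one-parameter family of smooth even solutions $a_{m-1}$; the free parameter is fixed by the prescribed $a_0(0)=\tfrac12\sqrt{(1+2v_0''(0))/\pi}$ and by $a_k(0)=0$ for $k\geq 1$, so each $a_k$ is smooth in $r^2$ and lifts to $\mathfrak a_k\in C^\infty(\R^2)$.

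For the positivity of $\mathfrak a_0$, the homogeneous equation $(\mathcal T_1-E_1)a_0=0$ is separable on $\{d'\neq 0\}$, yielding the explicit formula $a_0(r)=a_0(0)\exp\left(\int_0^r\frac{E_1-d''(\rho)-\rho^{-1}d'(\rho)}{2d'(\rho)}\,d\rho\right)$, whose integrand is smooth at $\rho=0$ (by the cancellation above) and whose integral converges for any finite $r$ since $d'(\rho)>0$ for all $\rho>0$ (a consequence of $U_0=\{0\}$ and $v_0''(0)>0$); hence $\mathfrak a_0>0$ on $\R^2$. Finally, by construction the coefficients of $h^1,\dots,h^N$ in the expansion of $h^{1/2}e^{\mathfrak d/h}(\mathcal L_h^{\rm sw}-E^N(h))\vartheta^N$ vanish, and the surviving terms $h^m$ with $m\geq N+1$ have smooth radial coefficients uniformly bounded on $B_{\mathfrak d}(S)=D(0,R_S)$; multiplying by the prefactor $h^{-1/2}$ gives the announced $\mathcal O(h^{N-N_0})$ bound on $B_{\mathfrak d}(S)$, with $N_0=1$ sufficing. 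The main technical obstacle is precisely the Frobenius analysis at $r=0$, which one must carry out to confirm that the indicial matching is not only necessary but also sufficient for a smooth, even solution at each order; this is standard but delicate and reduces to solving the transport ODE order-by-order in the Taylor expansion in $r^2$.
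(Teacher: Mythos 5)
Your proof is correct and follows essentially the same route the paper invokes through its citation of \cite[Prop.~4.4.3]{He88}: the radial reduction \eqref{eq:Lh-rad} converts $\mathcal L_h^{\rm sw}$ to the scalar operator $-h^2\Delta+\mathfrak w$, after which the conjugation $e^{\mathfrak d/h}(\cdot)e^{-\mathfrak d/h}=h\mathcal T_1-h^2\mathcal T_2$, the transport hierarchy, the removable-singularity/compatibility argument at $r=0$ (pinning down $E_1=2\tilde d(0)=\sqrt{1+2v_0''(0)}$ and then each $E_m$), the Frobenius analysis in $s=r^2$ producing smooth radial $\mathfrak a_k$, and the explicit formula for $a_0$ (matching Remark~\ref{rem:a0}) are all carried out correctly.

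One minor bookkeeping slip to fix: you prescribe the transport hierarchy for $1\leq m\leq N$, which determines $a_0,\dots,a_{N-1}$ and $E_1,\dots,E_N$ but leaves $a_N$, which appears in $\vartheta^N$, undefined. Since the proposition asserts the existence of \emph{infinite} sequences $(E_k)_{k\geq 0}$ and $(\mathfrak a_k)_{k\geq 0}$ that work simultaneously for every $N$, the recursion should be solved for all $m\geq 1$ (at each step the $r=0$ compatibility gives $E_m$ and the normalization $a_{m-1}(0)$ fixes the constant) and then truncated at $N$; the first surviving coefficient is $c_{N+1}=E_{N+1}\,a_0=\mathcal O(1)$ on $D(0,R_S)$, so the remainder $h^{-1/2}\sum_{m\geq N+1}c_mh^m$ is $\mathcal O(h^{N+1/2})$, consistent with any $N_0\geq 1$ as claimed.
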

\begin{remark}\label{rem:a0}
 The function $\mathfrak a_0$ satisfies the transport equation
  $$2\nabla \mathfrak d\cdot\nabla\mathfrak  a_0+(\Delta \mathfrak d-E_1)\mathfrak a_0=0\,.$$ Since $\mathfrak d$ and $\mathfrak a_0$ are radial, we get
\[\mathfrak a_0(x)=a_0(|x|):=\frac12\sqrt{\frac{1+2v_0''(0)}{\pi}} \exp\left(-\int_0^{|x|}f(\rho)d\rho \right)\]
where
\[f(\rho)=\frac14\frac{u'(\rho)}{u(\rho)}+\frac1{2\rho}-\frac{E_1}{2\sqrt{u(\rho)}}\]
and 
\[u(\rho)=\frac{\rho^2}{4}+v_0(\rho)-v_0^{\min}\,.\]
\end{remark}
\begin{proposition}[cf. Theorem~4.4.4 in \cite{He88}]\label{prop:WKB-app} 
There  exists  $N_0\geq 1$, and  for all $h\in(0,1]$, there exists a   ground state $\mathfrak u_h$   of $\mathcal L_h^{\rm sw}$ such that
\[ \|\mathfrak u_h\|_{L^2(\R^2)}=1\,,\]
and  if $\Omega$ is an open bounded set in $\mathbb R^2$, then  for any $N$ the following holds
\[ \left\| e^{\mathfrak d(x)/h}( \mathfrak u_h-\vartheta^N ) \right\|_{H^2(\Omega)}= \mathcal O(h^{N-N_0})\,.\] 
\end{proposition}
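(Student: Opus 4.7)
The plan is to promote the WKB quasi-mode $\vartheta^N$ of Proposition~\ref{prop:WKB} into a true approximation of the normalized ground state of $\mathcal L_h^{\rm sw}$ in the weighted $H^2$-norm. The radial symmetry is essential: on radial functions $\mathcal L_h^{\rm sw}$ acts as the scalar Schr\"odinger operator $-h^2\Delta+\mathfrak w$ (see \eqref{eq:Lh-rad}), so the sharp Agmon phase $\mathfrak d$ of \eqref{eq:dist-W} — whose defining feature is $|\nabla\mathfrak d|^2=\mathfrak w-v_0^{\min}$ — is eikonal-saturating, and Propositions~\ref{prop:He-Thm3.1.1}--\ref{prop:He-Prop3.3.1} apply directly with no magnetic correction. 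I would begin by fixing $S>0$, a radial cutoff $\chi\in C_c^\infty(\R^2)$ equal to $1$ on $B_{\mathfrak d}(S)$ and supported in $B_{\mathfrak d}(2S)$, and setting $\widetilde\vartheta^N=\chi\vartheta^N$. Proposition~\ref{prop:WKB} together with the pointwise bound $|\vartheta^N(x)|\leq Ch^{-1/2}e^{-\mathfrak d(x)/h}$ then gives $\|(\mathcal L_h^{\rm sw}-E^N(h))\widetilde\vartheta^N\|_{L^2}=\mathcal O(h^{N-N_0})$, while $\|\widetilde\vartheta^N\|_{L^2}\to c_0>0$ by Laplace's method on $h^{-1/2}\mathfrak a_0 e^{-\mathfrak d/h}$.

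Proposition~\ref{prop:m-harm-app} provides a simple lowest eigenvalue $\lambda_1(\mathcal L_h^{\rm sw})$ with spectral gap of order $h$, and an induction over the WKB construction shows $|\lambda_1(\mathcal L_h^{\rm sw})-E^N(h)|=\mathcal O(h^{N+1})$. The spectral theorem then yields a normalized ground state $\mathfrak u_h$, the sign fixed by $\langle\mathfrak u_h,\widetilde\vartheta^N\rangle>0$, satisfying
\begin{equation*}
\|\mathfrak u_h-\vartheta^N\|_{L^2(B_{\mathfrak d}(S))}=\mathcal O(h^{N-N_0-1}).
\end{equation*}
Setting $r_h=\mathfrak u_h-\vartheta^N$ (again radial), the equation on $B_{\mathfrak d}(S)$ reads
\begin{equation*}
\bigl(\mathcal L_h^{\rm sw}-\lambda_1(\mathcal L_h^{\rm sw})\bigr)r_h=\bigl(E^N(h)-\lambda_1(\mathcal L_h^{\rm sw})\bigr)\vartheta^N-\bigl(\mathcal L_h^{\rm sw}-E^N(h)\bigr)\vartheta^N=:g_h,
\end{equation*}
with $|g_h|\leq Ch^{N-N_0-1}e^{-\mathfrak d/h}$. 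Applying Proposition~\ref{prop:He-Thm3.1.1} to $r_h$ with phase $\mathfrak d$, the effective potential collapses to $v_0^{\min}-\lambda_1(\mathcal L_h^{\rm sw})=-\mathcal O(h)$. Feeding the a priori weight $(1-\delta)\mathfrak d/h$ for $\mathfrak u_h$ from Proposition~\ref{prop:He-Prop3.3.1} back into the Agmon identity, together with the polynomial smallness of $g_h$ and the exponential smallness of the boundary terms on $\partial B_{\mathfrak d}(S)$, a bootstrap (iteratively shrinking $\delta\to0$, or equivalently inserting a regularized phase $\mathfrak d-\varepsilon(h)\ln(1+\mathfrak d/h)$ to absorb the $\mathcal O(h)$ deficit) yields
\begin{equation*}
\bigl\|e^{\mathfrak d/h}r_h\bigr\|_{L^2(B_{\mathfrak d}(S-\eta))}=\mathcal O(h^{N-N_0-2}),\qquad\eta>0.
\end{equation*}

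A standard interior elliptic estimate applied to $e^{\mathfrak d/h}r_h$, regarded as a solution of a uniformly $h$-elliptic equation on any bounded $\Omega$ with $\overline\Omega\subset B_{\mathfrak d}(S-2\eta)$, converts this $L^2$ bound into an $H^2$ bound at the cost of a further $h^{-2}$; since $\Omega$ is fixed in advance while $S$ can be taken arbitrarily large and $N$ is arbitrary, all losses are absorbed by enlarging $N_0$. The main obstacle is the weighted estimate in the middle step: with the sharp phase $\mathfrak d$ the Agmon identity loses coercivity (the quantity $\mathfrak w-|\nabla\mathfrak d|^2-\lambda_1$ equals $v_0^{\min}-\lambda_1=-\mathcal O(h)$), so the weighted $L^2$ norm of $r_h$ cannot be extracted from coercivity alone; control relies on combining the polynomial smallness of the quasi-mode residual with the a priori exponential decay of $\mathfrak u_h$, which is precisely the mechanism behind Theorem~4.4.4 of \cite{He88}.
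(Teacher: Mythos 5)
The paper does not prove this proposition; it is stated as a direct consequence of the radial reduction \eqref{eq:Lh-rad} (which places the problem in the non-magnetic scalar Schr\"odinger setting) together with the cited Theorem~4.4.4 of \cite{He88}. Your sketch faithfully reconstructs the standard argument behind that cited result --- quasi-mode estimate from Proposition~\ref{prop:WKB}, spectral projection using the order-$h$ gap from Proposition~\ref{prop:m-harm-app}, a weighted Agmon estimate with the sharp eikonal phase $\mathfrak d$, then interior $h$-elliptic regularity --- and you correctly identify the only delicate point, namely the loss of coercivity of the Agmon identity at the bottom of the well where $\mathfrak w-|\nabla\mathfrak d|^2-\lambda_1=v_0^{\min}-\lambda_1=-\mathcal O(h)$, as the step that forces the bootstrap or phase-regularization; this is exactly the mechanism in \cite{He88}, so the proposal matches the paper's (implicit) route.
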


\begin{proof}[Proof of Theorem~\ref{thm:main0}]
The first item in Theorem~\ref{thm:main0} follows from Proposition~\ref{prop:m-harm-app}. Consider the normalized ground state $\mathfrak u_h$ of  $\mathcal L_h^{\rm sw}$ in Proposition~\ref{prop:WKB-app}. By Proposition~\ref{prop:m-harm-app}, $\mathfrak u_h$ is radial. By the Sobolev embedding theorem and  Propositions~\ref{prop:WKB-app}
and \ref{prop:WKB}, we have for $\Omega=D(0,R)$ and  $R>0$,
\[\left\| e^{\mathfrak d(x)/h}( \mathfrak u_h-h^{-1/2}\mathfrak a_0 ) \right\|_{L^\infty(\Omega)} =\mathcal O(h^{1/2})\]
thereby proving that $\mathfrak u_h$  is positive, since $\mathfrak a_0$ is. This proves the second and third items in Theorem~\ref{thm:main0}. 
\end{proof}
\section{About  the measure of the tunneling}\label{sec:comp}
Let us inspect more closely the constants  $S_0$ and $S_a$   appearing  in  the tunneling estimates  in \eqref{eq:main}.   Recall that
\begin{equation}\label{eq:S0}
S_0=\int_0^L\sqrt{\frac{\rho^2}{4}+v_0(\rho)-v_0^{\min}}\,d\rho\,.
\end{equation} 
Since $\mathfrak v_0(x)=v_0(|x|)$ vanishes outside $D(0,a)$, we can rewrite the expression of $S_a$ 
in \eqref{eq:Sa} as follows
\begin{equation}\label{eq:Sa-*}
S_a:=2 \int_0^{a} \sqrt{\frac{\rho^2}{4}+v_0(\rho)-v_0^{\min}}\,d\rho  + \int_a^{L-a} \sqrt{\frac{\rho^2}{4} -v_0^{\min}}\,d\rho\,.
\end{equation}
We now prove  {a variational characterization of  $S_0$ and $S_a$ involving the function
$d$ in \eqref{eq:d(r)}.
\begin{proposition}\label{prop:S0-Sa}
We have
 \begin{equation}\label{eq:S0-min}
S_0=\inf_{0<u<a}\left(d(u)+d(L+u)\right)
\end{equation}
and if $v_0<\frac{L(L-2a)}{4}$,
\begin{equation}\label{eq:Sa-min}
S_a=\inf_{0<u<a}\left(d(u)+d(L-u)\right)\,.
\end{equation}
\end{proposition}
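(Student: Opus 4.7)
The plan is to read both identities as infima of the single-variable function $u\mapsto d(u)+d(L\pm u)$ on $(0,a)$ and reduce everything to a sign analysis of the first derivative. By \eqref{eq:d(r)}, $d$ is $C^1$ with
\[ d'(r)=\sqrt{\tfrac{r^2}{4}+v_0(r)-v_0^{\min}}\ge 0\,,\]
strictly positive for $r>0$ (since $v_0(r)>v_0^{\min}$ away from $r=0$), and $d'(r)=\sqrt{r^2/4-v_0^{\min}}$ for $r\ge a$, since $\mathfrak v_0$ vanishes outside $\overline{D(0,a)}$.

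For \eqref{eq:S0-min}, set $F(u):=d(u)+d(L+u)$. Then $F'(u)=d'(u)+d'(L+u)>0$ on $(0,a)$, since both summands are strictly positive (the second one because $L+u>L>2a\ge a$). Hence $F$ is strictly increasing, so
\[\inf_{0<u<a}F(u)=\lim_{u\to 0^+}F(u)=d(0)+d(L)=S_0\,,\]
using $d(0)=0$ together with \eqref{eq:S0}.

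For \eqref{eq:Sa-min}, set $G(u):=d(u)+d(L-u)$. Since $L-u>L-a>a$ for $u\in(0,a)$, we have $d'(L-u)=\sqrt{(L-u)^2/4-v_0^{\min}}$, and a direct computation gives
\[d'(u)^2-d'(L-u)^2=v_0(u)-\tfrac{L(L-2u)}{4}\,.\]
Under the hypothesis $v_0<L(L-2a)/4$, combined with $L-2u>L-2a>0$ for $u<a$, this is strictly negative on $(0,a)$. Thus $d'(u)<d'(L-u)$, so $G'<0$ and $G$ is strictly decreasing on $(0,a)$. Therefore
\[\inf_{0<u<a}G(u)=\lim_{u\to a^-}G(u)=d(a)+d(L-a)\,,\]
and splitting $d(L-a)=d(a)+\int_a^{L-a}\sqrt{\rho^2/4-v_0^{\min}}\,d\rho$ (using $v_0\equiv 0$ on $[a,+\infty)$) matches precisely the expression \eqref{eq:Sa-*} for $S_a$.

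The whole argument is just monotonicity via a first-derivative sign check, so there is no substantive obstacle. The only subtle point is understanding the role of the assumption $v_0<L(L-2a)/4$ in the second identity: it is exactly what is needed to prevent $G$ from developing an interior critical point on $(0,a)$; without it, $G$ could achieve a strictly smaller minimum inside the interval, and the variational characterization of $S_a$ would genuinely fail.
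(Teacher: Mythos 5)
Your proof is correct and follows essentially the same route as the paper: in both cases one studies the sign of the derivative of $u\mapsto d(u)\pm d(L\mp u)$ on $(0,a)$, concluding monotonicity and reading off the infimum at the appropriate endpoint. The only cosmetic difference is that for \eqref{eq:Sa-min} you compare $d'(u)^2$ with $d'(L-u)^2$ directly (a clean difference-of-squares computation), whereas the paper checks $\varphi_*'(0)<0$, $\varphi_*'(a)<0$ and rules out interior zeros of $\varphi_*'$ using the same inequality $v_0(u)<L(L-2a)/4\le L(L-2u)/4$; these are the same idea.
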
}
\begin{proof}~\\
{\bf Proof of \eqref{eq:S0-min}.\\}
The function \[(0,a)\ni  u \mapsto \psi_*(u)=\int_0^u\sqrt{\frac{\rho^2}{4}+v_0(\rho)-v_0^{\min}}\,d\rho+\int_0^{L+u}\sqrt{\frac{\rho^2}{4}+v_0(\rho)-v_0^{\min}}\,d\rho\]
satisfies (for $0<u<a$),
\[ \psi_*'(u)=\sqrt{\frac{u^2}{4}+v_0(u)-v_0^{\min}}+\sqrt{\frac{(L+u)^2}{4}-v_0^{\min}}>0\,.\]
Hence, it is monotone increasing and
\[\min_{0<u<a}\psi_*(u)=\psi_*(0)=S_0\]
where $S_0$ is introduced in \eqref{eq:S0}.\medskip\\
{\bf Proof of  \eqref{eq:Sa-min}.\\}
Consider the function  
\begin{equation}\label{eq:phi*}
(0,a) \ni u \mapsto \varphi_*(u)=\int_0^u  \sqrt{\frac{\rho^2}{4}+v_0(\rho)-v_0^{\min}}\,d\rho+\int_0^{L-u}\sqrt{\frac{\rho^2}{4}+v_0(\rho)-v_0^{\min}}\,d\rho\,.\end{equation}
Notice  that, for $u\in(0,a)$ and $a<\frac{L}2$, we have $a<L-a<L-u$ and
\[\varphi_*'(u)=\sqrt{\frac{u^2}{4}+v_0(u)-v_0^{\min}}-\sqrt{\frac{(L-u)^2}{4}-v_0^{\min}}\,,\]
with 
\[\varphi_*'(0)<0,\quad \varphi_*'(a)<0\]
and
\[\varphi_*'(u)=0{\rm~iff}~v_0(u)=\frac{L(L-2u)}{4}\geq \frac{L(L-2a)}{4}>0\,.\]
Consequently, $\varphi_*'(u)$ can not vanish on $(0,a)$ if we know that
$v_0(u) <\frac{L(L-2a)}{4}$. Under this  assumption, we have
\[\varphi_*'<0~{\rm on~}(0,a)\]
and
\[ \varphi_*(a)<\varphi_*(u)<\varphi_*(0)\,,\]
thereby proving \eqref{eq:Sa-min}. 
\end{proof}

We now consider the constant $\hat  S$ introduced in \eqref{eq:hat-S}.
 Recall  that, by  \eqref{eq:hat-S}, 
\[\hat  S=\min_{r\in[0,a]}g_0(r)\]
where
\[g_0(r)=\frac{Lr}2+d(L-r )+d(r)\,,\] 
and $d(r)$ is introduced in \eqref{eq:d(r)}.
\begin{proposition}\label{prop:hat-S} 
 We have
\begin{equation}\label{eq:S0-Sa-hat-S}
S_a< \hat S<\min\left( S_0,S_a+\frac{La}{2}\right)\,,
\end{equation}
and
if $g_0(r_0)=\hat S$, then $0<r_0<a$.
Moreover, if $v_0'\geq -\frac{L}4$,  then $r_0$ is unique.
\end{proposition}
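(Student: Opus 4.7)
The plan is to reduce each assertion to direct algebraic manipulation of $g_0$ and its first two derivatives, using that $\mathfrak v_0\equiv 0$ outside $[0,a]$. First I would compute the boundary values of $g_0$: for $r\in[0,a]$ the splitting $d(L-r)=d(a)+\int_a^{L-r}\sqrt{\rho^2/4+|v_0^{\min}|}\,d\rho$ immediately yields $g_0(0)=d(L)=S_0$ and $g_0(a)=La/2+S_a$. Next I would compute the one-sided derivatives: since $d'(0)=0$, one gets $g_0'(0)=L/2-\sqrt{L^2/4+|v_0^{\min}|}<0$, and $g_0'(a)=L/2+\sqrt{a^2/4+|v_0^{\min}|}-\sqrt{(L-a)^2/4+|v_0^{\min}|}>0$ (after squaring, the latter reduces to the obvious inequality $La/2+L\sqrt{a^2/4+|v_0^{\min}|}>0$). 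Consequently $g_0$ attains its minimum $\hat S$ at some $r_0\in(0,a)$, with $\hat S<\min(S_0,S_a+La/2)$.

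For the lower bound $\hat S>S_a$, I would substitute the same splitting of $d(L-r_0)$ into $g_0(r_0)$, cancel against $S_a=2d(a)+\int_a^{L-a}\sqrt{\rho^2/4+|v_0^{\min}|}\,d\rho$, and apply the change of variable $\sigma=L-\rho$ in the surviving integral. This yields
\[\hat S-S_a=\frac{Lr_0}{2}+\int_{r_0}^{a}\!\Big(\sqrt{\tfrac{(L-\rho)^2}{4}+|v_0^{\min}|}-\sqrt{\tfrac{\rho^2}{4}+v_0(\rho)+|v_0^{\min}|}\Big)\,d\rho.\]
Since $(L-\rho)^2>\rho^2$ on $(r_0,a)\subset(0,L/2)$ and $v_0\le 0$, the integrand is strictly positive, and with $Lr_0/2>0$ this delivers the desired strict bound.

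For the uniqueness claim, the approach is to show that $g_0'$ is strictly increasing on $(0,a)$, hence has a unique zero. Using $v_0(L-r)=0$ for $r\in(0,a)$, direct differentiation gives
\[g_0''(r)=\frac{L-r}{4\sqrt{(L-r)^2/4+|v_0^{\min}|}}+\frac{r+2v_0'(r)}{4\sqrt{r^2/4+v_0(r)+|v_0^{\min}|}}.\]
The first summand is always positive; the second is nonnegative whenever $v_0'(r)\ge -r/2$, so the only difficulty lies in the region where $v_0'(r)<-r/2$. There I would invoke the hypothesis $v_0'\ge-L/4$ to bound $r+2v_0'(r)\ge r-L/2$, clear denominators, and square to reduce $g_0''(r)>0$ to a polynomial inequality in $r$, $v_0(r)$, and $|v_0^{\min}|$. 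The hard part will be precisely this final verification: the positive contribution of the $(L-r)$-term must dominate the possibly negative $v_0'$-term uniformly in $r\in(0,a)$, which requires careful bookkeeping combining $v_0\le 0$, the geometric constraint $r<a<L/2$, and the bound $v_0'\ge-L/4$.
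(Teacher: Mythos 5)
Your treatment of the boundary values $g_0(0)=S_0$ and $g_0(a)=\frac{La}{2}+S_a$, the signs $g_0'(0)<0$ and $g_0'(a)>0$, and the resulting deductions $r_0\in(0,a)$ and $\hat S<\min\left(S_0,S_a+\frac{La}{2}\right)$ agree with the paper's argument. For $\hat S>S_a$, the paper simply invokes the variational identity $S_a=\min_{0\le r\le a}\big(d(r)+d(L-r)\big)$ from Proposition~\ref{prop:S0-Sa} to get $\hat S=g_0(r_0)\ge \frac{Lr_0}{2}+S_a>S_a$; your direct integral manipulation reaches the same strict inequality and is correct, though longer than necessary given the available lemma.

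The uniqueness part is where there is a genuine gap. You propose to prove $g_0''>0$ on the whole interval $(0,a)$ and explicitly flag the final verification as ``the hard part'' without carrying it out. In fact that stronger statement is false in general even under $v_0'\ge -\frac{L}{4}$: the second summand of $g_0''$ is $\dfrac{\frac{r}{2}+v_0'(r)}{2\sqrt{\frac{r^2}{4}+v_0(r)-v_0^{\min}}}$, and if $v_0'(r)\approx -\frac{L}{4}$ at some interior $r$ where $r$ is small and $v_0(r)$ is close to $v_0^{\min}$ (neither forbidden by the hypotheses), this summand is of order $-\dfrac{L/4-r/2}{2\sqrt{\frac{r^2}{4}+v_0(r)-v_0^{\min}}}$ and can be made arbitrarily negative, while the first summand is bounded above by $\tfrac12$. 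The essential idea, which your proposal is missing, is to establish $g_0''(r)>0$ \emph{only at the critical points} of $g_0$: when $g_0'(r)=0$ one has $\sqrt{\frac{(L-r)^2}{4}-v_0^{\min}}=\frac{L}{2}+\sqrt{\frac{r^2}{4}+v_0(r)-v_0^{\min}}$, and substituting this relation into $g_0''(r)$ collapses it to a single fraction whose numerator's sign is governed precisely by $v_0'\ge -\frac{L}{4}$. Since every zero of $g_0'$ is then a strict local minimum, and $g_0'$ changes sign once over $[0,a]$, the zero is unique. You should restrict your second-derivative analysis to critical points and exploit the critical-point equation; the uniform positivity you were aiming for cannot be salvaged by the bookkeeping you sketch.
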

\begin{proof}
We have, for $0\leq r\leq  a$,
\[ g_0'(r)=\frac{L}2-\sqrt{\frac{(L-r)^2}{4}-v_0^{\min} }+\sqrt{ \frac{r^2}{4}+v_0(r)-v_0^{\min}}\,.\]
We observe that
\[\begin{aligned}
g_0'(0)&=\frac{L}2-\sqrt{\frac{L^2}{4}-v_0^{\min} }<0\,,\\
g'_0(a)&=\frac{L}{2}-\sqrt{\frac{L(L-2a)}{4}+\frac{a^2}{4}-v_0^{\min} }+\sqrt{ \frac{a^2}{4}-v_0^{\min}}\\
&\geq \frac{L}{2}-\sqrt{\frac{L(L-2a)}{4}}>0\,.
\end{aligned}\]
So $g_0$ has a minimum $r_0$ and  $r_0\in(0,a)$. Consequently,
\[\hat S=g_0(r_0)<g_*(0)=S_0\,,\quad \hat S<g_0(a)=S_a+\frac{La}{2}\,,\]
and, by \eqref{eq:Sa-min},
\[\hat S=g_0(r_0)\geq S_a+\frac{Lr_0}{2}>S_a\,.\]
 Finally, we observe that
\[g_0''(r)=\frac{1}{2}\frac{\frac{L-r}{2}}{\sqrt{\frac{(L-r)^2}{4}-v_0^{\min}}}+\frac12\frac{\frac{r}{2}+2v_0'(r)}{\sqrt{\frac{r^2}{4}+v_0(r)-v_0^{\min}}} \,,\]
and if furthermore $g'_0(r)=0$, we have
\[ 
g_0''(r)=\frac{\left(\frac{L}{2}+2v_0'(r)\right)\sqrt{\frac{r^2}{4}+v_0(r)-v_0^{\min}}+\frac{Lr}{4} }{2\left(\frac{L}{2}+\sqrt{\frac{r^2}{4}+v_0(r)-v_0^{\min}}\right)\sqrt{\frac{r^2}{4}+v_0(r)-v_0^{\min}}}\,,\]
which is positive  if $\frac{L}{2}+2v_0'(r)\geq0\,$.
\end{proof}

 For technical reasons  (see Proposition~\ref{prop:hop*}), we need  to minimize,  with respect to $r\in[0,a]$, the following function
\begin{equation}\label{eq:g(r,ep)}
g(r,\varepsilon)=\frac{(1-\varepsilon)Lr}2+d(\sqrt{(L-r)^2+2\varepsilon Lr}\,)+d(r)\,,
\end{equation} 
  where $\varepsilon\in(0,1]$ is fixed and $d(\cdot)$ is introduced in \eqref{eq:d(r)}.  
 
 For all $\varepsilon\in[0,1]$, we  set 
\[S(\varepsilon):=\inf_{0<r<a}g(r,\varepsilon)\,,\]
and notice that $\hat  S=S(0)$ (see Proposition~\ref{prop:hat-S}).
\begin{proposition}[Optimizing w.r.t. $\varepsilon$]\label{prop:g(r,ep)}
 We have $\hat S=\inf\limits_{\varepsilon\in(0,1]}S(\varepsilon)$ and 
there exists  $\varepsilon_0\in(0,1)$ such that, for all $\varepsilon\in(0,\varepsilon_0)$,  
\[S(\varepsilon) =g(r_\varepsilon,\varepsilon)~{\rm with~}0<r_\varepsilon<a\,.\]
 Furthermore, if $v_0'\geq -\frac{L}4$, $r_\varepsilon$ is unique and satisfies $\lim\limits_{\varepsilon\to0}r_\varepsilon=r_0$, where $r_0$ is introduced in Proposition~\ref{prop:hat-S}
\end{proposition}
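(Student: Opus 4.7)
The plan is to treat $g(\cdot,\varepsilon)$ as a one-parameter perturbation of $g_0=g(\cdot,0)$, use monotonicity in $\varepsilon$ to identify $\hat S$ with $\inf_{\varepsilon\in(0,1]}S(\varepsilon)$, and use endpoint sign analysis of $\partial_r g$, combined (for uniqueness) with a second-derivative computation at critical points, to produce and control the interior minimizer $r_\varepsilon$.

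\emph{Step 1: monotonicity in $\varepsilon$.} Since $L>2a$ and $r\in(0,a)$, the quantity $R_\varepsilon:=\sqrt{(L-r)^2+2\varepsilon Lr}$ satisfies $R_\varepsilon\ge L-r>a$ for every $\varepsilon\in[0,1]$, so $d'(R_\varepsilon)=\sqrt{R_\varepsilon^2/4-v_0^{\min}}$. A direct differentiation gives
\[\partial_\varepsilon g(r,\varepsilon)=\frac{Lr}{R_\varepsilon}\Big(\sqrt{R_\varepsilon^2/4-v_0^{\min}}-R_\varepsilon/2\Big)>0,\]
because $-v_0^{\min}>0$. Hence $\varepsilon\mapsto g(r,\varepsilon)$ is strictly increasing for each $r\in(0,a)$, so $\varepsilon\mapsto S(\varepsilon)$ is non-decreasing. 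Joint continuity of $g$ on $[0,a]\times[0,1]$ gives continuity of $S$ at $0$, and therefore $\inf_{\varepsilon\in(0,1]}S(\varepsilon)=S(0)=\hat S$.

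\emph{Step 2: existence of $r_\varepsilon\in(0,a)$.} I check the sign of $\partial_r g$ at the endpoints. At $r=0$,
\[\partial_r g(0,\varepsilon)=(1-\varepsilon)\Big(\frac{L}{2}-\sqrt{\tfrac{L^2}{4}-v_0^{\min}}\Big)<0\]
uniformly in $\varepsilon\in[0,1)$; at $\varepsilon=0$, Proposition~\ref{prop:hat-S} gives $\partial_r g(a,0)=g_0'(a)>0$, and continuity in $\varepsilon$ produces $\varepsilon_0\in(0,1)$ with $\partial_r g(a,\varepsilon)>0$ for every $\varepsilon\in(0,\varepsilon_0)$. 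For such $\varepsilon$, the continuous function $g(\cdot,\varepsilon)$ on $[0,a]$ attains its minimum at an interior critical point $r_\varepsilon\in(0,a)$, so $S(\varepsilon)=g(r_\varepsilon,\varepsilon)$.

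\emph{Step 3: uniqueness and $r_\varepsilon\to r_0$ under $v_0'\ge -L/4$.} Here I adapt the second-derivative computation from the proof of Proposition~\ref{prop:hat-S}. At any critical point of $g(\cdot,\varepsilon)$, using $\partial_r g=0$ to eliminate the $d'(R_\varepsilon)$ cross-terms rewrites $\partial_r^2 g$ with a numerator of the form
\[\Big(\tfrac{L}{2}+2v_0'(r)\Big)\sqrt{\tfrac{r^2}{4}+v_0(r)-v_0^{\min}}+\tfrac{Lr}{4}+\mathcal O(\varepsilon),\]
which is strictly positive for $\varepsilon$ small enough once $v_0'\ge -L/4$. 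This rules out saddle or maximum critical points and forces uniqueness of $r_\varepsilon$, since two interior local minima on an open interval would require a separating interior local maximum. For the convergence $r_\varepsilon\to r_0$, the non-degeneracy $g_0''(r_0)>0$ together with the implicit function theorem applied to $\partial_r g(r,\varepsilon)=0$ near $(r_0,0)$ produces a smooth branch $\varepsilon\mapsto\tilde r(\varepsilon)$ with $\tilde r(0)=r_0$, which by uniqueness coincides with $r_\varepsilon$ for small $\varepsilon$. The main technical obstacle is precisely this second-derivative computation: the mixed dependence of $g$ on $r$ through both the linear term and through $R_\varepsilon$ produces cross-terms that must be reorganized via the critical-point equation to expose the sign-controlling factor $\tfrac{L}{2}+2v_0'(r)$, in parallel with the $\varepsilon=0$ cancellation performed in Proposition~\ref{prop:hat-S}.
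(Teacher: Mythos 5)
Your Steps 1 and 2 reproduce the paper's argument with minor variations. In Step 1 you compute $\partial_\varepsilon g(r,\varepsilon)$ and get strict positivity (the paper only records $\ge 0$, which suffices), and then argue via monotonicity plus continuity of $S$ at $0$; the paper instead swaps two infima directly, $\hat S=\inf_r\inf_\varepsilon g=\inf_\varepsilon\inf_r g$, which is arguably cleaner but equivalent. In Step 2 you note the useful explicit identity $\partial_r g(0,\varepsilon)=(1-\varepsilon)\bigl(\tfrac{L}{2}-\sqrt{\tfrac{L^2}{4}-v_0^{\min}}\bigr)<0$, which is actually uniform in $\varepsilon\in[0,1)$ and thus slightly sharper than the paper's continuity-in-$\varepsilon$ appeal; for $r=a$ you revert to continuity, as the paper does. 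That part is fine.

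Step 3 is where you genuinely diverge from the paper, and it is also where a gap sits. The paper first establishes $\lim_{\varepsilon\to0}r_\varepsilon=r_0$ by a soft argument: compactness of $[0,a]$ gives subsequential limits, continuity of $g$ and the already-proved $\lim_{\varepsilon\to 0}S(\varepsilon)=\hat S$ force any limit point to be a minimizer of $g_0$, and uniqueness of $r_0$ pins every subsequence to the same limit. Only after convergence is in hand does the paper appeal to $\partial_r^2 g(r_\varepsilon,\varepsilon)>0$, which is then a purely local statement: $g_0''(r_0)>0$ plus continuity of $\partial_r^2 g$ near the single point $(r_0,0)$. Your route is global: you want $\partial_r^2 g>0$ at \emph{every} critical point of $g(\cdot,\varepsilon)$ for $\varepsilon$ small, deduce uniqueness of the critical point, and then invoke the implicit function theorem to manufacture the convergent branch. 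That is a legitimate alternative scheme, but it carries a heavier computational burden that you do not actually discharge: the assertion that, after eliminating the $d'(R_\varepsilon)$ terms via the critical-point equation, the numerator of $\partial_r^2 g$ takes the form $\bigl(\tfrac{L}{2}+2v_0'(r)\bigr)\sqrt{\tfrac{r^2}{4}+v_0(r)-v_0^{\min}}+\tfrac{Lr}{4}+\mathcal O(\varepsilon)$ uniformly over the set of critical points is exactly the hard part, and you flag it yourself ("the main technical obstacle") without performing it. For $\varepsilon>0$ the second derivative acquires extra contributions, $d''(R_\varepsilon)(R_\varepsilon')^2+d'(R_\varepsilon)R_\varepsilon''+d''(r)$ with $R_\varepsilon''\neq 0$, and the critical-point equation used to simplify it is itself $\varepsilon$-dependent, so the $\mathcal O(\varepsilon)$ claim needs a verification (including uniformity, which rests on confining critical points to a compact subinterval of $(0,a)$ via quantified endpoint sign bounds). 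The paper's compactness-first ordering avoids this computation entirely, which is why it is the cleaner path. Your argument would be complete if you supplied that computation; as written, the second-derivative expansion is a genuine gap rather than an omitted routine step.
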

\begin{proof}
Notice that
\[\begin{aligned}
\frac{\partial g}{\partial\varepsilon}(r,\varepsilon)&=-\frac{Lr}{2}+\frac{Lr}{\sqrt{(L-r)^2+2\varepsilon Lr}}
\sqrt{\frac{(L-r)^2+2\varepsilon Lr}4-v_0^{\min}}\\
&\geq -\frac{Lr}{2}+\frac{Lr}{\sqrt{(L-r)^2+2\varepsilon Lr}}\sqrt{\frac{(L-r)^2+2\varepsilon Lr}4}\\
&=0\,.\end{aligned}\]
So 
\[\inf_{0<\varepsilon\leq 1}g(r,\varepsilon)=g(r,0)=\frac{Lr}{2}+d(L-r)+d(r)\,.\]
Consequently
\begin{equation}\label{eq:hat-S=inf}
\hat S=\inf_{r\in[0,a]}g(r,0)=\inf_{r\in[0,a]}\left(\inf_{\varepsilon\in(0,1]}g(r,\varepsilon)\right)=\inf_{\varepsilon\in(0,1]} S(\varepsilon)\,.
\end{equation}
We have seen in the proof of Proposition~\ref{prop:hat-S} that
\[ \frac{\partial g}{\partial r}(0,0)<0\quad{\rm and}\quad \frac{\partial g}{\partial r}(a,0)>0\,.\]
By continuity of $\frac{\partial g}{\partial r}(r,\varepsilon)$ w.r.t. $\varepsilon$, we know that, for $\varepsilon$ sufficiently small
\[ \frac{\partial g}{\partial r}(0,\varepsilon)<0\quad{\rm and}\quad \frac{\partial g}{\partial r}(a,\varepsilon)>0\,.\]
This yields that every minimum $r_\varepsilon$ of $g(r,\varepsilon)$ is in $(0,a)$.

To study the behavior of $r_\varepsilon$  as $\varepsilon\to0$, we start by noticing that
\begin{equation}\label{eq:cont-S-ep}
\lim_{\varepsilon\to0}S(\varepsilon)=\hat S\,.
\end{equation}
In fact, by \eqref{eq:hat-S=inf}, for every $\delta>0$, there exists $\varepsilon_\delta$ such that
\[\hat S\leq S(\varepsilon_\delta)\leq \hat S+\delta\]
and by the monotnicity of $g(r,\cdot)$ we get
\[\forall\,\varepsilon\in(0,\varepsilon_\delta),\quad \hat S\leq S(\varepsilon)\leq S(\varepsilon_0)\leq \hat S+\delta\]
which proves \eqref{eq:cont-S-ep}.
 
Now, consider a sequence $(\varepsilon_n)_{n\geq 1}\subset \R_+$ that converges to $0$ such that $\lim\limits_{n\to+\infty}r_{\varepsilon_n}=r_0\in[0,a]$. Then
\[S(\varepsilon_n)=g_{\varepsilon_n}(r_{\varepsilon_n},\varepsilon_n)\to g(r_0,0)\]
thereby, in light of \eqref{eq:cont-S-ep},  $g(r_0,0)=\hat S$. Thus, by Proposition~\ref{prop:hat-S}, we have  $0<r_0<a$. If moreover  $v_0'\geq -\frac{L}4$, then $r_0$  is the unique minimum of $g_0(\cdot):=g(\cdot,0)$, hence $\lim\limits_{\varepsilon\to0}r_\varepsilon=r_0$\,.  By a continuity argument, for $\varepsilon$ small enough,  $\frac{\partial^2 g}{\partial r^2}(r_\varepsilon,\varepsilon)>0$,
hence the uniqueness of the   minimum of $g(\cdot,\varepsilon)$.
\end{proof}
\section{Bounds on the eigenvalue splitting}\label{sec:hopping}
\subsection{The hopping coefficient}
In \cite{FSW}, the following term has been introduced (\emph{the hopping coefficient})
\begin{equation}\label{eq:hat-w}
w_{\ell,r}=\int_{D(0,a)} \mathfrak v_0(x) \mathfrak  u_h(x+z) \mathfrak u_h(x)\,e^{i\frac{Lx_2}{2h}}\,dx\,,
\end{equation}
where  $\mathfrak u_h$ is the positive ground state of the single well operator $\mathcal L_h^{\rm sw}$  (see Theorem~\ref{thm:main0}), and $z=(L,0)$. In the framework of \cite{HeSj1} (see also \cite{He88}),  $w_{\ell,r}$ can be derived    by a reduction to an interaction matrix.

Assuming the condition \eqref{assFSW}, the following holds by \cite[Eq.~(1.10)~\&~(1.12)]{FSW}
\begin{equation}\label{eq:tun-w}
e_2^{\mathfrak v_0}(h)-e_1^{\mathfrak v_0}(h)\underset{h\to0}{\sim} 2|w_{\ell,r}| \,,
\end{equation}
so the magnitude of the tunneling effect is given by $w_{\ell,r}$.

 Estimating the hopping coefficient $w_{\ell,r}$ being interesting on its own \cite{SW}, the  sharp estimates of  $w_{\ell,r}$ which we provide  in Propositions~\ref{prop:hat-w}, \ref{prop:w} and in Sec.~\ref{sec:asy} do not require the assumption $L>4(a+\sqrt{|v_0^{\min}|})$ in \eqref{assFSW}.  It appears that the foregoing condition is only used to vindicate the asymptotics in \eqref{eq:tun-w}, and relaxing it  is definitely an interesting task.

 As pointed  out in \cite{FSW}, the oscillatory complex phase in the expression of $w_{\ell,r}$ is behind the difficulties in dealing with this term. Fortunately,  as observed in \cite{FSW}, it is  possible to rule  out the oscillatory complex phase  by using special functions.   Recall that the ground state $\mathfrak u_h(x)=u_h(|x|)$ is a radial function and $a=a(\mathfrak v_0)$ is introduced in \eqref{eq:def-a}.
\begin{proposition}[Bounds on $ w_{\ell,r}$]\label{prop:hat-w}~
 We  have $ w_{\ell,r} \in\R$, and there exists $c_2>0$ such that, for all $h\in(0,1]$, we have
\[|w_{\ell,r}|\leq c_2\int_0^a |v_0(r)| u_h(L-r)u_h(r)rdr\,.\]
Furthermore, if $v_0\leq  0$,  then  for all $\varepsilon\in(0,1]$, there exists $c_\varepsilon>0$ such  that, for all $h\in(0,1]$, we have
\[ 
|w_{\ell,r}|\geq  c_\varepsilon \int_0^a e^{-\frac{(1-\varepsilon)Lr}{2h}}  |v_0(r)|u_h\big(\sqrt{(L-r)^2+2\varepsilon Lr}\,\big)u_h(r)rdr.\]
\end{proposition}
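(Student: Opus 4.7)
The plan is to exploit the radial symmetry of $\mathfrak u_h$ by passing to polar coordinates. Writing $x=r(\cos\theta,\sin\theta)$ with $r\in[0,a]$, the hopping coefficient factors as
\[ w_{\ell,r} = \int_0^a v_0(r)\,u_h(r)\,r\,I(r)\,dr,\qquad I(r) = \int_0^{2\pi} u_h\bigl(\rho(\theta)\bigr)\,e^{iLr\sin\theta/(2h)}\,d\theta, \]
where $\rho(\theta)=\sqrt{r^2+L^2+2Lr\cos\theta}$. The reality of $w_{\ell,r}$ follows immediately from the involution $\theta\mapsto-\theta$: the radial argument $\rho$ depends only on $\cos\theta$ and is invariant while $\sin\theta$ flips sign, so the imaginary part of the angular integrand is odd and integrates to zero.

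For the upper bound I would apply the triangle inequality in $\theta$ and bound $u_h(\rho(\theta))\leq C\,u_h(L-r)$ uniformly in $\theta$ and $r\in[0,a]$. This follows from the WKB profile $u_h(\rho)\sim h^{-1/2}\mathfrak a_0(\rho)e^{-d(\rho)/h}$ of Proposition~\ref{prop:WKB-app}: since $\rho(\theta)\geq L-r$ and the Agmon distance $d$ is radially increasing, the ratio $u_h(\rho(\theta))/u_h(L-r)$ is uniformly bounded on the compact annulus $\rho\in[L-a,L+a]$. Integration in $\theta$ then yields the constant $c_2=2\pi C$ in the claimed bound.

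For the lower bound, assume $v_0\leq 0$ and shift $\theta=\pi+\varphi$, so that $\rho(\varphi)=\sqrt{(L-r)^2+2Lr(1-\cos\varphi)}$ attains its minimum $L-r$ at $\varphi=0$. The challenge is to estimate the oscillatory real part $\int_0^\pi u_h(\rho(\varphi))\cos(Lr\sin\varphi/(2h))\,d\varphi$ from below. Here I would invoke the special-function identity from \cite{FSW} (in the spirit of Jacobi--Anger or Graf's addition theorem) that trades the oscillatory phase $e^{iLr\sin\varphi/(2h)}$ for a Bessel-function or contour-deformed representation without oscillation. Inserting the WKB ansatz from Theorem~\ref{thm:main0} makes the resulting integrand explicit; restricting to the angular range $\{1-\cos\varphi\leq\varepsilon\}$, on which $\rho(\varphi)\leq\sqrt{(L-r)^2+2\varepsilon Lr}$, and collecting the weight $e^{-(1-\varepsilon)Lr/(2h)}$ produced by the phase-removing identity yields the desired lower bound. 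The resulting exponential decay rate is precisely $g(r,\varepsilon)/h$ from \eqref{eq:g(r,ep)}, which ultimately delivers $\hat S/h$ after optimizing in $r\in(0,a)$ via Proposition~\ref{prop:g(r,ep)}.

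The main obstacle is this lower bound: extracting a quantitative lower bound from an oscillatory integral demands careful cancellation control, and the parameter $\varepsilon$ is needed to absorb the sub-exponential WKB errors together with the mismatch between the natural saddle of the phase-removed integrand and the crude monotonicity estimate used for $u_h(\rho)$. Ensuring that the constant $c_\varepsilon>0$ remains independent of $h$ (while allowed to degenerate as $\varepsilon\to 0$) is the delicate bookkeeping step.
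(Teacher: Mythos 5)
Your argument for the reality of $w_{\ell,r}$ (symmetry $\theta\mapsto-\theta$) is correct and even more elementary than the route in the paper, which reads reality off the Bessel-function representation. Your upper bound is also essentially right: the only caveat is that you justify $u_h(\rho(\theta))\leq C\,u_h(L-r)$ via the WKB profile of Theorem~\ref{thm:main0}, which is only an $h\to0$ statement, while the claim is for all $h\in(0,1]$. This is easily repaired (the exact representation \eqref{eq:FSW-uh}, valid for $\rho\geq a$ and all $h$, makes $u_h$ monotone decreasing there, so one can even take $C=1$); but as written your justification has a small gap.

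The lower bound, however, contains a genuine conceptual misdescription. You propose to first invoke the phase-removing identity and then ``restrict to the angular range $\{1-\cos\varphi\leq\varepsilon\}$''. These two steps do not coexist: the identity from \cite{FSW} (Eq.~\eqref{eq:int-Kh} of the paper) converts the \emph{entire} angular integral $\int_0^{2\pi}K_h\,d\theta$ into a manifestly positive integral in a new variable $t\in(0,\infty)$, with integrand $G_h(r,t)$ containing $I_0\bigl(\tfrac{Lr\sqrt{t(t+1)}}{h}\bigr)$; after this step there is no angular variable left to restrict. More to the point, the factor $e^{-(1-\varepsilon)Lr/(2h)}$ and the shifted argument $\sqrt{(L-r)^2+2\varepsilon Lr}$ are \emph{not} ``produced by the phase-removing identity'' nor by any angular cutoff. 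They arise from two further steps that your proposal does not supply: (i) the pointwise lower bound $I_0(z)\geq c_\varepsilon e^{(1-\varepsilon)z}$, valid for all $z\geq 0$, which introduces the parameter $\varepsilon$; and (ii) the elementary inequality $(1-\varepsilon)\sqrt{t(t+1)}-t\geq-\varepsilon t$, which converts the resulting $t$-integral into $u_h\bigl(\sqrt{(L-r)^2+2\varepsilon Lr}\bigr)$ times the explicit prefactor $e^{-(1-\varepsilon)Lr/(2h)}$ via the exact representation \eqref{eq:FSW-uh}. Without these two steps the crucial positivity and the precise exponential are not obtained; and a direct angular-cutoff approach on the oscillatory integral would have to face the rapid oscillation of $\cos(Lr\sin\varphi/(2h))$ outside a window of width $O(h)$, which is exactly what the Bessel rewriting is designed to avoid.
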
 
 The proof of Proposition~\ref{prop:hat-w} relies on the tricky  representation, in Lemma~\ref{lem:FSW-uh} below, of the function  $u_h$, defining the radial ground state $\mathfrak u_h$.  It is obtained in \cite{FSW}, but with the expansion of $u_h$  in Theorem~\ref{thm:main0}, we can describe the coefficients in a sharper manner than in \cite{FSW}.

With $a=a(\mathfrak v_0)$ in \eqref{eq:def-a}, $d(\cdot)$ in \eqref{eq:d-main0} and $\mathfrak a_0(\cdot)$ in \eqref{eq:uh-main0}, we introduce the two constants
\begin{equation}\label{eq:def-m-f}
\begin{aligned}
F(\mathfrak v_0)&=\frac{a}{4}\sqrt{a^2+4|v_0^{\min}|}+\frac{|v_0^{\min}|}2\ln\frac{\big(\sqrt{a^2+4|v_0^{\min}|}+a\big)^2 }{4|v_0^{\min}|}-d(a)\\
m(\mathfrak v_0)&=\mathfrak a_0(0)\sqrt{\frac{2a|v_0^{\min}|}{\pi}}\frac{\left(a^2+4|v_0^{\min}| \right)^{1/4}}{\sqrt{a^2+4|v_0^{\min}|}+a} \,.
\end{aligned}
\end{equation}
\begin{lemma}\label{lem:FSW-uh}
The function  $u_h$, defining the ground  state $\mathfrak u_h$ in Theorem~\ref{thm:main0}, has the following representation, valid for $\rho\geq a$,
\begin{equation}\label{eq:FSW-uh} 
u_h(\rho)=C_h\exp\left( -\frac{\rho^2}{4h}\right)\int_0^{+\infty} \exp\left( -\frac{\rho^2t}{2h} \right)
t^{\alpha-1}(1+t)^{-\alpha}dt\,,
\end{equation}
where
\begin{equation}\label{eq:def-alpha}
\alpha=\frac1{2h}|v_0^{\min}|-\frac{1}2\left(\sqrt{1+2v''_0(0)}-1 \right) +\mathcal O(h^{1/2})\underset{h\to0}{\sim} \frac1{2h}|v_0^{\rm min}|\,,
\end{equation}
and
\begin{equation}\label{eq:Ch}
C_h \underset{h\to0}{\sim} C_h^{\rm asy}:= \mathfrak m(v_0)  h^{-1} \exp\left(\frac{F(\mathfrak v_0)}{h}  \right)\,.
\end{equation}
\end{lemma}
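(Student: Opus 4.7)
The plan is to recognize that, on the complement of the support of $\mathfrak v_0$, the radial ODE solved by $u_h$ is of Whittaker/Kummer type, so that the $L^2$ condition forces $u_h$ to be a scalar multiple of the Tricomi confluent hypergeometric function $U(\alpha,1,\cdot)$ (which admits the displayed Euler integral representation), and then to fix this scalar by matching with the WKB profile from Theorem~\ref{thm:main0}.

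For $\rho \geq a$ the potential $\mathfrak v_0$ vanishes, so identity~\eqref{eq:Lh-rad} gives the ODE $-h^2(u_h''+\rho^{-1}u_h')+(\rho^2/4)u_h = e^{\rm sw}(h)\,u_h$. The substitution $u_h(\rho)=e^{-\rho^2/(4h)}w(\rho^2/(2h))$, with $s=\rho^2/(2h)$, reduces this to Kummer's confluent hypergeometric equation $sw''(s)+(1-s)w'(s)-\alpha w(s)=0$ with $\alpha=\tfrac12(1-e^{\rm sw}(h)/h)$; the eigenvalue expansion of Theorem~\ref{thm:main0} then produces the displayed asymptotics~\eqref{eq:def-alpha} for $\alpha$. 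Of the two linearly independent solutions, Kummer's $M(\alpha,1,\cdot)$ grows like $e^s$ at infinity (which would force $u_h$ to blow up like $e^{\rho^2/(4h)}$, incompatible with $\mathfrak u_h\in L^2(\R^2)$), while the Tricomi function $U(\alpha,1,\cdot)$ decays like $s^{-\alpha}$; hence $w$ must be proportional to $U(\alpha,1,\cdot)$. Since $\alpha>0$ for $h$ small, the Euler integral $\Gamma(\alpha)\,U(\alpha,1,s)=\int_0^\infty e^{-st}t^{\alpha-1}(1+t)^{-\alpha}\,dt$ yields exactly~\eqref{eq:FSW-uh} with $C_h$ an unspecified constant depending only on $h$.

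To identify the asymptotics of $C_h$, fix $\rho_0\geq a$ (most conveniently $\rho_0=a$) and apply Laplace's method to $I(\rho_0,h):=\int_0^\infty e^{-\rho_0^2 t/(2h)}t^{\alpha-1}(1+t)^{-\alpha}\,dt$. Writing $t^{\alpha-1}(1+t)^{-\alpha}=t^{-1}(t/(1+t))^\alpha$, the leading exponent is $\phi(t)/h$ with $\phi(t)=-\rho_0^2 t/2+(|v_0^{\min}|/2)\ln(t/(1+t))$, whose unique critical point $t_\ast$ satisfies $t_\ast(1+t_\ast)=|v_0^{\min}|/\rho_0^2$. A short computation shows $-\rho_0^2/4+\phi(t_\ast)=-G(\rho_0)$, where $G(\rho):=(\rho/4)\sqrt{\rho^2+4|v_0^{\min}|}+|v_0^{\min}|\ln\bigl((\rho+\sqrt{\rho^2+4|v_0^{\min}|})/(2\sqrt{|v_0^{\min}|})\bigr)$ is an antiderivative of $\sqrt{\rho^2/4+|v_0^{\min}|}$. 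Since $d(\rho_0)=d(a)+G(\rho_0)-G(a)$ for $\rho_0\geq a$, comparing with the WKB expansion $u_h(\rho_0)\sim h^{-1/2}\mathfrak a_0(\rho_0)\,e^{-d(\rho_0)/h}$ from item~3 of Theorem~\ref{thm:main0} causes the $\rho_0$-dependent exponentials to cancel and produces the exponential factor $\exp(F(\mathfrak v_0)/h)$ with $F(\mathfrak v_0)=G(a)-d(a)$. The prefactor $\mathfrak m(\mathfrak v_0)\,h^{-1}$ emerges from collecting the Gaussian factor $\sqrt{2\pi h/|\phi''(t_\ast)|}$, the residue $t_\ast^{-1}$ from the $t^{-1}$ factor and the $\mathcal O(1)$ correction $(t_\ast/(1+t_\ast))^{c_0}$ coming from the subleading part $c_0=-\tfrac12(\sqrt{1+2v_0''(0)}-1)$ of $\alpha$, together with $\mathfrak a_0$ evaluated at the matching point; evaluating at $\rho_0=a$ then yields the displayed form~\eqref{eq:def-m-f}.

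The main technical obstacle is the careful bookkeeping of every algebraic factor in the Laplace step, especially the contribution of the subleading part $c_0$ of $\alpha$, which is essential for the asymptotic constant $\mathfrak m(\mathfrak v_0)$ to come out correctly. A useful consistency check is that the value of $C_h$ extracted in this way must be independent of the matching point $\rho_0\geq a$; one verifies this by differentiating $\ln C_h$ in $\rho_0$ and checking that the transport equation for $\mathfrak a_0$ (Remark~\ref{rem:a0}), which on $\{\rho\geq a\}$ becomes an explicit ODE since $\mathfrak v_0$ vanishes there, produces exactly the $\rho_0$-dependence needed to cancel the variations of $|\phi''(t_\ast)|$, $t_\ast$ and $(t_\ast/(1+t_\ast))^{c_0}$.
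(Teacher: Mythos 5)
Your proposal is correct and follows essentially the same route as the paper: recognize that on $\{\rho\geq a\}$ the radial equation is of confluent hypergeometric type so that the $L^2$-admissible solution is the Tricomi function $U(\alpha,1,\cdot)$ with its Euler integral representation, read off $\alpha$ from the eigenvalue expansion in Theorem~\ref{thm:main0}, and fix $C_h$ by Laplace's method at $\rho_0=a$ matched against the WKB profile \eqref{eq:uh-main0}. The only difference is that you derive the integral representation from the Kummer ODE and the recessivity argument, whereas the paper simply cites it from \cite{FSW} (their Eq.~(2.9)); your $\rho_0$-independence consistency check is a nice addition not stated in the paper.
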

\begin{proof}
The representation in \eqref{eq:FSW-uh} is obtained in \cite[Eq.~(2.9)]{FSW}, with 
\begin{equation}\label{defalpha}
\alpha=\frac{1}{2}-\frac{1}{2h} e^{\rm sw}(h)\,.
\end{equation}
The asymptotics in \eqref{eq:def-alpha}  now follows from \eqref{eq:gs} in Theorem~\ref{thm:main0}. So we still have to determine the constant $C_h$, by matching \eqref{eq:FSW-uh} with the expansion of $u_h(\rho)$ in Theorem~\ref{thm:main0}. In fact, by  \eqref{eq:uh-main0} and \eqref{eq:d-main0}, we have 
\[ u_h(a)\underset{h\to0}{\sim} \mathfrak a_0(0) h^{-1/2}e^{-\frac{d(a)}{h}  }\,,\]
where $ \mathfrak a_0(0)$ is given in \eqref{eq:defa0}.

On the other hand,  by the method of Laplace approximation \cite[Eq.~(2.12)]{FSW}, the representation in \eqref{eq:FSW-uh} yields
\[ u_h(a)\underset{h\to0}{\sim} C_h\sqrt{\frac{2\pi h}{|v_0^{\min}|(1+2t_*(a))}} (1+t_*(a))e^{-\frac{\eta(a)}{h}}\,, \]

\[t_*(a)=\frac12\left(\sqrt{1+\frac{4}{a^2}|v_0^{\min}| } -1\right)\]
and
\[\eta(a)=\frac14(1+2t_*(a))a^2+\frac{|v_0^{\min}|}{2}\ln\left(1+\frac1{t_*(a)}\right) \,.\] 
 Consequently, we have
\[C_h\underset{h\to0}{\sim} \frac{\sqrt{|v_0^{\min}|(1+2t_*(a))}}{\sqrt{2\pi}(1+t_*(a))} h^{-1} e^{-\frac{d(a)-\eta(a)}{h}}\]
which eventually  yields \eqref{eq:Ch}. 
\end{proof}

\begin{proof}[Proof of Proposition~\ref{prop:hat-w}]
We start by expressing the integral \eqref{eq:hat-w} in polar  coordinates
\begin{equation}\label{eq:w-int}
w_{\ell,r}=\int_0^a  r\,v_0(r) u_h(r) \left(\int_0^{2\pi}K_h(r,\theta)d\theta\right)dr\,, 
\end{equation}
where
\[K_h(r,\theta)= u_h\left(\sqrt{r^2+L^2+2Lr\cos\theta}\right) e^{\frac{iLr\sin\theta}{2h}}\,.\] 
Coming  back to \eqref{eq:w-int}, the integral of $K_h$ with respect to  $\theta$ is computed in \cite[Prop.~5.1]{FSW} by using \eqref{eq:FSW-uh} as follows
\begin{equation}\label{eq:int-Kh} \int_0^{2\pi}K_h(r,\theta)d\theta =
C_h\exp\left( -\frac{r^2+L^2}{4h}\right)\int_0^{+\infty}
G_h(r,t) dt\,,
\end{equation}
where 
\begin{equation}\label{eq:def-Gh}
 G_h(r,t)=
\exp\left( -\frac{(r^2+L^2)t}{2h} \right)
t^{\alpha-1}(1+t)^{-\alpha}I_0\left(\frac{Lr\sqrt{t(t+1)}}{h}\right)
\end{equation}
and $z\mapsto I_0(z):=J_0(iz)$ is the modified Bessel's function of order $0$.\\
The advantage of the  formula in \eqref{eq:int-Kh}
 is the absence of the oscillatory  complex term and moreover, the integrand $G_h$ is a positive function.

 The function $I_0(z)$ has the following asymptotic  for large $z$ \cite[Eq. (9.3.14)]{T},   
\begin{equation}\label{eq:I0-main}
I_0(z)\underset{z\to+\infty}{\sim} \frac{e^z}{\sqrt{2\pi z}}\,.
\end{equation} 
Furthermore, $I_0$ can be represented as a convergent series with positive terms
\[ I_0(z)= \sum_{k=0}^{+\infty} \frac{(\frac 14 z^2)^k}{(k!)^2}\quad(z\in\R)\,.\]
In particular,  $I_0$ grows exponentially as follows 
\begin{equation}\label{eq:I0}
\frac{c_1}{\sqrt{2\pi z}+1}e^z\leq  I_0(z)\leq\frac{c_2}{\sqrt{2\pi z}+1} e^z\leq c_2 e^z \quad (z\in\R_+)\,,
\end{equation}
where $c_1,c_2$ are positive constants. \\   
We introduce
\begin{align*}
 F(r)&=\int_0^{+\infty}
\exp\left( -\frac{(r^2+L^2)t}{2h} \right)
t^{\alpha-1}(1+t)^{-\alpha} \exp\left(\frac{Lr\sqrt{t(t+1)}}{h}\right)dt\\
&=\int_0^{+\infty}
\exp\left( -\frac{(L-r)^2 t}{2h}\right)
t^{\alpha-1}(1+t)^{-\alpha}\exp\left(\frac{Lr}{h}\frac{\sqrt{t}}{\sqrt{t}+\sqrt{t+1}}\right)dt\,.
\end{align*}
Note that, for all $t>0$, we have
\[
 0\leq  \frac{\sqrt{t}}{\sqrt{t}+\sqrt{t+1}}\leq \frac12\]
hence
\begin{equation}\label{eq:F-2}
F(r)\leq  F_2(r)
\end{equation}
where
\[F_2(r)=e^{\frac{Lr}{2h}}\int_0^{+\infty}
\exp\left( -\frac{(L-r)^2 t}{2h}\right)
t^{\alpha-1}(1+t)^{-\alpha}dt\,.\]
Collecting \eqref{eq:int-Kh}, \eqref{eq:I0} and \eqref{eq:F-2}, we get by \eqref{eq:FSW-uh}
\begin{equation}\label{eq:K<uh}
 \int_0^{2\pi}K_h(r,\theta)d\theta \leq   c_2 u_h(L-r) \,.
\end{equation}
Let  us now bound $\int_0^{2\pi}K_h(r,\theta)d\theta$. Given an arbitrary $\varepsilon\in(0,1]$, it results from \eqref{eq:I0} that there exists a constant $c_\varepsilon>0$ such that 
\begin{equation}\label{eq:I0*}
 I_0(z)\geq c_\varepsilon e^{(1-\varepsilon)z}\quad (z\in\R_+) \,.\end{equation}
We now introduce,
\begin{align*}
 F^\varepsilon (r)&=\int_0^{+\infty}
\exp\left( -\frac{(r^2+L^2)t}{2h} \right)
t^{\alpha-1}(1+t)^{-\alpha} \exp\left(\frac{(1-\varepsilon)Lr\sqrt{t(t+1)}}{h}\right)dt\\
&=\int_0^{+\infty}
\exp\left( -\frac{(L-r)^2 t}{2h}\right)
t^{\alpha-1}(1+t)^{-\alpha}\exp\left(\frac{Lr}{h}\big((1-\varepsilon)\sqrt{t(t+1)}-t\big)\right)dt\,.
\end{align*}
Note that, for all $t>0$, we have
\[ (1-\varepsilon)\sqrt{t(t+1)}-t\geq (1-\varepsilon)t-t=-\varepsilon t\]
hence
\begin{equation}\label{eq:F-1}
F^\varepsilon(r)\geq F_1^\varepsilon(r)\end{equation}
where
\[ F_1^\varepsilon(r) =\int_0^{+\infty}
\exp\left( -\frac{\big((L-r)^2+2\varepsilon Lr\big)t}{2h}\right)
t^{\alpha-1}(1+t)^{-\alpha}dt\,.\]
Collecting \eqref{eq:int-Kh}, \eqref{eq:I0*} and \eqref{eq:F-1}, we get by \eqref{eq:FSW-uh}
\[ \int_0^{2\pi}K_h(r,\theta)d\theta\geq  c_\varepsilon e^{-\frac{(1-\varepsilon)Lr}{2h}}\, u_h\big(\sqrt{(L-r)^2+2\varepsilon Lr}\,\big) \,.\]
Recalling  that $0<r<a$, we  get further
\[ \int_0^{2\pi}K_h(r,\theta)d\theta\geq  c_\varepsilon e^{-\frac{(1-\varepsilon)La}{2h}}\,u_h\big(\sqrt{(L-r)^2+2\varepsilon Lr}\,\big) \,.\]
\end{proof}

 \subsection{WKB approximation}

Using \eqref{eq:uh-main0} and  Proposition~\ref{prop:hat-w} (with $\varepsilon=1$), we get,  
\begin{equation}\label{eq:w-N}
c_1w_{\ell,r}^{0,-}+\mathcal O(M_h^-)\leq |w_{\ell,r}|\leq  c_2w_{\ell,r}^{0,+}+\mathcal O(M_h^+)\,, 
\end{equation}
where
\begin{align*}
w_{\ell,r}^{0,+}&=  h^{-1}\int_0^a |v_0(r)| a_0(L-r)a_0(r)\,\exp\left(-\frac{d(r)+d(L-r)}{h}\right)\,r\,dr\,,\\
w_{\ell,r}^{0,-}&=  h^{-1}\int_0^a |v_0(r)| a_0(L+r)a_0(r)\,\exp\left(-\frac{d(r)+d(L+r)}{h}\right)\,r\,dr\,.
\end{align*}
and
\begin{equation}\label{eq:def-M}
\begin{aligned}
M_h^+&=\int_0^a |v_0(r)|\exp\left(-\frac{d(r)+d(L-r)}{h}\right)\,r\,dr\,,\\
M_h^-&=\int_0^a |v_0(r)|\exp\left(-\frac{d(r)+d(L+r)}{h}\right)\,r\,dr\,.
\end{aligned}
\end{equation}
The remainder terms $M_h^\pm$ are easily controlled as follows.
\begin{proposition}\label{prop:M-h}
We have
\[ M_h^+=\mathcal O(e^{-S_a/h}) \quad {\rm and}\quad M_h^-=\mathcal O(e^{-S_0/h})\,.\]
\end{proposition}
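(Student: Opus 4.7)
The plan is to reduce both bounds to the variational characterizations of $S_0$ and $S_a$ already established in Proposition~\ref{prop:S0-Sa}, since the integrands of $M_h^\pm$ are, up to the bounded factor $|v_0(r)|r$, precisely the exponentials of the functions $-\psi_*(r)/h$ and $-\varphi_*(r)/h$ that were minimized there.

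First, I would verify that the hypothesis of Proposition~\ref{prop:S0-Sa} for $S_a$ is in force. Under \eqref{assFSW} we have $\mathfrak v_0\leq 0$, while $L>2a$ gives $\frac{L(L-2a)}{4}>0$; hence $v_0(r)\leq 0<\frac{L(L-2a)}{4}$ on $(0,a)$, and Proposition~\ref{prop:S0-Sa} yields
\[
d(r)+d(L-r)\geq S_a \quad\text{and}\quad d(r)+d(L+r)\geq S_0\qquad\text{for all } r\in(0,a).
\]
(The second inequality actually follows directly from the monotonicity of $\psi_*$ in the proof of Proposition~\ref{prop:S0-Sa} and requires no extra assumption beyond $L>2a$.)

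Second, I would plug these pointwise lower bounds into the definitions of $M_h^\pm$ in \eqref{eq:def-M}, factor out the uniform exponential, and estimate
\[
M_h^+\leq e^{-S_a/h}\int_0^a |v_0(r)|\,r\,dr,\qquad M_h^-\leq e^{-S_0/h}\int_0^a |v_0(r)|\,r\,dr.
\]
Since $\mathfrak v_0\in C_c^\infty(\R^2)$, the constant $\int_0^a |v_0(r)|\,r\,dr$ is finite and independent of $h$, which immediately gives $M_h^+=\mathcal O(e^{-S_a/h})$ and $M_h^-=\mathcal O(e^{-S_0/h})$.

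There is essentially no obstacle here: the proposition is a direct corollary of Proposition~\ref{prop:S0-Sa}. The only point that needs a moment of care is confirming that the blanket assumption $\mathfrak v_0\leq 0$ from \eqref{assFSW} is strong enough to invoke the characterization \eqref{eq:Sa-min}, which it is thanks to $L>2a$.
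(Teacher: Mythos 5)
Your proof is correct and follows essentially the same route as the paper: invoke Proposition~\ref{prop:S0-Sa} for the pointwise lower bounds $d(r)+d(L-r)\geq S_a$ and $d(r)+d(L+r)\geq S_0$, then bound the integrals by $e^{-S_a/h}\int_0^a|v_0(r)|\,r\,dr$ and $e^{-S_0/h}\int_0^a|v_0(r)|\,r\,dr$ respectively. Your added check that $\mathfrak v_0\leq 0$ together with $L>2a$ guarantees the hypothesis $v_0<\frac{L(L-2a)}{4}$ is a useful clarification the paper leaves implicit.
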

\begin{proof}
By \eqref{eq:Sa-min} and \eqref{eq:S0-min},
\[\min_{0\leq r\leq a}\big( d(r)+d(L-r)\big) =S_a~{\rm and~} \min_{0\leq r\leq a}\big( d(r)+d(L+r)\big)=S_0\]
where $S_a$ and $S_0$ are  introduced in  \eqref{eq:Sa} and \eqref{eq:S0} respectively. Hence
\[M_h^+\leq e^{-S_a/h}\int_0^a |v_0(r)|  \,r\,dr \]
and
\[M_h^-\leq e^{-S_0/h}\int_0^a |v_0(r)|   \,r\,dr\,.  \]
\end{proof}
We move now to the control of the leading terms, $w_{\ell,r}^{0,\pm}$, in \eqref{eq:w-N}. 
\begin{proposition}\label{prop:w}
There exist constants $h_0,C>0$ such that, for all $h\in(0,h_0]$, we have
\[ w_{\ell,r}^{0,+}\leq Ch^{-1}e^{-S_a/h}\]
and
\[ w_{\ell,r}^{0,-}\geq \frac{h}{C} e^{-S_0/h}\,.\]
\end{proposition}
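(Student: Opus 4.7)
The plan is to dispatch the two inequalities separately, using the variational characterizations of $S_a$ and $S_0$ from Proposition~\ref{prop:S0-Sa}.

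For the upper bound, \eqref{eq:Sa-min} gives the pointwise lower bound $d(r)+d(L-r)\geq S_a$ on $[0,a]$. Pulling $e^{-S_a/h}$ out of the integral defining $w_{\ell,r}^{0,+}$ leaves a continuous, $h$-independent integrand on the compact interval $[0,a]$ (with $a_0$ smooth and positive by Remark~\ref{rem:a0}), which is bounded by an absolute constant. This immediately yields $w_{\ell,r}^{0,+}\leq Ch^{-1}e^{-S_a/h}$; no Laplace refinement is needed here.

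The lower bound on $w_{\ell,r}^{0,-}$ is genuinely local at $r=0$. By \eqref{eq:S0-min}, the phase $d(r)+d(L+r)$ on $[0,a]$ attains its minimum $S_0$ at the left endpoint, and the key asymmetry to exploit is that $d(r)$ vanishes to second order there---the integrand in \eqref{eq:d(r)} vanishes like $\rho$ near the non-degenerate minimum of $v_0$---while $d(L+r)$ has a strictly positive derivative at $r=0$. Since $v_0\equiv 0$ on $[L,L+a]$ (as $L>2a$),
\[
d(L+r)-d(L)=\alpha r+O(r^2),\qquad \alpha:=\sqrt{\tfrac{L^2}{4}-v_0^{\min}}>0.
\]
Hence, for some $\delta>0$ independent of $h$, one has $d(r)+d(L+r)\leq S_0+2\alpha r$ on $[0,\delta]$, while on the same interval the factor $|v_0(r)|\,a_0(L+r)\,a_0(r)$ is bounded below by a positive constant $c_0$ (using $v_0(0)=v_0^{\min}<0$ together with positivity and continuity of $a_0$).

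Restricting the integral defining $w_{\ell,r}^{0,-}$ to $[0,\delta]$ and inserting these bounds reduces matters to estimating $\int_0^\delta r\,e^{-2\alpha r/h}\,dr$ from below. The rescaling $s=2\alpha r/h$ converts this into $(h/(2\alpha))^2\int_0^{2\alpha\delta/h} s\,e^{-s}\,ds$, and the latter tends to $\Gamma(2)=1$ as $h\to 0$. Combining this $h^2$ gain with the prefactor $h^{-1}$ delivers the claimed $w_{\ell,r}^{0,-}\geq (h/C)e^{-S_0/h}$. The only mild point to verify is that the $O(r^2)$ remainder in the expansion of $d(L+r)$ is absorbed uniformly in $h$, but this is automatic once $\delta$ is fixed small. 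Beyond this, the whole proposition is a textbook boundary Laplace estimate, so no real obstacle is foreseen.
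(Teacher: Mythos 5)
Your argument is correct and follows essentially the same route as the paper: the upper bound comes from the pointwise inequality $d(r)+d(L-r)\geq S_a$ via Proposition~\ref{prop:S0-Sa}, and the lower bound is a boundary Laplace estimate at $r=0$ exploiting that $\psi_*(r)=d(r)+d(L+r)$ is minimized at the endpoint with nonvanishing derivative $\psi_*'(0)=\sqrt{L^2/4-v_0^{\min}}$. Your version simply spells out the Laplace step (restricting to $[0,\delta]$, linearizing the phase with a factor-of-two slack, rescaling) that the paper compresses into ``By Laplace's approximation.''
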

\begin{proof}
The bound on $ w_{\ell,r}^{0,+}$ follows in a straightforward manner, as for the bound on $M_h^+$ in Proposition~\ref{prop:M-h}.
Concerning $w_{\ell,r}^{0,-}$, 
the function $\psi_*(r)=d(r)+d(L+r)$ is monotone increasing and
\[ S_0=\psi_*(0)= \min_{0  \leq r\leq \eta}\psi_*(r),\quad \psi_*'(0)=\sqrt{\frac{L^2}{4}-v_0^{\min}}>0\,.\]
By Laplace's approximation, 
\[\begin{aligned}
w_{\ell,r}^{0,-}&\underset{h\to0}{\sim} e^{-S_0/h}\int_0^a |v_0(0)| a_0(L)a_0(0)\exp\left(-\frac{\sqrt{\frac{L^2}{4}-v_0^{\min}}\,r}{h}\right)\,r\,dr \\
&= \frac{h^2}{\frac{L^2}{4}-v_0^{\min}} e^{-S_0/h}\,.
\end{aligned}\]
\end{proof}

Collecting  \eqref{eq:w-N}, Propositions~\ref{prop:M-h} and \ref{prop:w}, we get 
\begin{proposition}\label{prop:hop}
There exist constants $h_0,\tilde C>0$ such that, for all $h\in(0,h_0]$, we have
\[ \frac{h}{\tilde C}\exp\left(-\frac{S_0}{h} \right)\leq |w_{\ell,r}|\leq \tilde Ch^{-1} \exp\left(-\frac{S_a}{h}\right)\,.\]
\end{proposition}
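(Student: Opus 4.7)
The plan is a direct synthesis of the three ingredients assembled in this subsection: the two-sided inequality \eqref{eq:w-N}, the exponential decay of the remainders in Proposition~\ref{prop:M-h}, and the sharp bounds on $w_{\ell,r}^{0,\pm}$ from Proposition~\ref{prop:w}. The strategy is to feed the last two into the first, check that the error terms do not spoil the order in $h$ of the leading contributions, and read off the two claimed estimates.

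For the upper bound, I would insert $w_{\ell,r}^{0,+} \leq C h^{-1} e^{-S_a/h}$ from Proposition~\ref{prop:w} and $M_h^+ = \mathcal{O}(e^{-S_a/h})$ from Proposition~\ref{prop:M-h} into the right-hand side of \eqref{eq:w-N}. Since $h^{-1} \geq 1$ for $h \in (0,1]$, the remainder is absorbed into the leading term, yielding $|w_{\ell,r}| \leq \tilde{C}\, h^{-1} e^{-S_a/h}$.

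For the lower bound, Proposition~\ref{prop:w} gives $c_1 w_{\ell,r}^{0,-} \geq (c_1 h/C)\, e^{-S_0/h}$, so that $|w_{\ell,r}| \geq (c_1 h/C) e^{-S_0/h} - C'' M_h^-$. This is the genuinely delicate step, and the main obstacle I anticipate: the crude bound $M_h^- = \mathcal{O}(e^{-S_0/h})$ of Proposition~\ref{prop:M-h} is \emph{not} sharp enough, as it permits an error of the same exponential order but one power of $h$ larger than the leading term. I would therefore revisit $M_h^-$ through a Laplace expansion at the boundary: the phase $\psi_*(r) := d(r) + d(L+r)$ attains its minimum $S_0$ at $r=0$ with $\psi_*'(0) = \sqrt{L^2/4 - v_0^{\min}} > 0$, and the integrand carries the vanishing factor $r$. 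A one-dimensional Laplace expansion at a boundary minimum with linear amplitude then yields the sharper estimate $M_h^- = \mathcal{O}(h^2 e^{-S_0/h})$, which is of strictly smaller order in $h$ than $(h/C) e^{-S_0/h}$. For $h$ small enough, the leading term therefore dominates and one concludes $|w_{\ell,r}| \geq (h/\tilde{C}) e^{-S_0/h}$.

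In summary, the upper bound is immediate from the stated estimates, whereas the lower bound forces one to control the error in \eqref{eq:w-N} not merely at the exponential scale, but at the level of the polynomial prefactors in $h$ --- a refinement of Proposition~\ref{prop:M-h} that follows from the standard Laplace asymptotic at a non-degenerate boundary minimum.
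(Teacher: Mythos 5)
Your plan is the same as the paper's: feed Propositions~\ref{prop:M-h} and \ref{prop:w} into the two-sided inequality \eqref{eq:w-N}. The upper bound goes through exactly as you say. The interesting point is your treatment of the lower bound, where you correctly flag an issue that the paper's one-line proof (``Collecting \eqref{eq:w-N}, Propositions~\ref{prop:M-h} and \ref{prop:w}, we get\dots'') passes over in silence. Taken literally, Proposition~\ref{prop:M-h} only gives $M_h^-=\mathcal O(e^{-S_0/h})$, while Proposition~\ref{prop:w} gives $w_{\ell,r}^{0,-}\gtrsim h\,e^{-S_0/h}$; so the naive subtraction
\[
|w_{\ell,r}|\;\geq\; \frac{c_1h}{C}\,e^{-S_0/h}\;-\;C'\,e^{-S_0/h}
\]
is vacuous for small $h$. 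Your fix --- refine $M_h^-$ by a boundary Laplace (Watson-type) estimate --- is exactly the right move: the phase $\psi_*(r)=d(r)+d(L+r)$ has its minimum $S_0$ at $r=0$ with $\psi_*'(0)=\sqrt{L^2/4-v_0^{\min}}>0$ (note $d'(0)=0$ since $v_0(0)=v_0^{\min}$), and the amplitude carries the factor $r$, so
\[
M_h^- \;\sim\; |v_0^{\min}|\,e^{-S_0/h}\int_0^\infty e^{-\psi_*'(0)r/h}\,r\,dr \;=\;\mathcal O\!\left(h^2\,e^{-S_0/h}\right),
\]
which is one power of $h$ \emph{below} the leading term $\sim h\,e^{-S_0/h}$; the subtraction then goes through for $h$ small. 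This is in fact the same Laplace computation the paper already performs inside the proof of Proposition~\ref{prop:w} for $w_{\ell,r}^{0,-}$, only applied to $M_h^-$, so the sharpened estimate is entirely within the paper's toolbox; the paper simply did not record it, and a pedantic reading of the proof of Proposition~\ref{prop:hop} would stall at exactly the spot you located. In short: correct, same route, and a genuine (if easily repaired) gap in the paper's exposition filled in.
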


By inserting the estimates in Proposition~\ref{prop:hop} into \eqref{eq:tun-w}, we finish the proof of \eqref{eq:main} in Theorem~\ref{thm:main}. So we still  have to prove \eqref{eq:main*} which follows from the following proposition.

\begin{proposition}\label{prop:hop*}
If $v_0<0$ on $D(0,a)$, then
\begin{equation}\label{eq:w-v0<0}
\liminf_{h\to0} \big(h\ln|w_{\ell,r}|\big)\geq   -\hat S \,,
\end{equation}
where $\hat S$ is introduced in \eqref{eq:hat-S}.
 \end{proposition}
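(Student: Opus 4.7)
The strategy is to combine the lower bound on $|w_{\ell,r}|$ from Proposition~\ref{prop:hat-w} with the WKB profile of $\mathfrak u_h$ from Theorem~\ref{thm:main0}, recognize the exponent as $g(r,\varepsilon)/h$, apply Laplace's method to capture a lower bound of the form $e^{-S(\varepsilon)/h}$ up to polynomial factors in $h$, and finally optimize over $\varepsilon$ using Proposition~\ref{prop:g(r,ep)}.

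Fix $\varepsilon\in(0,\varepsilon_0)$ small enough that Proposition~\ref{prop:g(r,ep)} applies, so that $g(\cdot,\varepsilon)$ admits a minimum $r_\varepsilon\in(0,a)$ with $g(r_\varepsilon,\varepsilon)=S(\varepsilon)$. By the hypothesis $v_0<0$ on $D(0,a)$, one has $|v_0(r_\varepsilon)|>0$, and by Theorem~\ref{thm:main0} the amplitude $\mathfrak a_0$ is strictly positive everywhere. Starting from Proposition~\ref{prop:hat-w}, I would substitute the uniform asymptotics $u_h(\rho)=h^{-1/2}a_0(\rho)e^{-d(\rho)/h}\bigl(1+\mathcal{O}(h)\bigr)$, which by Theorem~\ref{thm:main0} is valid uniformly on any fixed compact set. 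The relevant radii here are $r\in[0,a]$ and $\rho(r,\varepsilon):=\sqrt{(L-r)^2+2\varepsilon Lr}$, both confined to a fixed compact subset of $\R_+$, so the expansion is legitimate. Bookkeeping the exponentials yields
\begin{equation*}
|w_{\ell,r}|\;\geq\; c_\varepsilon\, h^{-1}\!\int_0^a |v_0(r)|\, a_0(r)\, a_0(\rho(r,\varepsilon))\, r\, \exp\!\left(-\frac{g(r,\varepsilon)}{h}\right)dr\,\bigl(1+o(1)\bigr),
\end{equation*}
where the exponent is exactly the function $g(r,\varepsilon)$ defined in \eqref{eq:g(r,ep)}.

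Next I would extract a lower bound from this integral by the standard Laplace argument near $r_\varepsilon$. Since the non-exponential weight $|v_0(r)|a_0(r)a_0(\rho(r,\varepsilon))\,r$ is continuous and strictly positive at $r=r_\varepsilon$, I fix any small $\eta>0$ and restrict the domain of integration to a small closed neighborhood $I_\eta\subset(0,a)$ of $r_\varepsilon$ on which $g(r,\varepsilon)\leq S(\varepsilon)+\eta$ and on which the weight is bounded below by a positive constant $c>0$. On $I_\eta$ the integrand is bounded below by $c\,\exp\bigl(-(S(\varepsilon)+\eta)/h\bigr)$, whence
\begin{equation*}
|w_{\ell,r}|\;\geq\; \frac{c\,|I_\eta|}{C}\, h^{-1}\exp\!\left(-\frac{S(\varepsilon)+\eta}{h}\right)
\end{equation*}
for $h$ small. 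Taking the logarithm, multiplying by $h$, and letting $h\to 0$ gives $\liminf_{h\to0}h\ln|w_{\ell,r}|\geq -S(\varepsilon)-\eta$. Since $\eta>0$ was arbitrary, $\liminf_{h\to0}h\ln|w_{\ell,r}|\geq -S(\varepsilon)$ for every sufficiently small $\varepsilon>0$. Finally, by Proposition~\ref{prop:g(r,ep)} one has $\lim_{\varepsilon\to0_+}S(\varepsilon)=\hat S$, so letting $\varepsilon\to0_+$ yields \eqref{eq:w-v0<0}.

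The main obstacle here is not the Laplace step, which is essentially elementary once the correct exponent has been identified, but rather the careful identification of the combined exponential weight: the lower bound in Proposition~\ref{prop:hat-w} contains an unavoidable loss of $(1-\varepsilon)Lr/(2h)$ (stemming from the use of the crude asymptotic $I_0(z)\gtrsim e^{(1-\varepsilon)z}$), and this prefactor is precisely what forces us to replace $d(L-r)$ with $d(\sqrt{(L-r)^2+2\varepsilon Lr})$ inside $g(r,\varepsilon)$. The resulting $\varepsilon$-dependent minimum $S(\varepsilon)$ is generally larger than $\hat S$, and only after the monotonicity/continuity of $S(\cdot)$ established in Proposition~\ref{prop:g(r,ep)} can one let $\varepsilon\to0$ to recover the sharp exponent $\hat S$.
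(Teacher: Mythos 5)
Your proposal is correct and follows essentially the same route as the paper: apply the $\varepsilon$-dependent lower bound of Proposition~\ref{prop:hat-w}, substitute the WKB profile of $\mathfrak u_h$ from Theorem~\ref{thm:main0} to identify the exponent $g(r,\varepsilon)$, extract a Laplace-type lower bound on a small neighborhood of the minimizer $r_\varepsilon\in(0,a)$ (where the positivity of $v_0$ and $\mathfrak a_0$ is exactly what makes this non-degenerate), and then send $\eta\to0$ followed by $\varepsilon\to0$ using $\lim_{\varepsilon\to0_+}S(\varepsilon)=\hat S$ from Proposition~\ref{prop:g(r,ep)}. The only cosmetic difference is that the paper keeps the WKB remainder as an explicit additive term $C^*_\varepsilon h$ under the integral sign rather than as a multiplicative $(1+o(1))$, but both treatments are equivalent on the fixed compact set involved.
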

\begin{proof}
We use Proposition~\ref{prop:hat-w} with $0<\varepsilon<1$ and  we replace $u_h$ by its WKB approximation using \eqref{eq:uh-main0}. Eventually we get
\begin{equation}\label{eq:w-ep}
|w_{\ell,r}|\geq c_\varepsilon w_{\ell,r}^\varepsilon-C_\varepsilon M_h^\varepsilon\,,\end{equation}
where $c_\varepsilon,C_\varepsilon>0$ are independent of $h$,
\[w_{\ell,r}^\varepsilon=h^{-1}\int_0^a  |v_0(r)|a_0\big(\sqrt{(L-r)^2+2\varepsilon Lr}\,\big)a_0(r)\exp\left(-\frac{g(r,\varepsilon)}{h}\right)\,r\,dr\,,\]
\[M_h^\varepsilon= \int_0^a  |v_0(r)| \exp\left(-\frac{g(r,\varepsilon)}{h}\right)\,r\,dr\]
and $g(r,\varepsilon)$ is  introduced in \eqref{eq:g(r,ep)}.  We can  rewrite \eqref{eq:w-ep} as follows
\begin{equation}\label{eq:w-ep*}
h|w_{\ell,r}|\geq c_\varepsilon\int_0^a  |v_0(r)|\big(a_0\big(\sqrt{(L-r)^2+2\varepsilon Lr}\,\big)a_0(r)-C^*_\varepsilon h\big)\exp\left(-\frac{g(r,\varepsilon)}{h}\right)\,r\,dr
\end{equation}
where $C_\varepsilon^*=C_\varepsilon/c_\varepsilon$.

Now we assume that  $0<\varepsilon<\varepsilon_0$, where  $\varepsilon_0$ is the constant in Proposition~\ref{prop:g(r,ep)}. Pick $r_\varepsilon\in(0,a)$ such that 
\[S(\varepsilon):=\inf_{0<r<a}g(r,\varepsilon)=g(r_\varepsilon,\varepsilon)\,.\]
We choose $\eta_0\in(0,a)$ sufficiently small such that, for all $\eta\in(0,\eta_0)$ and  $r\in I_\eta:=(-\eta+r_\varepsilon,r_\varepsilon+\eta)\subset(0,a)$, we have
\[g(r,\varepsilon)\leq S(\varepsilon) +m_{\varepsilon} \eta\]
where $m_\varepsilon$ is independent of $\eta$. By the same considerations, we have (after choosing $\eta_0,h_0$ small and taking $\eta\in(0,\eta_0)$ and $h\in(0,h_0)$)
\[a_0\big(\sqrt{(L-r)^2+2\varepsilon Lr}\,\big)a_0(r)-C^*_\varepsilon h>m_\varepsilon^*>0\]
where $m_\varepsilon$ is independent of $h$ and $\eta$.

Now we write,
\[ |h\,w_{\ell,r}^\varepsilon|\geq   K_\varepsilon(\eta) \exp\left(-\frac{S(\varepsilon)+m_\varepsilon\eta}{h}\right) \,,
\]
where 
\[K_\varepsilon(\eta)=c_\varepsilon m_\varepsilon^*\int_{I_\eta}|v_0(r)|\,r\,dr\,.\]
The hypothesis $v_0<0$ in $D(0,a)$ ensures that $K_\varepsilon(\eta)>0$. Consequently
\[\liminf_{h\to0}\big(h\ln|h w_{\ell,r}|\big)\geq -S(\varepsilon)+m_\varepsilon\eta\]
Sending  $\eta$ to $0$ then $\varepsilon$ to $0$, we get 
\[\liminf_{h\to0}\big(h\ln|h  \,w_{\ell,r}|\big)\geq -\hat S\,,\]
where we used that $\hat S=S(0)$ by Proposition~\ref{prop:hat-S}.
\end{proof}

\begin{remark}\label{rem:extension}  
The upper bound in Proposition~\ref{prop:hop} continues to  hold if we relax the assumption on the sign of $\mathfrak v_0$ and assume instead  that 
\[ v_0< \frac{L(L-2a)}4\]
which  ensures the validity  of \eqref{eq:Sa-min}. 
\end{remark}

\begin{remark}[Reduction to an interaction matrix]\label{rem:extension*} 
In \cite{FSW}, the asymptotics  \eqref{eq:tun-w} holds for  $v_0\leq 0$ and $L>4(a+\sqrt{|v_0^{\min}|})$. We can derive \eqref{eq:tun-w} by using  the approach of reduction to  an interaction matrix like in \cite{HeSj1} (see \cite{He88} or \cite{DS}).    That is the subject of the work \cite{HKS}  (see also \cite{FMR}) which yields
\begin{equation}\label{eq:tun-w*}
e_2^{\mathfrak v_0}(h)-e_1^{\mathfrak v_0}(h)=2|w_{\ell,r}|+\mathcal O(e^{-2\tilde S/h})\,,
\end{equation}
where $\tilde S$ is a  constant satisfying
$0<\tilde S<2\hat S$ and \(\hat S\) is introduced in \eqref{eq:hat-S}.
\end{remark}

\section{The tunneling asymptotics}\label{sec:asy}
 
 \subsection{Towards the proof  of the main theorem.}

Recall the  constant  $a=a(\mathfrak v_0)$ in \eqref{eq:def-a} and let  us assume that $\mathfrak  v_0<0$ in $D(0,a)$ (i.e. $v_0(r)<0$ for $0\leq r<a$). With Proposition~\ref{prop:hop*} in hand, we can compute the leading term in the hopping coefficient $w_{\ell,r}$ introduced in \eqref{eq:hat-w}. This will follow through a sequence of reductions justified in the  following lemmas.

We  will make use of the following consequence of Proposition~\ref{prop:hop*}. Recall the constant $\hat S$ in \eqref{eq:hat-S}.  Then, by Proposition~\ref{prop:hop*},  we have
\begin{equation}\label{eq:lb-w}
\forall\,p\in\R,\,\forall\,A>\hat S,~ h^{p}e^{-A/h} \underset{h\to0}{=}o(|w_{\ell,r}|)\,.
\end{equation}
Moreover, for all $\eta\in(0,a)$, let us introduce 
\[\mathcal W_1(\eta):= C_h\int_\eta^a r|v_0(r)|u_h(r)\exp\left( -\frac{r^2+L^2}{4h}\right)\left(\int_{\eta}^{+\infty} G_h(r,t)\,dt\right)dr \,,\]
where $C_h$  is introduced in Lemma~\ref{lem:FSW-uh} and $G_h$ is introduced in \eqref{eq:def-Gh}. The next lemma establishes that $|w_{\ell,r}|$ can be approximated  by $\mathcal W_1(\eta)$ uniformly with respect to $\eta\in(0,\eta_0]$, where $\eta_0$ is a sufficiently small constant.
\begin{lemma}\label{lem:exp-w}
 There exist constants $\eta_0\in(0,a)$, $A(\eta_0)>\hat S$ and  $h_0>0$ such that, for all $\eta\in(0,\eta_0]$ and $h\in(0,h_0]$, we have
\[ \big| |w_{\ell,r}|-\mathcal W_1(\eta) \big|\leq e^{-A(\eta_0)/h}\,.\]
In particular, for all $\eta\in(0,\eta_0]$,
\[ |w_{\ell,r}|\underset{h\to0}{\sim} \mathcal W_1(\eta)\,.\]
\end{lemma}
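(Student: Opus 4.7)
The plan is to decompose $|w_{\ell,r}| - \mathcal{W}_1(\eta)$ into two non-negative pieces — (A) the integral over $r\in(0,\eta)$ with $t\in(0,+\infty)$, and (B) the integral over $r\in(\eta,a)$ with $t\in(0,\eta)$ — and bound each by $e^{-A(\eta_0)/h}$ for some $A(\eta_0)>\hat S$ uniformly in $\eta\in(0,\eta_0]$. Non-negativity of both pieces is immediate because $v_0\le 0$ (so $|w_{\ell,r}|$ coincides with the full integrand with $|v_0|$ in place of $v_0$) while $C_h$, $u_h$, and $G_h$ are all positive. The asymptotic equivalence then follows because, by \eqref{eq:lb-w}, $|w_{\ell,r}|$ is not smaller than any $e^{-A/h}$ with $A>\hat S$.

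For (A), the trick is to undo the Laplace reduction: the inner $t$-integral combined with $C_h\exp(-(r^2+L^2)/(4h))$ reassembles, via \eqref{eq:int-Kh}, into $\int_0^{2\pi}K_h(r,\theta)\,d\theta$, which by \eqref{eq:K<uh} is dominated by $c_2 u_h(L-r)$. Inserting the WKB upper bound from Theorem~\ref{thm:main0} for both $u_h(r)$ and $u_h(L-r)$ reduces (A) to a quantity of the form $Ch^{-1}\int_0^\eta r|v_0(r)|e^{-(d(r)+d(L-r))/h}\,dr$. The derivative of $r\mapsto d(r)+d(L-r)$ at $r=0$ equals $-\sqrt{L^2/4-v_0^{\min}}<0$, so this phase is strictly decreasing near $0$ and its minimum on $[0,\eta]$ is $d(\eta)+d(L-\eta)$, which converges to $d(L)=S_0$ as $\eta\to 0$. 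Since $\hat S<S_0$ by Proposition~\ref{prop:hat-S}, we fix $\eta_0$ so that $A_1:=d(\eta_0)+d(L-\eta_0)>\hat S$, and (A)$\,\le Ch^{-1}e^{-A_1/h}$ as required.

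For (B) the $r$-integration is over a set bounded away from $0$, so the delicate point is the short-$t$ tail of $G_h$. Using $I_0(z)\le e^z$ and $(1+t)^{-\alpha}\le 1$, one has $G_h(r,t)\le t^{\alpha-1}e^{Lr\sqrt{t(t+1)}/h}$; for $\alpha>1$ this integrand is monotone increasing in $t\in(0,\eta]$, so $\int_0^\eta G_h(r,t)\,dt\le \eta^{\alpha}e^{La\sqrt{\eta(1+\eta)}/h}$. The essential input from Lemma~\ref{lem:FSW-uh} is that $\alpha\sim|v_0^{\min}|/(2h)$, so $\alpha\ln\eta$ contributes a term of size $\tfrac12|v_0^{\min}|\ln\eta$ to the $1/h$-rate, which can be driven arbitrarily negative by shrinking $\eta_0$. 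Using the crude bounds $C_h\lesssim h^{-1}e^{F(\mathfrak v_0)/h}$ from \eqref{eq:Ch} and $u_h(r)\le Ch^{-1/2}$, the piece (B) is bounded by $Ch^{-5/2}\exp\bigl((F(\mathfrak v_0)-L^2/4+La\sqrt{\eta(1+\eta)}+\tfrac12|v_0^{\min}|\ln\eta)/h+o(1/h)\bigr)$, so choosing $\eta_0$ sufficiently small gives (B)$\,\le e^{-A_2/h}$ for some $A_2>\hat S$. Setting $A(\eta_0):=\min(A_1,A_2)$ completes the argument.

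The main obstacle is piece (B): one must verify that the super-exponential decay $\eta^\alpha$ coming from the largeness of $\alpha$ simultaneously beats the exponential blow-up of the normalizing constant $C_h$ and the competing exponential $e^{Lr\sqrt{t(t+1)}/h}$, and moreover that the remaining rate strictly exceeds $\hat S$. This hinges on Lemma~\ref{lem:FSW-uh} for the leading behavior of $\alpha$ and on Proposition~\ref{prop:hat-S} for the strict inequality $\hat S<S_0$; once these two pieces are in hand, the proof is essentially a careful calibration of $\eta_0$.
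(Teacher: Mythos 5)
Your decomposition into the two error pieces is exactly the paper's $\mathcal R_1(\eta)$ and $\mathcal R_2(\eta)$, and your treatment of piece (A) — reassembling the $t$-integral into $\int_0^{2\pi}K_h\,d\theta$, bounding by $c_2 u_h(L-r)$ via \eqref{eq:K<uh}, then using WKB and the monotone decrease of $\varphi_*(r)=d(r)+d(L-r)$ to land on a rate converging to $S_0>\hat S$ — is the same as the paper's. Where you diverge is piece (B). The paper again undoes the Laplace reduction: it bounds $I_0(z)\le c_2 e^z\le c_2e^{c_0\sqrt\eta/h}$ on the range $0<t<\eta$, enlarges $\int_0^\eta$ to $\int_0^\infty$, and reassembles the remaining $t$-integral via \eqref{eq:FSW-uh} into $C_h^{-1}e^{(r^2+L^2)/(4h)}u_h(\sqrt{L^2+r^2})$, so that $\mathcal R_2(\eta)=\mathcal O\big(h^{-1}e^{c_0\sqrt\eta/h}e^{-S_0/h}\big)$ follows from the Agmon phase $d(r)+d(\sqrt{L^2+r^2})\ge S_0$. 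You instead avoid the reassembly: you keep the $t$-integral localized on $(0,\eta)$ and exploit the superexponential smallness $\eta^\alpha\approx\exp\big(|v_0^{\min}|\ln\eta/(2h)\big)$ coming from $\alpha\sim|v_0^{\min}|/(2h)$ directly, compensating the $h^{-1}e^{F(\mathfrak v_0)/h}$ growth of $C_h$ and the $e^{La\sqrt{\eta(1+\eta)}/h}$ from $I_0$. Both routes work; the paper's gives the cleaner $S_0-c_0\sqrt\eta$ rate already in terms of the Agmon distance and needs one fewer ingredient (no input on the asymptotics of $\alpha$), while yours is arguably more elementary since it bounds the short-$t$ tail of the Laplace integral head-on. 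One small point worth being explicit about: to get uniformity in $\eta\in(0,\eta_0]$ you should bound $\eta^\alpha\le\eta_0^\alpha$ and $\sqrt{\eta(1+\eta)}\le\sqrt{\eta_0(1+\eta_0)}$ \emph{before} splitting $\alpha$ into its $1/h$ leading part and the $\mathcal O(1)$ correction $\alpha_1$; otherwise the factor $\eta^{\alpha_1}$ is not uniformly bounded over $\eta\in(0,\eta_0]$, even though the full quantity $\eta^{\alpha}$ is monotone and therefore is. (The exponent $h^{-5/2}$ you wrote should be $h^{-3/2}$, but this is immaterial.)
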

\begin{proof}
Consider $\eta\in(0,\eta_0)$ where $0<\eta_0<a$. By  \eqref{eq:w-int} and \eqref{eq:int-Kh}  we write
\[|w_{\ell,r}|=\mathcal W_1(\eta)+\mathcal R_1(\eta) +\mathcal R_2(\eta)\]
where
\begin{align}
\mathcal R_1(\eta)&=\int_0^\eta r|v_0(r)|u_h(r)\left(\int_{0}^{2\pi} K_h(r,\theta)\,d\theta\right)dr\,, \label{eq:def-R1}\\
\mathcal R_2(\eta)&=C_h\int_\eta^a r|v_0(r)|u_h(r)\exp\left( -\frac{r^2+L^2}{4h}\right)\left(\int_{0}^{\eta} G_h(r,t)\,dt\right)dr\,. 
\end{align}
Using \eqref{eq:K<uh}, we have
\[ 0\leq \mathcal R_1(\eta)\leq \mathcal R_1(\eta_0)\leq c_2\int_0^{\eta_0} r|v_0(r)|u_h(r)u_h(L-r)\,dr\,. \]
Arguing as in Proposition~\ref{prop:w}, and using the monotonicity of the function $\varphi_*$  in \eqref{eq:phi*},  we have
\[\int_{0}^{\eta_0} r|v_0(r)|u_h(r)u_h(L-r)\,dr=\mathcal O(h^{-1} e^{-S_{\eta_0}/h}) \]
where $S_{\eta_0}=\varphi_*(\eta_0)$ 
 and depends continuously on $\eta_0$ so that $\lim\limits_{\eta_0\to0}S_{\eta_0}=S_0$; here $S_0$ is introduced in \eqref{eq:S0}. By Proposition~\ref{prop:hat-S}, we choose $\eta_0$ sufficiently small such that 
$S_{\eta_0} >\hat  S$. This proves that
$\mathcal R_1(\eta_0)=\mathcal O(h^{-1} e^{-S_{\eta_0}/h})\underset{h\to0}{=}o(|w_{\ell,r}|)$, by   \eqref{eq:lb-w}.

Let us now prove  that 
$\mathcal R_2(\eta)\underset{h\to0}{=}o(|w_{\ell,r}|)$.  Using  \eqref{eq:def-Gh} and \eqref{eq:I0}, we write for $0<t<\eta<\eta_0<a$,
\[G_h(r,t)= \mathcal O \big(e^{c_0\sqrt{\eta}/h}\big) C_h\exp\left( -\frac{(r^2+L^2)t}{2h} \right)
t^{\alpha-1}(1+t)^{-\alpha}  \]
where $c_0=\sqrt{a}+1$. Inserting this into  the expression of $\mathcal R_2(\eta)$, we get by \eqref{eq:FSW-uh},
\[\mathcal R_2(\eta)= \mathcal O \big(e^{c_0\sqrt{\eta}/h}\,\big) \int_\eta^a r|v_0(r)|u_h(r)u_h(\sqrt{L^2+r^2}) dr\,.\]
Arguing as in Proposition~\ref{prop:w} and observing that
\[ \min_{\eta<r<a} d(r)+d(\sqrt{L^2+r^2})=d(\eta)+d(\sqrt{L^2+\eta^2})>S_0\]
we get
\[\mathcal R_2(\eta)= \mathcal O \big(h^{-1} e^{c_0\sqrt{\eta}/h}e^{-S_0/h}\,\big)\,.\]
Finally, by  Proposition~\ref{prop:hat-S}, we choose $\eta_0$ sufficiently small so that, 
$S_0-c_0\sqrt{\eta_0}>\hat S$
 and we  conclude by Proposition~\ref{prop:w} that $\mathcal R_2(\eta)\underset{h\to0}{=}o(|w_{\ell,r}|)$, for  all $\eta\in(0,\eta_0)$.  
 
 Looking closely at the foregoing bounds  on $\mathcal  R_1(\eta)$ and $\mathcal R_2(\eta)$, we have in fact proved the following. If we select $A$ such that
 \[ \hat S<A<\min(S_{\eta_0}, S_0-c_0\sqrt{\eta_0} )\]
 then there exists  $h_0>0$ such that,  for all $\eta\in(0,\eta_0]$,
 \[  0\leq \mathcal R_1(\eta)+\mathcal R_2(\eta)\leq e^{-A/h}\,.\]
\end{proof}
 For all $\eta\in(0,a)$, we introduce
\[\mathcal W_2(\eta):= C_h^{\rm asy}\int_\eta^a r|v_0(r)|a_0(r)\exp\left( -\frac{d(r)}{h}-\frac{r^2+L^2}{4h}\right)\left(\int_{\eta}^{+\infty} G_h(r,t)\,dt\right)dr \,,\]
where $C_h^{\rm asy}$ is introduced in \eqref{eq:Ch}, $G_h$ is introduced in \eqref{eq:def-Gh} and $a_0$ is introduced in Proposition~\ref{prop:WKB}. We will prove that $\mathcal W_1(\eta)$ in Lemma~\ref{lem:exp-w} can be approximated by $h^{-1/2}\mathcal W_2(\eta)$, uniformly with  respect to  $\eta\in(0,\eta_0]$.
\begin{lemma}\label{lem:exp-w*}
Let  $\eta_0$ be as introduced  in Lemma~\ref{lem:exp-w}. There exist constants $h_0,M>0$ such that, for all  $\eta\in(0,\eta_0]$ and $h\in(0,h_0]$, we have
\[\big|\mathcal W_1(\eta)  -h^{-1/2}\mathcal W_2(\eta)\big|\leq M h^{1/2} \mathcal  W_2(\eta)+\left|1-\frac{C_h^{\rm asy}}{C_h}\right|\mathcal W_1(\eta)\]
where $\mathcal W_1(\eta)$ is as in  Lemma~\ref{lem:exp-w}.
\end{lemma}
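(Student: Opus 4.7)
The plan is to exploit the WKB expansion of $u_h$ given by Theorem~\ref{thm:main0}, combined with the non-negativity of the common kernel that appears in both $\mathcal W_1(\eta)$ and $\mathcal W_2(\eta)$. First, I would invoke the third item of Theorem~\ref{thm:main0} on the disc $D(0,a)$ to obtain a uniform pointwise expansion of the form
\[ u_h(r) = h^{-1/2} a_0(r) e^{-d(r)/h} + \rho_h(r), \qquad |\rho_h(r)| \leq M_0\, h^{1/2}\, e^{-d(r)/h}, \]
valid for all $r \in [0,a]$ and all $h \in (0,h_0]$, where $a_0$ is the function from Proposition~\ref{prop:WKB} (restricted to radii). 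Introduce the shorthand
\[ K_h(r) := \exp\!\left(-\tfrac{r^2+L^2}{4h}\right)\int_\eta^{+\infty} G_h(r,t)\,dt, \]
which is non-negative because $I_0 \geq 0$ and so $G_h \geq 0$ by \eqref{eq:def-Gh}. Both $\mathcal W_1(\eta)$ and $\mathcal W_2(\eta)$ are then of the form $(\text{prefactor})\int_\eta^a r|v_0(r)|(\cdot) K_h(r)\,dr$ with a non-negative kernel.

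The central algebraic step is to solve the pointwise expansion for $h^{-1/2} a_0(r) e^{-d(r)/h}$ and substitute into the definition of $\mathcal W_2(\eta)$, giving
\[ h^{-1/2}\mathcal W_2(\eta) = \frac{C_h^{\rm asy}}{C_h}\,\mathcal W_1(\eta) \;-\; C_h^{\rm asy}\!\int_\eta^a r\,|v_0(r)|\,\rho_h(r)\, K_h(r)\,dr. \]
Rearranging yields the clean decomposition
\[ \mathcal W_1(\eta) - h^{-1/2}\mathcal W_2(\eta) = \left(1 - \frac{C_h^{\rm asy}}{C_h}\right)\mathcal W_1(\eta) \;+\; C_h^{\rm asy}\!\int_\eta^a r\,|v_0(r)|\,\rho_h(r)\, K_h(r)\,dr, \]
which already isolates the two error sources announced in the statement: the prefactor discrepancy $|1 - C_h^{\rm asy}/C_h|\,\mathcal W_1(\eta)$, and a WKB-type remainder integral.

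The last step is to bound this remainder uniformly in $\eta \in (0,\eta_0]$. Since $a_0$ is continuous and strictly positive on the compact interval $[0,a]$ (see Remark~\ref{rem:a0}), we have $m_0 := \min_{[0,a]} a_0 > 0$. Combining the pointwise bound on $\rho_h$ with the non-negativity of $K_h$ gives
\[ \left|C_h^{\rm asy}\!\int_\eta^a r\,|v_0(r)|\,\rho_h(r)\,K_h(r)\,dr\right| \leq \frac{M_0}{m_0}\,h^{1/2}\cdot C_h^{\rm asy}\!\int_\eta^a r\,|v_0(r)|\,a_0(r)\, e^{-d(r)/h} K_h(r)\,dr = \frac{M_0}{m_0}\,h^{1/2}\,\mathcal W_2(\eta), \]
and the lemma follows with $M := M_0/m_0$. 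The uniformity in $\eta \in (0,\eta_0]$ is automatic, since both the WKB bound on $\rho_h$ and the lower bound $m_0$ for $a_0$ are stated on $[0,a]$, independently of $\eta$. I do not expect any substantial obstacle here: the proof is essentially bookkeeping, and the non-negativity of $K_h$ is the one structural ingredient that makes the remainder integral naturally comparable to $\mathcal W_2(\eta)$ itself rather than requiring a separate Laplace-type estimate.
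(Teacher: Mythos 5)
Your proof is correct and follows essentially the same route as the paper: the paper sets $\delta_h(r) := e^{d(r)/h}u_h(r) - h^{-1/2}a_0(r)$ (your $\rho_h(r) = e^{-d(r)/h}\delta_h(r)$), writes $\mathcal W_1(\eta) = h^{-1/2}\mathcal W_2(\eta) + (1 - C_h^{\rm asy}/C_h)\mathcal W_1(\eta) + C_h^{\rm asy}\int_\eta^a r|v_0|\,\delta_h\,e^{-d/h}\,K_h\,dr$, and bounds the last term by $\mathcal O(h^{1/2})\mathcal W_2(\eta)$ using $|\delta_h| = \mathcal O(h^{1/2})$ from \eqref{eq:uh-main0} and $\min_{[0,a]}a_0 > 0$. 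Your decomposition and the bound $M = M_0/m_0$ are algebraically identical to the paper's, and your observation that the positivity of the kernel $K_h$ is the structural ingredient is exactly right.
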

\begin{proof}
By Lemma~\ref{lem:exp-w}, it  suffices to prove that $\mathcal W_1(\eta)\underset{h\to0}{\sim} h^{-1/2}\mathcal W_2(\eta)$  (and estimate the remainder terms).  We write
\[ g_\eta(r)=\int_\eta^{+\infty}G_h(r,t)dt\,,\quad \delta_h(r)=e^{d(r)/h}u_h(r)-h^{-1/2}a_0(r)\]
and
\[\mathcal W_1(\eta)=h^{-1/2}\mathcal W_2(\eta)+\mathcal R_1(\eta)+\mathcal R_2(\eta)\]
where
\[
\begin{aligned}
\mathcal R_1(\eta)&=\big(C_h-C^{\rm asy}_h\big)\int_\eta^a r|v_0(r)|u_h(r) g_\eta(r)\exp\left( -\frac{r^2+L^2}{4h}\right)dr\,, \\
\mathcal R_2(\eta)&=C_h^{\rm asy}\int_\eta^a r|v_0(r)|\delta_h(r)g_\eta(r)\exp\left( -\frac{d(r)}{h}-\frac{r^2+L^2}{4h}\right)dr\,. 
\end{aligned}\]
Notice that 
\[\mathcal W_1(\eta)=C_h\int_\eta^a r|v_0(r)|u_h(r) g_\eta(r)\exp\left( -\frac{r^2+L^2}{4h}\right)dr\,.\]
Consequently 
\[ \mathcal R_1(\eta)=\left(1-\frac{C_h^{\rm asy}}{C_h}\right)\mathcal W_1(\eta)\underset{h\to0}{=}o\big(\mathcal W_1(\eta)\big)\,.\]
As for $\mathcal R_2(\eta)$, by  \eqref{eq:uh-main0}, $|\delta_h|=\mathcal O(h^{1/2})$, and  since $a_0(\cdot)>0$, we have
\[\mathcal W_2(\eta)\geq \mathcal W_2^0(\eta):=\mathfrak m_0 C_h^{\rm asy} \int_\eta^a r|v_0(r)| g_\eta(r)\exp\left( -\frac{d(r)}{h}-\frac{r^2+L^2}{4h}\right)dr \]
where
\[\mathfrak m_0=\min_{0\leq r\leq a}a_0(r)>0\,.\]
Consequently,  
\[| \mathcal R_2(\eta)|= \mathcal O\big(h^{1/2}\mathcal W_2^{0}(\eta)\big)=\mathcal O\big(h^{1/2}\mathcal W_2(\eta)\big )\,.\]
\end{proof}
 We give a finer approximation of the hopping coefficient $|w_{\ell,r}|$ in \eqref{eq:hat-w}, by replacing $G_h$ in the definition  of $\mathcal  W_2(\eta)$ (see Lemma~\ref{lem:exp-w*}) by its approximation at infinity. At  this  level, unlike Lemmas~\ref{lem:exp-w} and \ref{lem:exp-w*}, our estimates are no more uniform with respect  to $\eta\in(0,\eta_0]$.
\begin{lemma}\label{lem:exp-w**}
Let $\eta\in(0,\eta_0]$, where $\eta_0$ is introduced  in Lemma~\ref{lem:exp-w}. 
Consider
\[\mathcal W_3(\eta):= C_h^{\rm asy}\int_\eta^a r|v_0(r)|a_0(r)\exp\left( -\frac{d(r)}{h}-\frac{r^2+L^2}{4h}\right)\left(\int_{\eta}^{+\infty} G_h^{\rm asy}(r,t)\,dt\right)dr \,,\]
where $C_h^{\rm asy}$ is introduced in \eqref{eq:Ch},
\[G_h^{\rm asy}(r,t)=\sqrt{h}\,\,\dfrac{\exp\big( -\frac{(r^2+L^2)t}{2h} +\frac{Lr\sqrt{t(t+1)}}{h}-\alpha\ln\left(1+\frac1{t}\right)\big)}{(2\pi Lr)^{1/2}\,t^{5/4}(t+1)^{1/4}}\]
and $\alpha$ is introduced in \eqref{eq:def-alpha}.
Then, we have
\[ |w_{\ell,r}|\underset{h\to0}{\sim} h^{-1/2}\mathcal W_3(\eta)\,.\]
\end{lemma}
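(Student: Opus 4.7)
The plan is to chain Lemmas~\ref{lem:exp-w} and \ref{lem:exp-w*} with a pointwise replacement of the integrand $G_h$ by $G_h^{\rm asy}$. By those two lemmas, $|w_{\ell,r}|\sim h^{-1/2}\mathcal W_2(\eta)$ for each fixed $\eta\in(0,\eta_0]$ (the first error term in Lemma~\ref{lem:exp-w*} is of relative size $\mathcal O(h)$, the second of size $o(1)$ since $C_h\sim C_h^{\rm asy}$ by \eqref{eq:Ch}). Hence it suffices to establish that $\mathcal W_2(\eta)=\mathcal W_3(\eta)\bigl(1+o(1)\bigr)$ as $h\to0$.

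The core input is the refined asymptotic of the modified Bessel function, sharpening \eqref{eq:I0-main} to
\[
I_0(z)=\frac{e^z}{\sqrt{2\pi z}}\bigl(1+\mathcal O(z^{-1})\bigr)\qquad (z\to+\infty).
\]
Substituting $z=Lr\sqrt{t(t+1)}/h$ into \eqref{eq:def-Gh} and using the elementary identity $t^{\alpha-1}(1+t)^{-\alpha}=t^{-1}\exp\bigl(-\alpha\ln(1+1/t)\bigr)$ shows, after gathering the prefactor $(2\pi z)^{-1/2}=\sqrt{h}\,(2\pi Lr)^{-1/2}t^{-1/4}(t+1)^{-1/4}$, that the leading term is precisely $G_h^{\rm asy}(r,t)$. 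Moreover, on the integration region $[\eta,a]\times[\eta,+\infty)$ we have $z\geq L\eta\sqrt{\eta(\eta+1)}/h$, so $z^{-1}=\mathcal O(h)$ uniformly in $(r,t)$. This yields
\[
G_h(r,t)=G_h^{\rm asy}(r,t)\bigl(1+\mathcal O(h)\bigr)
\]
uniformly, with an implicit constant depending only on $\eta$.

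Integrating this pointwise relation against the positive factor $r|v_0(r)|a_0(r)\exp\bigl(-d(r)/h-(r^2+L^2)/(4h)\bigr)$ over $[\eta,a]\times[\eta,+\infty)$ gives $\mathcal W_2(\eta)=\mathcal W_3(\eta)(1+\mathcal O(h))$, which combined with Lemmas~\ref{lem:exp-w}--\ref{lem:exp-w*} delivers the announced equivalence $|w_{\ell,r}|\sim h^{-1/2}\mathcal W_3(\eta)$.

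The main delicate point is the uniformity of the $I_0$-expansion on the unbounded interval $t\in[\eta,+\infty)$; however, $z$ is monotone increasing in $t$ and the remainder $\mathcal O(1/z)$ only improves as $t$ grows, so the required uniformity holds on the whole region. A secondary check is that $\int_\eta^{+\infty}G_h^{\rm asy}(r,t)\,dt$ converges: since $r\leq a<L/2$, the expansion $\sqrt{t(t+1)}=t+\tfrac{1}{2}+\mathcal O(1/t)$ shows that the exponent behaves like $-(L-r)^2 t/(2h)+\mathcal O_{t}(1/h)$ as $t\to+\infty$, giving exponential decay. No other obstruction arises.
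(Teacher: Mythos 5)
Your proof is correct and follows essentially the same route as the paper: reduce via Lemmas~\ref{lem:exp-w} and \ref{lem:exp-w*} to showing $\mathcal W_2(\eta)\sim\mathcal W_3(\eta)$, then replace $I_0$ in $G_h$ by its large-argument approximation, using that $z=Lr\sqrt{t(t+1)}/h\geq L\eta\sqrt{\eta(\eta+1)}/h$ is uniformly large on $[\eta,a]\times[\eta,+\infty)$, and integrate the pointwise comparison against the positive weight. The one cosmetic difference is that you invoke the quantitative expansion $I_0(z)=\tfrac{e^z}{\sqrt{2\pi z}}(1+\mathcal O(z^{-1}))$ (slightly stronger than the paper's \eqref{eq:I0-main}, but standard and covered by the same reference), which upgrades the paper's $\epsilon$--$z_0$ argument into a uniform relative error $\mathcal O(h)$; this is a harmless sharpening and does not change the structure of the argument.
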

\begin{proof}
By Lemmas~\ref{lem:exp-w} and \ref{lem:exp-w*}, it  suffices to prove that $\mathcal W_2(\eta)\underset{h\to0}{\sim} \mathcal W_3(\eta)$. By \eqref{eq:def-Gh}, we observe that
\[ G_h(r,t)= \frac1{t}I_0\left(\frac{Lr\sqrt{t(t+1)}}{h}\right)\exp\left( -\frac{(r^2+L^2)t}{2h}-\alpha\ln\left(1+\frac1{t}\right)\right)\,.\]
By \eqref{eq:I0-main},  for each $\epsilon>0$, there exists $z_0>0$ such that
\[\forall\,z\geq z_0,\quad \left|I_0(z)-\frac{e^z}{\sqrt{2\pi z}}\right|<\epsilon  \frac{e^z}{\sqrt{2\pi z}}\,.\]
In particular, there exists
$h_0=h_0(\eta,\epsilon)>0$ such  that,
\[\forall\,r,t\geq\eta, ~\forall\,h\in(0,h_0],~|G_h(r,t)-G_h^{\rm asy}(r,t)|<\epsilon\, G_h^{\rm asy}(r,t)\,.\]
Writing
\[\mathcal W_2(\eta)=\mathcal W_3(\eta)+\mathcal R(\eta)\]
we get, for all $h\in(0,h_0]$,
\[ 
|\mathcal R(\eta)|\leq \epsilon \mathcal W_3(\eta)\]
which yields that $\mathcal R(\eta)\underset{h\to0}{=}o\big(\mathcal W_3(\eta)\big)$.
\end{proof}
In  our next step  we get  a new asymptotics from  Lemma~\ref{lem:exp-w**} by replacing $\alpha$ by  its approximation in \eqref{eq:def-alpha}. Recall that by  \eqref{eq:def-alpha}
\begin{equation}\label{eq:def-alpha*}
\begin{aligned}
\alpha&=\frac{\alpha_0}{h},\\
\alpha_0&=\alpha_0^{\rm main}+\mathcal O(h^{3/2})\,,\\
\alpha_0^{\rm main}&=\frac{|v_0^{\min}|}{2}-\frac{h}{2}\left(\sqrt{1+2v_0''(0)} -1\right)
\,.
\end{aligned}
\end{equation}
Inserting this into $G_h^{\rm asy}(r,t)$ in Lemma~\ref{lem:exp-w**}, we get  
\begin{equation}\label{eq:def-Gh-main}
\begin{aligned}
G_h^{\rm main}(r,t)&=\sqrt{\frac{h}{2\pi L r}}\,g_0(t)\,\exp\left( -\frac{(r^2+L^2)t}{2h} +\frac{Lr\sqrt{t(t+1)}}{h}-\frac{|v_0^{\min}|\ln\left(1+\frac1{t}\right)}{2h}\right)\\
g_0(t)&=\frac1{t^{5/4}(t+1)^{1/4}}\left(1+\frac1{t}\right)^{\frac12(\sqrt{1+2v_0''(0)}-1)}\,.
\end{aligned}
\end{equation} 
\begin{lemma}\label{lem:exp-w***}
Let $\eta\in(0,\eta_0]$, where $\eta_0$ is introduced  in Lemma~\ref{lem:exp-w}. 
Consider
\[\mathcal W_4(\eta):= C_h^{\rm asy}\int_\eta^a r|v_0(r)|a_0(r)\exp\left( -\frac{d(r)}{h}-\frac{r^2+L^2}{4h}\right)\left(\int_{\eta}^{+\infty} G_h^{\rm main}(r,t)\,dt\right)dr \,,\]
where $C_h^{\rm asy}$ is introduced in \eqref{eq:Ch},
$G_h^{\rm main}$ in \eqref{eq:def-Gh-main} and  $\alpha_0^{\rm main}$ in \eqref{eq:def-alpha*}.

Then, we have
\[ |w_{\ell,r}|\underset{h\to0}{\sim} h^{-1/2}\mathcal W_4(\eta)\,.\]
\end{lemma}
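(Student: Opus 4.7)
The plan is to reduce the claim to a comparison between $\mathcal W_3(\eta)$ and $\mathcal W_4(\eta)$, since by Lemma~\ref{lem:exp-w**} we already have $|w_{\ell,r}|\sim h^{-1/2}\mathcal W_3(\eta)$. So it suffices to show that, for fixed $\eta\in(0,\eta_0]$,
\[
\mathcal W_3(\eta)=\mathcal W_4(\eta)\bigl(1+\mathcal O(h^{1/2})\bigr)\quad(h\to0)\,.
\]
Comparing the two integrands, the only difference between $G_h^{\rm asy}(r,t)$ and $G_h^{\rm main}(r,t)$ comes from the term $\exp\bigl(-\alpha\ln(1+\tfrac1t)\bigr)$, where $\alpha=\alpha_0/h$, versus the factor $\exp\bigl(-\tfrac{|v_0^{\min}|}{2h}\ln(1+\tfrac1t)\bigr)(1+\tfrac1t)^{\frac12(\sqrt{1+2v_0''(0)}-1)}$ extracted in $G_h^{\rm main}$.

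The first step is to expand $\alpha$ using \eqref{eq:def-alpha*}: one writes $\alpha=\frac{\alpha_0^{\rm main}}{h}+\varepsilon(h)$, where $\varepsilon(h)=\mathcal O(h^{1/2})$ (since $\alpha_0-\alpha_0^{\rm main}=\mathcal O(h^{3/2})$). Substituting $\alpha_0^{\rm main}=\tfrac{|v_0^{\min}|}2-\tfrac h2(\sqrt{1+2v_0''(0)}-1)$, one gets
\[
\exp\Bigl(-\alpha\ln\bigl(1+\tfrac1t\bigr)\Bigr)=\exp\Bigl(-\tfrac{|v_0^{\min}|}{2h}\ln\bigl(1+\tfrac1t\bigr)\Bigr)\,\bigl(1+\tfrac1t\bigr)^{\frac12(\sqrt{1+2v_0''(0)}-1)}\,\exp\bigl(-\varepsilon(h)\ln\bigl(1+\tfrac1t\bigr)\bigr)\,.
\]
Thus $G_h^{\rm asy}(r,t)=G_h^{\rm main}(r,t)\cdot\exp\bigl(-\varepsilon(h)\ln(1+\tfrac1t)\bigr)$.

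The second step is a uniform control of the residual factor. Since $t\mapsto\ln(1+1/t)$ is decreasing on $(0,+\infty)$ with a finite limit $0$ at $+\infty$, for all $t\geq\eta$ we have $0\leq\ln(1+1/t)\leq\ln(1+1/\eta)$. Hence
\[
\bigl|\varepsilon(h)\ln(1+\tfrac1t)\bigr|\leq C_\eta h^{1/2}\,,
\]
uniformly in $t\geq\eta$, and consequently $\exp\bigl(-\varepsilon(h)\ln(1+\tfrac1t)\bigr)=1+\mathcal O_\eta(h^{1/2})$ uniformly in $t\geq\eta$ and in $r\in[\eta,a]$. Integrating this uniform multiplicative estimate over $(r,t)\in[\eta,a]\times[\eta,+\infty)$ against the positive weight appearing in $\mathcal W_3(\eta)$ (resp.\ $\mathcal W_4(\eta)$) yields $\mathcal W_3(\eta)=\mathcal W_4(\eta)\bigl(1+\mathcal O(h^{1/2})\bigr)$, and combining with Lemma~\ref{lem:exp-w**} concludes the proof.

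The only mildly delicate point is justifying that the $\mathcal O(h^{3/2})$ remainder in $\alpha_0$ (from Theorem~\ref{thm:main0}) indeed enters the exponent only divided by $h$, yielding an $\mathcal O(h^{1/2})$ coefficient, and that the logarithmic factor $\ln(1+1/t)$ remains bounded on the domain of integration $t\geq\eta$—both are immediate once $\eta>0$ is fixed. Note that the estimate is not uniform in $\eta\to0$, which is exactly why $\eta_0$ was frozen in Lemma~\ref{lem:exp-w} and why subsequent reductions may lose uniformity in $\eta$; here we only need the statement for one fixed admissible $\eta$.
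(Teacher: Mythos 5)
Your proof is correct and follows the paper's argument step for step: reduce to showing $\mathcal W_3(\eta)\sim\mathcal W_4(\eta)$ via Lemma~\ref{lem:exp-w**}, write $G_h^{\rm asy}(r,t)=G_h^{\rm main}(r,t)\exp\bigl(\tfrac{\alpha_0^{\rm main}-\alpha_0}{h}\ln(1+\tfrac1t)\bigr)$, and use $\alpha_0-\alpha_0^{\rm main}=\mathcal O(h^{3/2})$ together with the bound $\ln(1+1/t)\leq\ln(1+1/\eta)$ on $t\geq\eta$ to get a uniformly $1+\mathcal O(h^{1/2})$ multiplicative factor. The only cosmetic difference is that you record the explicit $\mathcal O(h^{1/2})$ rate while the paper only asserts the asymptotic equivalence; otherwise the two proofs coincide.
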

\begin{proof}
By Lemma~\ref{lem:exp-w**}, it  suffices to prove that $\mathcal W_3(\eta)\underset{h\to0}{\sim} \mathcal W_4(\eta)$. Notice that
\[G_h^{\rm asy}(r,t)=G_h^{\rm main}(r,t) \exp\left(\frac{\alpha_0^{\rm main}-\alpha_0}{h}\ln\left(1+\frac1{t}\right) \right)\]
and there exist $h_0=h_0(\eta)>0$ and $C_0>0$ such that, 
\[
 \forall\,h\in(0,h_0],~\forall\,t\geq \eta,~
 0\leq \frac{\alpha_0^{\rm main}-\alpha_0}{h}\ln\left(1+\frac1{t}\right)\leq C_0h^{1/2}\ln\left(1+\frac1\eta\right)\,.
 \]
 This proves that $G_h^{\rm  asy}(r,t)\underset{h\to0}{\sim}G_h^{\rm main}(r,t)$ uniformly with respect to 
 $(r,t)\in[\eta,a]\times[\eta,+\infty)$. This yields
 \[ \mathcal W_3(\eta)-\mathcal  W_4(\eta)\underset{h\to0}{=}o\big(\mathcal W_4(\eta)\big)\,.\] 
\end{proof}
\subsection{A real phase and its minimum}

 Using the expression of $C_h^{\rm asy}$ in  \eqref{eq:Ch}, we get by Lemma~\ref{lem:exp-w***}
\begin{equation}\label{eq:W4=exp-Psi}
\mathcal W_4(\eta)=\frac{\mathfrak  m(\mathfrak v_0)}{\sqrt{2\pi h}}\int_\eta^a \sqrt{r}\,|v_0(r)|a_0(r)\int_\eta^{+\infty} g_0(t)\exp\left(-\frac{\Psi(r,t)-F(\mathfrak v_0)}{h}\right)dtdr
\end{equation} 
where $\mathfrak m(\mathfrak v_0),F(\mathfrak v_0)$ are  introduced in \eqref{eq:def-m-f}, $g_0(t)$ is introduced in \eqref{eq:def-Gh-main} and 
\begin{equation}\label{eq:def-Psi}
\Psi(r,t):=d(r)+\frac{r^2+L^2}{4}(2t+1)+\frac{|v_0^{\min}|}{2}\ln\left(1+\frac1{t}\right)-Lr\sqrt{t(t+1)}\,.
\end{equation}
The Laplace like integral on the right hand side of \eqref{eq:W4=exp-Psi}  can be more precisely estimated in the limit $h\rightarrow 0$  once we know the infimum of $\Psi$ over $(\eta,a] \times (\eta,+\infty)$ of  $\Psi(r,t)$, which is the aim of the following  proposition.
\begin{proposition}\label{prop:min-Psi} If $L>2a$,  then
\[\Psi^{\min}:=\inf_{(r,t)\in[0,a]\times\R_+}\Psi(r,t)<\inf_{t\in\R_+}\Psi(0,t).\]
Furthermore, if \(v_0<0\) on \([0,a)\), then \(\Psi\) has a unique minimum, \((a,t_a)\), on \((0,a]\times\R_+\).
\end{proposition}

 The proof of Proposition~\ref{prop:min-Psi} relies on the following two lemmas.

\begin{lemma}\label{lem:min-Psi} Assume that $L>2a$. Let \((r_0,t_0)\in (0,L)\times \R_+\). Then
\((r_0,t_0)\) is a critical point of \(\Psi\) if, and only if, the following two conditions hold:
\begin{enumerate}
\item 
\begin{equation}\label{eq:def-ta}
t_{0}=t(r_0,L,v_0^{\min}):=\sqrt{\frac14+s(r_0,L,v_0^{\min})}-\frac12
\end{equation}
with
\begin{multline}\label{eq:def-s(a)}
s(r_0,L,v_0^{\min}):=\frac{2|v_0^{\min}|(L^2+r_0^2)+L^2r_0^2}{2(L^2-r_0^2)^2}\\+ \frac{1}{L^2-r_0^2}\sqrt{\frac{(2|v_0^{\min}|(L^2+r_0^2)+L^2r_0^2)^2}{4(L^2-r_0^2)^2} -|v_0^{\min}|^2}\,.
\end{multline}
\item  \(r_0\) is a solution of 
\begin{equation}\label{eq:new}
d'(r_0) + r_0 \sqrt{s(r_0,L,v_0^{\min})+\frac 14} - L \sqrt{s(r_0,L,v_0^{\min})}=0\,.
\end{equation}
\end{enumerate}
\end{lemma}

\begin{lemma}\label{lem:min-Psi2}
Let \(r_0\in(0,L)\). Then, \(r_0\) is a solution of \eqref{eq:new} if, and only if, \(v_0(r_0)=0\). 
\end{lemma}

\begin{proof}[Proof of Proposition~\ref{prop:min-Psi}]~\medskip

{\bf Step~1.}
The function $\Psi(r,t)$ depends continuously on  $(r,t)\in[0,a]\times\R_+$
and, using the inequality $\sqrt{t(t+1)}\leq t+\frac12$, we have  for all $(r,t)\in[0,a]\times\R_+$, 
\[\begin{aligned}
&\Psi(r,t) &\geq |v_0^{\min}| \ln\left (1+\frac1{t}\right) + \frac{(L-a)^2}{4}\,,\\
\mbox{ and} &&\\
&\Psi(r,t)&\geq \frac{L(L-2a)}2 t+\frac{L(L-2a)}{4}\,.\qquad .\end{aligned}\]
Hence, $\Psi^{\min}>0$ and there exist $\tau,T>0$ such that, for all $r\in[0,a]$ and $t\in(0,\tau]\cup [T,+\infty)$, we have $\Psi(r,t)>\Psi^{\min}$.  Thus, the minimum of $\Psi$  is  attained in $[0,a]\times(\tau,T)$.

Moreover, the minimum of  $\Psi$ can not be attained at $(0,t_0)$ with $t_0\in (\tau,T)$  since
\begin{equation*}
\partial_r \Psi (0,t_0) = - L \sqrt{t_0(t_0+1)} <0\,.
\end{equation*}

{\bf Step~2.} 
Pick \((r_0,t_0)\in(0,a]\times\R_+\) such that \(\Psi(r_0,t_0)=\Psi^{\min}\).  If \(r_0<a\) then \eqref{eq:new} holds, and by Lemma~\ref{lem:min-Psi2}, we get that \(v_0(r_0)=0\). This is impossible if we impose the hypothesis \(v_0<0\) on \([0,a)\), hence we have
\[r_0=a\mbox{ and }t_0=t(a,L,v_0^{\min}),\]
where
\(t(a,L,v_0^{\min})\) is introduced in \eqref{eq:def-ta}.
\end{proof}

\begin{proof}[Proof of Lemma~\ref{lem:min-Psi}]~\medskip

{\bf Step~1.} 
Let \((r_0,t_0)\in(0,L)\times\R_+\).  We will prove that 
\begin{equation}\label{eq:Psi-t=0}
\partial_t\Psi(r_0,t_0)=0\Longleftrightarrow t_0=t(r_0,L,v_0^{\min})
\end{equation}
where \(t_0(r_0,v_0^{\min})\) is introduced in \eqref{eq:def-ta}.

Starting with  \begin{equation}\label{eq:dt-Psi-def}
\partial_t \Psi(r_0,t):= \frac{r_0^2+L^2}{2} - \frac{|v_0^{\min}|}{2 t (t+1)} - \frac{ Lr_0 (2t+1)}{2\sqrt{t(t+1)}}
\end{equation}
we get 
\begin{equation*}
\partial_t\Psi(r_0,t)=\frac1{2s} g(s)\,,
\end{equation*}
where  $s = t^2+t >0$ and
\begin{equation*}
g(s)=  (L^2+r_0^2)s -|v_0^{\min}|-Lr_0 \sqrt{s} \sqrt{4s+1} \,.
\end{equation*}
The equation $g(s)=0$  reads
\begin{equation}\label{eq:dt-Psi=0}
 (L^2+r_0^2)s -|v_0^{\min}|= Lr_0 \sqrt{s(4s+1)} \,.
\end{equation}
This implies that a zero $\hat s$ of $g$ necessarily satisfies
\begin{equation}\label{neccond1}
 (L^2+r_0^2)\hat s -|v_0^{\min}| >0\,.
 \end{equation}
This  also implies by taking the square on both sides of \eqref{eq:dt-Psi=0},
\begin{equation} \label{neccond2}
 (L^2-r_0^2)^2 \hat s^2-\big(2(L^2+r_0^2)|v_0^{\min}|+L^2r_0^2 \big)\hat s +|v_0^{\min}|^2=0\,,
 \end{equation}
which has two positive solutions
\[s_\pm=\frac{2|v_0^{\min}|(L^2+r_0^2)+L^2r_0^2}{2(L^2-r_0^2)^2}\pm\frac{1}{L^2-r_0^2}\sqrt{\frac{(2|v_0^{\min}|(L^2+r_0^2)+L^2r_0^2)^2}{4(L^2-r_0^2)^2} -|v_0^{\min}|^2}\,. \]
 Notice that, if $g(\hat s)=0$, then
\[g'(\hat s)=  \frac{(L^2-r_0^2)^2}{  (L^2+r_0^2) \hat s -|v_0^{\min}|}\left(\hat s-\frac{2(L^2+r_0^2)|v_0^{\min}|+L^2r_0^2 }{2(L^2-r_0^2)^2} \right)\,.\]
Assuming that $s_\pm$ are zeros of $g$, we get
\begin{equation}\label{eq:g'(s-)}
g'(s_\pm)=  \pm \frac{L^2-r_0^2}{  (L^2+r_0^2) s_\pm  -|v_0^{\min}|}\, \sqrt{\frac{(2|v_0^{\min}|(L^2+r_0^2)+L^2r_0^2)^2}{4(L^2-r_0^2)^2} -|v_0^{\min}|^2}\,.
\end{equation}
At this stage we know, since $g(0)< 0$ and $\lim\limits_{s\to+\infty}g(s)=+\infty$, that  $g$ has at least one zero and at most two zeroes  which belong to $\{s_-,s_+\}$.
Suppose that $s_-$ is a zero of $g$, then we obtain by \eqref{neccond1} and \eqref{eq:g'(s-)} that $g'(s_-) <0$. This leads to a contradiction (or $g$ should have another zero $<s_-$).
Hence  the unique zero of $g$ is $s_+$. 

This yields, by \eqref{eq:dt-Psi-def}, that  $\partial_t\Psi(r_0,t)=0$ if, and only if, \(t\) satisfies
\begin{equation}\label{eq:formulepours}
\begin{aligned}
t^2+t&=s_+(r_0,L, v_0^{\min})\\
&:=\frac{2|v_0^{\min}|(L^2+r_0^2)+L^2r_0^2}{2(L^2-r_0^2)^2}+\frac{1}{L^2-r_0^2}\sqrt{\frac{(2|v_0^{\min}|(L^2+r_0^2)+L^2r_0^2)^2}{4(L^2-r_0^2)^2} -|v_0^{\min}|^2}\,.\end{aligned}
\end{equation}
Solving the previous equation, we end up with a unique positive solution
\[ t_0:= t_+(r_0,L, v_0^{\min})=-\frac12+\sqrt{\frac14+s_+(L,r_0,v_0^{\min}) } \,.\]
This proves \eqref{eq:Psi-t=0}.

{\bf Step~2.}

Assume now that \(t_0= t_+(r_0,L, v_0^{\min})\) and let us now prove  that, 
\begin{equation}\label{eq:Psi-r=0}
\partial_r\Psi(r_0,t_0)=0\Longleftrightarrow r_0\mbox{ satisfies  \eqref{eq:new}}.
\end{equation} 
If the partial derivative
\begin{equation}\label{eq:c-p-Psi}
\partial_r\Psi(r,t)=d'(r)+ \frac r 2 (2t+1) - L \sqrt{t(t+1)}
\end{equation}
vanishes, then
\[t=\hat t_\pm:=-\frac{L^2-r^2-2rd'(r)}{2(L^2-r^2)}\pm\sqrt{\left(\frac{L^2-r^2-2rd'(r)}{2(L^2-r^2)}\right)^2+\frac{(d'(r)+\frac{r}{2})^2}{L^2-r^2} }\,.\]
 Observing that $\displaystyle \hat t_- \hat t_+= -\frac{(d'(r)+\frac{r}{2})^2}{L^2-r^2}<0$ for $0<r<L$, we ignore the negative solution and
 consider
\begin{equation}\label{eq:def-t0(L,r)}
\hat t_+(r,L,v_0):=-\frac{L^2-r^2-2rd'(r)}{2(L^2-r^2)}+\sqrt{\left(\frac{L^2-r^2-2rd'(r)}{2(L^2-r^2)}\right)^2+\frac{(d'(r)+\frac{r}{2})^2}{L^2-r^2} }\,.
\end{equation}
Moreover, fixing \(r\in(0,L)\), \(f(t):=d'(r)+ \frac r 2 (2t+1) - L \sqrt{t(t+1)}\)  is \(>0\) for \(t=0\) and its limit is equal to \(-\infty\) as \(t\to-\infty\), so it must vanish at some \(t>0\). Therefore, we conclude that \(\partial_r\Psi(r,t)\) vanishes on \((0,L)\times\R_+\) if, and only if \(t=\hat t_+(r,L,v_0^{\min})\).

By \eqref{eq:Psi-t=0}, \((r_0,t_0)\) is a critical point of \(\Psi(r,t)\) if, and only if,
\begin{equation}\label{eq:t+=ht+}
t_0=\hat t_+(r_0,L,v_0)= t_+(r_0,L,v_0^{\min})\,.
\end{equation}
We will prove that \eqref{eq:t+=ht+} holds if, and only if \eqref{eq:def-ta} and \eqref{eq:new} hold. 
Firstly, notice that  \(t_0= t_+(r_0,L,v_0^{\min})\) is just \eqref{eq:def-ta} and \(\hat t_+(r_0,L,v_0)= t_+(r_0,L,v_0^{\min})\) is the same as
\begin{multline}\label{eq:t+=ht+2}
r_0d'(r_0) +(L^2-r_0^2)\sqrt{\left(\frac{L^2-r_0^2-2r_0d'(r_0)}{2(L^2-r_0^2)}\right)^2+\frac{(d'(r_0)+\frac{r_0}{2})^2}{L^2-r_0^2} }\\=(L^2-r_0^2)\sqrt{\frac14+s(r_0,L,v_0^{\min})},
\end{multline}
hence \eqref{eq:t+=ht+} is equivalent to assuming that \eqref{eq:def-ta} and \eqref{eq:t+=ht+2} both hold.

Secondly, if \eqref{eq:t+=ht+2} holds, then we can rewrite it in the form
\begin{multline*}(L^2-r_0^2)\sqrt{\left(\frac{L^2-r_0^2-2r_0d'(r_0)}{2(L^2-r_0^2)}\right)^2+\frac{(d'(r_0)+\frac{r_0}{2})^2}{L^2-r_0^2} }\\=(L^2-r_0^2)\sqrt{\frac14+s(r_0,L,v_0^{\min})}-r_0d'(r_0),
\end{multline*}
and after taking the squares we obtain
\[\left(d'(r_0)+r_0\sqrt{\frac14+s(r_0,L,v_0^{\min})}\right)^2-L^2s(r_0,L,v_0^{\min})=0, \]
which eventually yields  the relation in \eqref{eq:new}. Conversely, if \eqref{eq:def-ta} and \eqref{eq:new} hold, it is straightforward to check that,  for \(0<r_0<L\), we have
\[(L^2-r_0^2)\sqrt{s(r_0,L,v_0^{\min})+\frac14} -r_0d'(r_0)>L(L-r_0)\sqrt{s(r_0,L,v_0^{\min})}>0,\] 
and consequently \eqref{eq:t+=ht+2} holds.
\end{proof}
\begin{proof}[Proof of Lemma~\ref{lem:min-Psi2}]~\medskip

Let \(r\) be a solution of \eqref{eq:new} and \(s=s(r,L,v_0^{\min})\). Notice that, with 
\[\mu=|v_0^{\min}|,\]
\eqref{eq:def-s(a)} reads as follows
\begin{equation}\label{eq:s-new}
(L^2-r^2)^2s=\mu(L^2+r^2)+\frac{L^2r^2}{2}+2Lr\sqrt{\left(\frac{r^2}{4}+\mu\right)\left(\frac{L^2}{4}+\mu\right)}.
\end{equation}
We write \eqref{eq:new} in the following form
\[d'(r)=L\sqrt{s}-r\sqrt{s+\frac14}\]
then we take the square and get
\[v_0(r)+\mu=(L^2+r^2)s-2Lr\sqrt{s\left(s+\frac14\right)}\,.\]
After a rearrangement we have
\[v_0(r)+\mu-(L^2+r^2)s=-2Lr\sqrt{s\left(s+\frac14\right)}\,.\]
Taking the square, we obtain the following equation,
\[(v_0(r)+\mu)^2+(L^2+r^2)^2s^2 -2(v_0(r)+\mu)(L^2+r^2)s=4L^2r^2 s^2+L^2r^2s.\]
Rearranging the terms, we write the previous equation in the form
\[(L^2-r^2)^2s^2 -2\left((v_0(r)+\mu)(L^2+r^2)+\frac{L^2r^2}{2}\right)s+(v_0(r)+\mu)^2=0.\]
Multiplying by \((L^2-r^2)^2\), 
\begin{multline}\label{eq:critical-r}
\bigl((L^2-r^2)^2s\bigr)^2 -2\left(v_0(r)(L^2+r^2)+\mu(L^2+r^2)+\frac{L^2r^2}{2}\right)(L^2-r^2)^2s\\+(v_0(r)+\mu)^2(L^2-r^2)^2=0.\end{multline}
It is straightforward to check that
\begin{multline*}
\left(\mu(L^2+r^2)+\frac{L^2r^2}{2}+2Lr\sqrt{\left(\frac{r^2}{4}+\mu\right)\left(\frac{L^2}{4}+\mu\right)}\right)^2+\mu^2(L^2-r^2)^2\\=2\left(\mu(L^2+r^2)+\frac{L^2r^2}{2}\right)\left(\mu(L^2+r^2)+\frac{L^2r^2}{2}+2Lr\sqrt{\left(\frac{r^2}{4}+\mu\right)\left(\frac{L^2}{4}+\mu\right)}\right).
\end{multline*}
So, using \eqref{eq:s-new}, the above equation yields,
\[\bigl((L^2-r^2)^2s\bigr)^2-2\left(\mu(L^2+r^2)+\frac{L^2r^2}{2}\right)(L^2-r^2)^2s+\mu^2(L^2-r^2)^2=0.\]
Inserting this into \eqref{eq:critical-r}, we obtain
\[(L^2-r^2)^2v_0(r)^2-2(L^2+r^2)(L^2-r^2)^2s\,v_0(r)+2\mu(L^2-r^2)^2v_0(r)=0.\]
If \(v_0(r)\not=0\), we get
\[v_0(r)=2(L^2+r^2)s-2\mu\]
which is impossible since \(v_0\leq 0\) and the term on the right hand side above is positive, since we have by \eqref{eq:s-new},
\begin{multline*}
(L^2+r^2)(L^2-r^2)^2s-(L^2-r^2)^2\mu>\mu(L^2+r^2)^2-\mu(L^2-r^2)^2\\=4\mu L^2r^2>0.
\end{multline*}
\end{proof}

\begin{remark}[Global minimum of \(\Psi\)]\label{rem:glob-min}
Collecting Lemmas~\ref{lem:min-Psi} and \ref{lem:min-Psi2}, we see that the critical points of \(\Psi\) on \((a,L)\times\R_+\) constitute the set
\[\mathcal C=\{(r,t)~:~a<r<L,~t=t(r,L,v_0^{\min})\}.\]
Consequently, for all \(L'\in(a,L)\),
\[\min_{(r,t)\in[a,L']\times \R_+}\Psi(r,t)=\min\left(\inf_{t>0}\Psi(a,t),\inf_{t>0}\Psi(L',t),\inf_{r\in(a,L')}\Psi\bigl(r,t(r,L,v_0^{\min})\bigr)\right).\]
Moreover, for all \(r\in(a,L)\), we have by Lemma~\ref{lem:min-Psi},
\begin{multline*}\frac{d}{dr}\Psi\bigl(r,t(r,L,v_0^{\min})\bigr)\\=\underset{=0}{\underbrace{\partial_r\Psi\bigl(r,t(r,L,v_0^{\min})\bigr)}}+\bigl(\partial_r t(r,L,v_0^{\min})\bigr)\underset{=0}{\underbrace{\partial_t\Psi\bigl(r,t(r,L,v_0^{\min})\bigr)}}=0
\end{multline*}
hence
\[\Psi\bigl(r,t(r,L,v_0^{\min})\bigr)=\Psi\bigl(a,t(a,L,v_0^{\min})\bigr)=\Psi^{\min}\quad (a< r<L).\]
In particular, we have that
\[\min_{(r,t)\in[a,L-a]\times \R_+}\Psi(r,t)=\min\left(\Psi^{\min},\inf_{t>0}\Psi(L-a,t)\right)=\Psi^{\min},\]
and
\[\begin{aligned}
\Psi^{\min}&=\Psi\bigl(L-a,t(L-a,L,v_0^{\min})\bigr)\\
&=d(L-a)+\frac{a^2}{2}\sqrt{\frac14+s}+L(L-a)\left(\sqrt{\frac14+s}-\sqrt{s}\right)+\frac{|v_0^{\min}|}{2}\ln\frac{\bigl(\sqrt{\frac14+s}+\frac12\bigr)^2}{s},
\end{aligned}\]
where
\begin{multline}\label{eq:s(L-a)}
s=s(L-a,L,v_0^{\min})\\
=\frac{|v_0^{\min}|(L^2+(L-a)^2)+L^2(L-a)^2}{a^2(2L-a)^2}\\
+\frac{2L(L-a)}{a^2(2L-a)^2}\sqrt{\frac{(L-a)^2}{4}+|v_0^{\min}|}\sqrt{\frac{L^2}{4}+|v_0^{\min}|}\,.
\end{multline}
\end{remark}

\begin{remark}[Application to limit cases]\label{rem:asy-(L,a)}
For \(L>2a\),  let \(t(a,L,v_0^{\min})\mbox{ and }s(a,L,v_0^{\min})\) be as introduced in \eqref{eq:def-ta} and \eqref{eq:def-s(a)} respectively.
\begin{enumerate} 
\item
Assuming \(a\) and \(v_0^{\min}\) are fixed and \(L\to+\infty\),  we have
\[ t(a,L,v_0^{\min})\sim s(a,L,v_0^{\min})=\frac{|v_0^{\min}|+\frac{a^2}{2}+a\sqrt{\frac{a^2}{4}+|v_0^{\min}|}}{L^2}+\mathcal O\left(\frac{1}{L^4}\right),\]
hence 
Proposition~\ref{prop:min-Psi} yields
\[\Psi^{\min}=\frac{L^2}{4}+|v_0^{\min}|\ln L+\mathcal O(1).\]
\item 
Assuming \(L\) and \(v_0^{\min}\) are fixed and \(a\to0_+\),  we get  
\[s(a,L,v_0^{\min})\sim \frac{|v_0^{\min}|}{L^2},\quad t(a,L,v_0^{\min})\sim  \sqrt{\frac{1}{4}+\frac{|v_0^{\min}|}{L^2}}-\frac12,\]
hence  Proposition~\ref{prop:min-Psi} yields
\[\Psi^{\min}=\frac{L}4\sqrt{L^2+4|v_0^{\min}|}+|v_0^{\min}|\ln\left( \frac{L\Big(1+\sqrt{1+\frac{4|v_0^{\min}|}{L^2}}\Big)}{2\sqrt{|v_0^{\min}|}} \right)+o(1).\]
\end{enumerate}
\end{remark}
\subsection{Proof of main Theorem}
We give here the proof of Theorem~\ref{thm:main*}. We set
\begin{equation}\label{eq:def-S(v0)**}
S(\mathfrak v_0,L)=-F(\mathfrak  v_0)+\inf_{\substack{r\in[0,a]\\ t\in(0,+\infty)}}\Psi(r,t),
\end{equation}
where $a=a(\mathfrak v_0)$, $F(\mathfrak v_0)$ is introduced in \eqref{eq:def-m-f} and the function $\Psi$ is introduced in \eqref{eq:def-Psi}.  By Remark~\ref{rem:glob-min} we observe that
\begin{equation}\label{eq:exp-S(v0)}
S(\mathfrak v_0,L)=S_a+I(a,L,v_0^{\min})
\end{equation}
where \(S_a\) is introduced in \eqref{eq:Sa},
\begin{multline}\label{eq:def-I}
I(a,L,v_0^{\min})=\\\frac{a^2}{2}\left(\sqrt{\frac14+s}-\sqrt{\frac{1}{4}+\frac{|v_0^{\min}|}{a^2}} \right)
+\frac{|v_0^{\min}|}{2}\ln\frac{|v_0^{\min}|\bigl(\sqrt{\frac14+s}+\frac12 \bigr)^2}{a^2s\bigl(\sqrt{\frac14+\frac{|v_0^{\min}|}{a^2}}+\frac12\bigr)^2}
\end{multline}
and \(s=s(L-a,L,v_0^{\min})\) is introduced in \eqref{eq:s(L-a)}. It is easy to see that \(a^2s>|v_0^{\min}|\) so the term \(I(a,L,v_0^{\min})\) is positive.

Let $\eta\in(0,\eta_0)$, where $\eta_0\in(0,a)$ is introduced in Lemma~\ref{lem:exp-w}. 
So, by Lemma~\ref{lem:exp-w***}, it suffices to prove that,
\begin{equation}\label{eq:asy-W4}
h\ln \mathcal W_4(\eta)\underset{h\to0}{\sim} -S(\mathfrak v_0,L)\,.
\end{equation}
The function $v_0$ vanishes to infinite order at $r=a$.  By Proposition~\ref{prop:min-Psi}, we get
\begin{equation}\label{eq:ub-W4-sharp}
 \mathcal W_4(\eta)\underset{h\to0}{=}\mathcal O\left(e^{-\frac{S(\mathfrak v_0,L)}{h}}\right)\,,
 \end{equation}
from which it  follows
\begin{equation}\label{eq:W4-ub}
\limsup_{h\to0} \big(h\ln \mathcal W_4(\eta) \big)\leq -S(\mathfrak v_0,L)\,.
\end{equation}
To prove a lower bound on $\mathcal W_4(\eta)$, pick an arbitrary $\delta\in(0,1)$ and consider the set
\[I_{\delta}=\{(r,t)\in[0,a]\times[0,+\infty)~:~\Psi(r,t)\leq F(\mathfrak  v_0)+S(\mathfrak v_0,L)+ \delta\}\,.\]
By Proposition~\ref{prop:min-Psi}, there exists $\delta_0\in(0,1)$ such that, if $\delta<\delta_0$, then $I_{\delta}\subset[\eta,a]\times[\eta,+\infty)$.  Since the integrand in  the  expression of $\mathcal W_4(\eta)$ is positive,  we have the lower bound
\[ \mathcal W_4(\eta)\geq \frac{\mathfrak m(\mathfrak v_0)}{\sqrt{2\pi h}}\left( \int_{I_{\delta} }\sqrt{r}\,|v_0(r)|a_0(r)  g_0(t)dtdr\right) e^{-\frac{S(\mathfrak v_0,L)+\delta}{h}}\]
from which we infer the lower bound
\[\liminf_{h\to0} \big(h\ln \mathcal W_4(\eta)\big) \geq -S(\mathfrak v_0,L)-\delta\,.\]
After sending $\delta$ to $0$ we eventually get
\begin{equation}\label{eq:W4-lb}
\limsup_{h\to0} \big(h\ln \mathcal W_4(\eta)\big) \geq -S(\mathfrak v_0,L)\,.
\end{equation}
Collecting \eqref{eq:W4-ub} and \eqref{eq:W4-lb}, we finish the proof of \eqref{eq:asy-W4}.
\begin{remark}[Amplitude of the tunneling]\label{rem:amplitude}
Our proof  of Theorem~\ref{thm:main*} above only yields an  asymptotics for the phase of the hopping coefficient,
\begin{equation}\label{eq:asymp-w}
\ln |w_{\ell,r}|\underset{h\to0}{\sim} -S(\mathfrak v_0,L)\,.
\end{equation}
The initial hope was that we could estimate the amplitude of $|w_{\ell,r}|$ by showing that the function
$\Psi$  in \eqref{eq:def-Psi} has  a unique non-degenerate minimum $(r_*,t_*)$ on $[0,a]\times(0,+\infty)$ such  that $0<r_*<a$; if it were the  case, the method of Laplace approximation   would yield
\[ |w_{\ell,r}|\underset{h\to0}{\sim} \Upsilon(\mathfrak v_0) h^{-p(\mathfrak v_0)} e^{- S(\mathfrak v_0,L)/h}\,,\]
for some constants $\Upsilon(\mathfrak v_0)>0$ and $p(\mathfrak v_0)\in \R$. 
\end{remark}

The next proposition compares the magnetic and non-magnetic tunneling  asymptotics (see Remark~\ref{rem:b=0}).

\begin{proposition}\label{prop:beta=0}
We have $\lim\limits_{\beta\to0} \beta S(\beta^{-2}\mathfrak v_0)=2\displaystyle\int_0^{L/2}\sqrt{v_0(\rho)-v_0^{\min}}\,d\rho$.
\end{proposition}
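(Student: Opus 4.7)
The plan is to use the explicit formula \eqref{eq:def-S(v0)*}, namely $S(\mathfrak v_0,L) = -F(\mathfrak v_0) + \Psi(a,t_a)$, and to track how each ingredient behaves under the rescaling $\mathfrak v_0 \mapsto \beta^{-2}\mathfrak v_0$. Note that this rescaling leaves $a = a(\mathfrak v_0)$ unchanged (the support of $\mathfrak v_0$ is unaffected) but replaces $v_0^{\min}$ by $\beta^{-2} v_0^{\min}$ and the potential $v_0(\rho)$ by $\beta^{-2}v_0(\rho)$. The corresponding Agmon distance $d^\beta(r)$ attached to $\beta^{-2}\mathfrak v_0$ satisfies
\[
\beta\,d^\beta(r) \;=\; \int_0^{r}\sqrt{\beta^2\tfrac{\rho^2}{4} + v_0(\rho) - v_0^{\min}}\,d\rho \;\xrightarrow[\beta\to0]{}\; \int_0^{r}\sqrt{v_0(\rho) - v_0^{\min}}\,d\rho,
\]
uniformly on $[0,a]$ by dominated convergence.

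First I would compute $\lim_{\beta\to0}\beta F(\beta^{-2}\mathfrak v_0)$. Writing $|v|=\beta^{-2}|v_0^{\min}|$ and using $\sqrt{a^2+4|v|} = 2\sqrt{|v|}\big(1+\mathcal O(1/|v|)\big)$ and the expansion
\[
\ln\frac{(\sqrt{a^2+4|v|}+a)^2}{4|v|} \;=\; \frac{a}{\sqrt{|v|}} - \frac{a^2}{4|v|} + \mathcal O(|v|^{-3/2})
\]
one obtains $\beta\tfrac a4\sqrt{a^2+4|v|}\to \tfrac12 a\sqrt{|v_0^{\min}|}$ and $\beta\tfrac{|v|}{2}\ln(\cdots)\to \tfrac12 a\sqrt{|v_0^{\min}|}$, hence
\[
\beta F(\beta^{-2}\mathfrak v_0) \;\xrightarrow[\beta\to0]{}\; a\sqrt{|v_0^{\min}|} - \int_0^a\sqrt{v_0(\rho) - v_0^{\min}}\,d\rho.
\]

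Next, I would analyze $\beta\Psi(a,t_a)$. The key intermediate step is the asymptotics of $t_a$, given by \eqref{eq:def-ta}--\eqref{eq:def-s(a)} with $v = \beta^{-2}v_0^{\min}$. A direct expansion (factoring $|v|$ out of the square root in $s_+$ and using $(L^2+a^2)^2 - (L^2-a^2)^2 = 4L^2a^2$) gives
\[
s_+ \;=\; \frac{|v|}{(L-a)^2}+\mathcal O(1),\qquad \beta\, t_a \;\xrightarrow[\beta\to0]{}\; \frac{\sqrt{|v_0^{\min}|}}{L-a}.
\]
Plugging this into the four pieces of $\Psi(a,t_a)$ in \eqref{eq:def-Psi}, the $\ln(1+1/t_a)\approx 1/t_a$ absorbs the factor $|v|$ and each term has a finite limit after multiplication by $\beta$. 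A short algebraic simplification then yields the clean identity
\[
\frac{a^2+L^2}{2(L-a)} + \frac{L-a}{2} - \frac{La}{L-a} \;=\; L-a,
\]
so that
\[
\beta\,\Psi(a,t_a) \;\xrightarrow[\beta\to0]{}\; \int_0^a\sqrt{v_0(\rho) - v_0^{\min}}\,d\rho + (L-a)\sqrt{|v_0^{\min}|}.
\]

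Combining the two limits gives
\[
\beta\,S(\beta^{-2}\mathfrak v_0) \;\xrightarrow[\beta\to0]{}\; 2\int_0^a\sqrt{v_0(\rho) - v_0^{\min}}\,d\rho + (L-2a)\sqrt{|v_0^{\min}|}.
\]
Since $v_0(\rho)=0$ for $\rho\in[a,L/2]$ (recall $L>2a$), the second term equals $2\int_a^{L/2}\sqrt{v_0(\rho)-v_0^{\min}}\,d\rho$, and the two integrals concatenate into $2\int_0^{L/2}\sqrt{v_0(\rho)-v_0^{\min}}\,d\rho$, as desired.

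The main obstacle is the explicit asymptotics $\beta t_a \to \sqrt{|v_0^{\min}|}/(L-a)$, whose derivation from the rather ungainly expression \eqref{eq:def-s(a)} requires careful bookkeeping of the $\mathcal O(|v|)$ terms inside the square root; once this limit is in hand, the rest of the argument is a routine algebraic cancellation and an appeal to the fact that $v_0$ vanishes outside $[0,a]$.
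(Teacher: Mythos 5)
Your proof is correct, and it takes a genuinely different route from the paper's. The paper squeezes: it invokes the bounds $S_a \leq S(\mathfrak v_0,L) \leq \hat S$ (which follow from Theorems~\ref{thm:main} and \ref{thm:main*}), rescales $v_0 \mapsto \beta^{-2}v_0$ in the explicit expressions \eqref{eq:Sa-*} and \eqref{eq:hat-S}, and shows that $\beta S_a^\beta$ and $\beta\hat S^\beta$ converge to the same limit $2\int_0^{L/2}\sqrt{v_0-v_0^{\min}}\,d\rho$; this avoids touching the formula $S = -F(\mathfrak v_0)+\Psi(a,t_a)$ altogether. You instead attack that formula head-on, expanding $F$, $s_+$, $t_a$ and the four pieces of $\Psi(a,t_a)$ in the large-$|v_0^{\min}|$ regime. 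Your approach is more self-contained (it needs nothing beyond the explicit form \eqref{eq:def-S(v0)*}--\eqref{eq:def-Psi}), while the paper's is computationally lighter since it sidesteps the asymptotics of $t_a$ and the $\ln(1+1/t_a)$ term. I verified your key intermediate claims: $s_+ = |v|/(L-a)^2 + \mathcal O(1)$, $\beta t_a \to \sqrt{|v_0^{\min}|}/(L-a)$, and the algebraic identity $\frac{a^2+L^2}{2(L-a)} + \frac{L-a}{2} - \frac{La}{L-a} = L-a$ all check out, and the final assembly is correct. One minor slip: the stated expansion $\ln\frac{(\sqrt{a^2+4|v|}+a)^2}{4|v|} = \frac{a}{\sqrt{|v|}} - \frac{a^2}{4|v|} + \mathcal O(|v|^{-3/2})$ is not right --- the $|v|^{-1}$ terms in fact cancel and the next correction is $\mathcal O(|v|^{-3/2})$ --- but this is harmless since after multiplication by $\beta|v|/2$ a $|v|^{-1}$ term contributes only $\mathcal O(\beta)$, so your limit for $\beta F(\beta^{-2}\mathfrak v_0)$ is unaffected.
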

\begin{proof}
Knowing that $S_a\leq S(\mathfrak v_0)\leq \hat S$ and replacing $v_0$ by $\beta^{-2}v_0$, we get that
\[S_a^\beta\leq S(\beta^{-2}\mathfrak v_0)\leq \hat  S^\beta \]
where (cf. \eqref{eq:Sa-*} and \eqref{eq:hat-S})
\[\begin{aligned}
S_a^\beta&=2\int_0^{a}\sqrt{\frac{\rho^2}{4}+\beta^{-2}\big( v_0(\rho)-v_0^{\min}\big)}\,d\rho+\int_a^{L-a}\sqrt{\frac{\rho^2}{4}-\beta^{-2} v_0^{\min}\big)}\,d\rho\,,\\
\hat S^\beta&=\inf_{0<r<a}\left(\frac{Lr}{2}+\int_0^{L-r}\sqrt{\frac{\rho^2}{4}+\beta^{-2}\big( v_0(\rho)-v_0^{\min}\big)}\,d\rho\right.\\
&\hskip3.5cm\left.+\int_0^r \sqrt{\frac{\rho^2}{4}+\beta^{-2}\big( v_0(\rho)-v_0^{\min}\big)}\,d\rho\right)\,.
\end{aligned}
\]
It is then clear that
\[
\begin{aligned}\lim_{\beta\to0}\beta S_a^\beta&=2\int_0^{a}\sqrt{  v_0(\rho)-v_0^{\min} }\,d\rho+\int_a^{L-a}\sqrt{|v_0^{\min}|}\,d\rho \,,\\and \\
\lim_{\beta\to0}\beta \hat  S^\beta&=\inf_{0<r<a}\left(\int_0^{L-r}\sqrt{ v_0(\rho)-v_0^{\min}}\,d\rho
+\int_0^{r}\sqrt{ v_0(\rho)-v_0^{\min}}\,d\rho\right)\\
&=2\int_0^{a}\sqrt{  v_0(\rho)-v_0^{\min} }\,d\rho+\int_a^{L-a}\sqrt{|v_0^{\min}|}\,d\rho\,.
\end{aligned}
\]
To conclude, we notice that 
\[\begin{aligned}
2\int_0^{L/2}\sqrt{  v_0(\rho)-v_0^{\min} }\,d\rho&=2\int_0^{a}\sqrt{  v_0(\rho)-v_0^{\min} }\,d\rho+2\int_a^{L/2}\sqrt{|v_0^{\min}|}\,d\rho\\
&=2\int_0^{a}\sqrt{  v_0(\rho)-v_0^{\min} }\,d\rho+\int_a^{L-a}\sqrt{|v_0^{\min}|}\,d\rho\,. 
\end{aligned}\]
\end{proof}

\subsection*{Acknowledgments}
This work was initiated while the second author  visited  LMJL at  the university of Nantes in November 2021 and February 2022.  The authors would like to acknowledge the  support from the  F\'ed\'eration de  Recherche  Math\'ematiques des Pays de Loire and Nantes Universit\'e. The first author benefits from the support of the French government ``Investissements d'Avenir''  program integrated to France 2030, bearing the following reference ANR-11-LABX-0020-01. The second author is partially supported by the Center for Advanced Mathematical Sciences (CAMS, AUB).
The first author thanks M. Weinstein for enlightening discussions around the first version of  \cite{FSW}.   The authors are thankful  to S. Fournais,  L. Morin and N. Raymond for pointing out   an error  appearing   in the proof of a former version of Proposition~6.5, and to M.P. Sundqvist for the attentive reading and discussions.

\end{document}